\newcommand{\shortcite}[1]{{{\cite{#1}}}}
\newcommand{\fixlist}{\addtolength{\itemsep}{0pt}}
\newcommand{\w}{{{{\mathrm{W}}}}}
\newcommand{\util}{{{\mathrm{ut}}}}
\newcommand{\Top}{{{\mathrm{Top}}}}
\DeclareMathOperator*{\argmax}{{{{\mathrm{argmax}}}}}
\def\med{\mathrm{med}^{(K)}}
\def\best{\mathrm{best}^{(K)}}
\def\ap{\mathrm{aprog[a]}^{(K)}}
\def\gp{\mathrm{gprog}[p]^{(K)}}
\newtheorem{theorem}{Theorem}
\newtheorem{definition}[theorem]{Definition}
\newtheorem{lemma}[theorem]{Lemma}
\newtheorem{observation}{Observation}
\newtheorem{proposition}[theorem]{Proposition}
\newtheorem{corollary}[theorem]{Corollary}
\theoremstyle{definition}
\newtheorem{example}{Example}
\newenvironment{proof-sketch}{\noindent\textit{Proof sketch.}\quad}{\hfill\medskip}
\newcommand{\np}{{\mathrm{NP}}}
\newcommand{\p}{{\mathrm{P}}}
\newcommand{\naturals}{{{\mathbb{N}}}}
\newcommand{\realsplus}{{{\mathbb{R}}}_{+}}
\newcommand{\calA}{{{\mathcal{A}}}}
\newcommand{\calS}{{{\mathcal{S}}}}
\newcommand{\calC}{{{\mathcal{C}}}}
\newcommand{\owa}{{{{\alpha}}}}
\newcommand{\OWA}{{{{\mathrm{OWA}}}}}
\newcommand{\owab}{{{{\beta}}}}
\newcommand{\pref}{\succ}
\begin{document}

\title{Finding a Collective Set of Items: From Proportional Multirepresentation to Group Recommendation\footnote{The preliminary version of this paper was presented at AAAI-2015.}}

\date{}

\author{
Piotr Skowron\\
       {University of Warsaw}\\
       {Warsaw, Poland}\\
\and
Piotr Faliszewski\\
       {AGH University}\\
       {Krakow, Poland}\\
\and
J\'{e}r\^{o}me Lang\\
       {Universit\'e Paris-Dauphine}\\
       {Paris, France}\\
}

\maketitle

\begin{abstract}
  We consider the following problem: There is a set of items (e.g.,
  movies) and a group of agents (e.g., passengers on a plane); each
  agent has some intrinsic utility for each of the items. Our goal is
  to pick a set of $K$ items that maximize the total derived utility
  of all the agents (i.e., in our example we are to pick $K$ movies
  that we put on the plane's entertainment system). However, the
  actual utility that an agent derives from a given item is only a
  fraction of its intrinsic one, and this fraction depends on how the
  agent ranks the item among the chosen, available, ones.
  We provide a formal specification of the model and provide concrete
  examples and settings where it is applicable. We show that the
  problem is hard in general, but we show a number of tractability
  results for its natural special cases.
\end{abstract}

\section{Introduction}

A number of real-world problems consist of selecting a set of items
for a group of agents to jointly use. Examples of such activities
include picking a set of movies to put on a plane's entertainment
system, deciding which journals a university library should subscribe
to, deciding what common facilities to build, or even voting for a
parliament (or other assembly of representatives). Let us consider some common features of these examples.

First, there is a set of items\footnote{We use the term `item' in the most
    neutral possible way. Items may be candidates running for
    an election, or movies, or possible facilities, and so on.} and a set of agents; each agent has some
  intrinsic utility for each of the items (e.g., this utility can be
  the level of appreciation for a movie, the average number of
  articles one reads from a given issue of a journal, expected benefit
  from building a particular facility, the feeling---measured in some way---of
  being represented by a particular politician).

  Second, typically it is not possible to provide all the items to the
  agents and we can only pick some $K$ of them, say (a plane's
  entertainment system fits only a handful of movies, the library has
  a limited budget, only several sites for the facilities are
  available, the parliament has a fixed size).

  Third, the intrinsic utilities for items extend to the sets of items
  in such a way that the utility derived by an agent from a given item
  may depend on the {\em rank} of this item (from the agent's point of
  view) among the selected ones.
  Extreme examples include the case where each agent derives utility
  from his or her most preferred item only (e.g., an agent will watch
  his or her favorite movie only, will read/use the favorite
  journal/favorite facility only, will feel represented by the most
  appropriate politician only), from his or her least preferred item
  only (say, the agent worries that the family will force him or her
  to watch the worst available movie), or derives $1/K$ of the utility
  from each of the available items (e.g., the agent chooses the
  item---say, a movie---at random).  However, in practice one should
  expect much more complicated schemes (e.g., an agent watches the top
  movie certainly, the second one probably, the third one perhaps,
  etc.; or, an agent is interested in having at least some $T$ interesting
  journals in the library; an agent feels represented by some top $T$
members of the parliament, etc.).

The goal of this paper is to formally define a model that captures all
the above-described scenarios, provide a set of examples where the
model is applicable, and  provide an initial set of computational
results for it in terms of efficient algorithms (exact or approximate)
and computational hardness results ($\np$-hardness and
inapproximability results).

Our work builds upon, generalizes, and extends quite a number of
settings that have already been studied in the literature.  We provide
a deeper overview of this research in Section~\ref{sec:related} and
here we only mention the two most directly related lines of work.
First, our model where the agents derive utility from their most
preferred item among the selected ones directly corresponds to winner
determination under the Chamberlin--Courant's voting
rule~\cite{ccElection,complexityProportionalRepr,fullyProportionalRepr}
(it is also very deeply connected to the model of budgeted social
choice
\cite{budgetSocialChoice,ore-luc:c:cc-online,bou-lu:c:value-directed-cc})
and is in a certain formal sense a variant of the facility location
problem. Second, the case where for each item each agent derives the
same fraction of the utility is, in essence, the same as $K$-winner
range-voting (or $K$-winner Borda~\cite{deb:j:k-borda}); that agents
enjoy equally the items they get is also a key assumption in the Santa
Claus problem~\cite{santa-claus}, and in the problem of designing
optimal picking sequences \cite{brams2000win,BL11,KalinowskiNW13}.

The paper is organized as follows. First, in Section~\ref{sec:model}
we discuss several important modeling choices and provide the formal
description of our model.  Then, in Section~\ref{sec:scenarios}, we
discuss the applicability of the model in various
scenarios. Specifically, we show a number of examples that lead to 
particular parameter values of our model.  We give an overview of our
results in Section~\ref{sec:overview} and then, in
Sections~\ref{sec:worst-case},~\ref{sec:approx-general},~and~\ref{sec:approx-borda},
we present these results formally.  In Section~\ref{sec:worst-case} we
present results regarding the complexity of computing exact solutions
for our model. In the next two sections we discuss the issue of
computing approximate solutions.  First without putting restrictions on
agents' utilities (Section~\ref{sec:approx-general}) and, then, for
what we call non-finicky utilities (Section~\ref{sec:approx-borda}).
Intuitively put, under non-finicky utilities the agents are required
to give relatively high utility values to a relatively large fraction
of the items).  We believe that the notion of non-finicky utilities is
one of the important contributions of this paper.  We discuss related
work in Section~\ref{sec:related} and conclude in
Section~\ref{sec:summary}.

\section{The Model}
\label{sec:model}

In this section we give a formal description of our model. However,
before we move on to the mathematical details, let us explain and
justify some high-level assumptions and choices that we have made.

First, we assume that the agents have separable preferences. This
means that the {\em intrinsic utility} of an object does not depend on
what other objects are selected. This is very different from, for
example, the case of combinatorial auctions.
However, in our model the {\em impact} of an object on the global
utility of an agent does depend on its rank (according to that agent)
among the selected items.
This distinction between the intrinsic value of an item and its value
distorted by its rank are also considered in several other research
fields, especially in decision theory (where it is known as ``rank-dependent utility theory'')
and in multicriteria decision making, from which we borrow one of the
main ingredients of our approach, the {\em ordered weighted average
  (OWA) operators}~\cite{Yager1988} (for technical details see the
work of Kacprzyk et al.~\shortcite{DBLP:series/sfsc/KacprzykNZ11}).
OWAs were recently used in social choice in several
contexts~\cite{GLMP14,ABLMR15,elk-ism:owa-cc}; we discuss these works
in detain in Section~\ref{sec:related}.


Second, throughout the paper we navigate between two views of the
agents' intrinsic utilities:
\begin{enumerate}
\item Generally, we assume that the utilities are provided explicitly
  in the input as numerical values, and that these values are
  comparable between agents. Yet, we make no further assumptions about
  the nature of agents' utilities: they do not need to be normalized,
  they do not need to come from any particular range of values,
  etc. Indeed, it is possible that some agent has very strong
  preferences regarding the items, modeled through high, diverse
  utility values, whereas some other agent does not care much about
  the selection process and has low utility values only.
\item In some parts of the paper (which will always be clearly
  identified), we assume that utilities are heavily constrained and
  derive from non-numerical information, such as approval ballots
  specifying which items an agent approves (leading to approval-based
  utilities), or rankings over alternatives, from which utilities are
  derived using an agent-independent scoring vector (typically, a
  Borda-like vector).
\end{enumerate}
Formally, the latter view is a special case of the former, but we
believe that it is worthwhile to consider it separately. Indeed, many
multiwinner voting rules (such as the
Chamberlin--Courant~\cite{ccElection} rule or the Proportional
Approval Voting rule~\cite{pavVoting}) fit the second view far more
naturally, whereas for other applications the former view is more
natural.

Third, we take the {\em utilitarian} view and measure the social
welfare of the agents as the sum of their perceived utilities.  One
could study other variants, such as the {\em egalitarian} variant,
where the social welfare is measured as the utility of the worst-off
agent. We leave this as possible future research (our preliminary
attempts indicated that the egalitarian setting is computationally
even harder than the utilitarian one).  Very recently, Elkind and
Isma{\"i}li~\cite{elk-ism:owa-cc} used OWA operators to define variants of
the Chamberlin--Courant rule that lay between the utilitarian and
egalitarian variants; we discuss this work in more detail in 
Section~\ref{sec:related}.\medskip



\subsection{The Formal Setting}
Let $N = [n]$ be a set of $n$ agents and let $A = \{a_1, \ldots,
a_m\}$ be a set of $m$ items. The goal is to pick a size-$K$ set $W$
of items that, in some sense, is most satisfying for the agents.
To this end,
%
%
(1) for each agent $i \in N$ and for each item $a_j \in A$ we have an
intrinsic utility $u_{i,a_j} \geq 0$ that agent $i$ derives from
$a_j$; (2) the utility that each agent derives from a set of $K$ items
is an ordered weighted average~\cite{Yager1988} of this agent's
intrinsic utilities for these items.

A weighted ordered average (OWA) operator over $K$ numbers is a
function defined through a vector $\alpha^{(K)} = \langle \alpha_1,
\ldots, \alpha_K\rangle$ of $K$ (nonnegative) numbers\footnote{The
  standard definition of OWAs assumes normalization, that is,
  $\sum_{i=1}^K \alpha_i = 1$. We do not make this assumption here,
  for the sake of convenience; note that whether OWA vectors are
  normalized or not is irrelevant to all notions and results of this
  paper.}
as follows. Let $\vec{x} = \langle x_1, \ldots, x_K \rangle$ be a
vector of $K$ numbers and let $\vec{x}^{\downarrow} = \langle
x^{\downarrow}_1, \ldots, x^{\downarrow}_K \rangle$
  be the nonincreasing
rearrangement of $\vec x$, that is, $x^{\downarrow}_i =
x_{\sigma(i)}$, where $\sigma$ is a permutation of $\{1,\ldots, K\}$
such that $x_{\sigma(1)} \geq x_{\sigma(2)} \geq \ldots \geq
x_{\sigma(K)}$.  Then we set:
\[
\textstyle \OWA_{\alpha^{(K)}}(\vec x) = \sum_{i=1}^K\alpha_i x^\downarrow_i
\]
To make the notation lighter, we write $\alpha^{(K)}(x_1, \ldots,
x_K)$, instead of $\OWA_{\alpha^{(K)}}(x_1, \ldots, x_K)$.

We provide a more detailed discussion of the OWA operators useful in
our context later and here we only mention that, for example, they can
be used to express the arithmetic average (through the size-$K$ vector
$(\frac{1}{K}, \ldots, \frac{1}{K})$), the maximum and minimum
operators (through vectors $(1,0, \ldots, 0)$, and $(0, \ldots, 0,1)$,
respectively) and the median operator (through the vector of all
zeros, with a single one in the middle position).

We formalize our problem of computing ``the most satisfying set of $K$
items'' as follows.

\begin{definition}\label{def:owa-winner}
  In the \textsc{OWA-Winner} problem we are given a set $N = [n]$ of
  agents, a set $A = \{a_1, \ldots, a_m\}$ of items, a collection of
  agent's utilities $(u_{i,a_j})_{i \in [n], a_j \in A}$, a positive
  integer $K$ ($K \leq m$), and a $K$-number OWA $\alpha^{(K)}$. The
  task is to compute a subset $W = \{w_1, \ldots, w_K\}$ of $A$ such
  that $u^{\alpha^{(K)}}_\util(W) = \sum_{i=1}^{n}
  \alpha^{(K)}(u_{i,w_1}, \ldots, u_{i,w_K})$ is maximal.
\end{definition}



\begin{example}\label{example}
  Consider six agents with the following utilities over the
  items from the set $A = \{a_1, a_2, a_3, a_4, a_5, a_6
  \}$: 

  \[
    \begin{array}{c|cccccc}
             & u(a_1) & u(a_2) & u(a_3) & u(a_4) & u(a_5) & u(a_6)\\ 
    \hline
    \mathtt{3\ agents} & 5 & 4 & 3 & 0 & 2 & 1\\  
    \mathtt{2\ agents} & 4 & 0 & 2 & 3 & 1 & 5\\  
    \mathtt{1\ agent}  & 0 & 3 & 2 & 4 & 5 & 1
  \end{array}
\]
We want to select $K=3$ items and we use OWA $\alpha = (2,1,0)$.  What
is the score of $\{a_1,a_2,a_6\}$? The first three agents get utility
$2 \times 5 + 4 = 14$ each, the next two get $2 \times 5 + 4 = 14$
each, and the last one gets $2 \times 3 + 1 = 7$. So, the score of
$\{a_1,a_2,a_6\}$ is $42 + 28 + 7 = 77$. Indeed, this is the optimal
set; the next best ones are $\{a_1,a_2,a_4\}$, $\{a_1,a_2,a_5\}$ and
$\{a_1,a_5,a_6\}$, all with score $75$.  The rule defined by the OWA
$\alpha' = (1,1,1)$, known as $3$-Borda (due to the very specific
values of agents' utilities; see Example~\ref{example:borda} in the
next section), would choose $\{a_1,a_2,a_3\}$ and
Chamberlin--Courant's rule (in our terms, the rule defined by the OWA
operator $\alpha'' = (1,0,0)$) would choose $\{a_1,a_5,a_6\}$.
\end{example}

For a family $(\alpha^{(K)})_{K=1}^\infty$ of OWAs, we write
$\alpha$-\textsc{OWA-Winner} to denote the variant of the problem
where for each given solution size $K$ we use OWA $\alpha^{(K)}$.
From now on we will not mention the size of the OWA vector explicitly
and it will always be clear from the context.  We implicitly assume
that OWAs in our families are polynomial-time computable.

\subsection{Classes of Intrinsic Utilities}\label{sec:utilities}
While our general setting allows agents to express arbitrary
utilities, we also focus on two cases where they only provide
dichotomous or ordinal information:
\begin{description}
\item[Dichotomous information.] Agents provide {\em dichotomous
    information} if they only have to specify which items they like.
  This information is then mapped into {\em dichotomous} (or, as we
  typically refer to them, {\em approval-based}) utilities, defined by
  $u_i(a_j) = 1$ if $i$ likes $a_j$ and $u_i(a_j) = 0$ otherwise.
\item[Ordinal information.] Agents provide {\em ordinal information}
  if they only have to specify their rankings over items, called their
  {\em preference orders}. This information is then mapped into
  utilities using a scoring vector, exactly in the same way as
  positional scoring rules (for single-winner voting) do. We focus on
  the partiuclar case where this scoring vector is the {\em Borda}
  vector, i.e., if the rank of $a_j$ in $i$'s ranking is $k$ then
  $u_i(a_j) = m-k$. We refer to this setting as \emph{Borda-based}
  utilities.
\end{description}
Naturally, these are special cases of our general setting. Yet using
approval-based or Borda-based utilities can be more convenient than
using the general approach.

%
%
%
%

\begin{example}\label{example:borda}
  The utilities of the agents from Example~\ref{example} are
  Borda-based and can be expresses as the following preference orders:
  \begin{align*}
    \mathtt{3\ agents} \colon & a_1 \pref a_2 \pref a_3 \pref a_5 \pref a_6 \pref a_4 \\
    \mathtt{2\ agents} \colon & a_6 \pref a_1 \pref a_4 \pref a_3 \pref a_5 \pref a_2 \\
    \mathtt{1\ agent}  \colon & a_5 \pref a_4 \pref a_2 \pref a_3 \pref a_6 \pref a_1 
  \end{align*}
\end{example}

Both approval-based utilities and Borda-based utilities are inspired
by analogous notions from the theory of voting, where approval and
Borda count are very well-known single-winner voting rules (briefly
put, under these rules we treat the utilities of the items as their
scores, sum up the scores assigned to the items by the voters, and
elect the item that has the highest score). Further, Borda-based
utilities have been used in the original Chamberlin--Courant's rule
and in several works on fair division (see, e.g., a paper of Brams and
King~\shortcite{brams2005efficient}).

One of the high-level messages of this paper is that
\textsc{OWA-Winner} problems tend to be computationally easier for the
case of Borda-based utilities than for the case of approval-based ones
(while we typically obtain $\np$-hardness in both settings, we find
good approximation algorithms for many of the Borda-based cases,
whereas for the approval-based setting our algorithms are either
significantly weaker or we obtain outright inapproximability results).
This is so mostly because under Borda-based utilities all the agents
assign relatively high utility values to a relatively large fraction
of items. In the following definition we try to capture this property.

\begin{definition}
  Consider a setting with $m$ items and let $u_{\max}$ denote the
  highest utility that some agent gives to an item. Let $\beta$ and
  $\gamma$ be two numbers in $[0,1]$. We say that the agents have
  ($\beta$, $\gamma$)-non-finicky utilities if every agent has utility
  at least $\beta u_{\max}$ for at least $\gamma m$ items.
\end{definition}

To understand this notion better, let us consider the following
example.
\begin{example}\label{exjl}
  Let $n = 3$ and $m = 6$. The utilities are as defined below:
\[
  \begin{array}{c|cccccc}
             & u(a_1) & u(a_2) & u(a_3) & u(a_4) & u(a_5) & u(a_6)\\ 
    \hline
    \mathtt{Agent\ 1} & 10 & 10 & 9 & 8 & 5 & 0\\  
    \mathtt{Agent\ 2} & 6 & 5 & 0 & 10 & 8 & 10\\  
    \mathtt{Agent\ 3} & 8 & 0 & 10 & 6 & 10 & 7
  \end{array}
\]
The agents have $(0.8, 0.5)$-non-finicky utilities. Indeed, all there
agents have utility at least 8 for at least half of the items. They
also have $(0.6, \frac{2}{3})$-non-finicky utilities, and $(0.5,
\frac{5}{6})$-non-finicky utilities.  We will also use the agents and
items from this example later, when presenting our algorithms.
\end{example}

As we can expect, Borda-based utilities are non-finicky in a very
natural sense.
\begin{observation}
  For every $x$, $0 \leq x \leq 1$, Borda-based utilities are $(x,
  1-x)$-non-finicky.
\end{observation}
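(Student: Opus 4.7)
The plan is to prove the observation by a direct counting argument. First, I would observe that under Borda-based utilities the maximum possible utility is $u_{\max} = m-1$, achieved by the top-ranked item of any agent (and by no item with a lower rank).

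Next, fixing an arbitrary agent $i$ and an arbitrary $x \in [0,1]$, I would count the items $a_j$ for which $u_i(a_j) \geq x u_{\max}$. If $a_j$ is ranked at position $k$ by agent $i$, then $u_i(a_j) = m - k$, so the condition becomes $k \leq m - x(m-1) = (1-x)m + x$. The number of positions $k \in \{1, \ldots, m\}$ satisfying this inequality is at least $(1-x)m$, since $(1-x)m + x \geq (1-x)m$ for $x \geq 0$. Thus agent $i$ derives utility at least $x u_{\max}$ from at least $(1-x)m$ of the items, which is exactly the $(x, 1-x)$-non-finicky condition; because $i$ was arbitrary, this holds for every agent.

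The only minor subtlety is the integer rounding of the count---strictly speaking, the number of ranks satisfying the bound is $\lfloor (1-x)m + x \rfloor$, and the additive slack $x$ is what absorbs the fractional part of $(1-x)m$. Otherwise there is no real obstacle; the proof is essentially a one-line unpacking of the definitions of Borda-based utilities and of $(\beta,\gamma)$-non-finickiness, which makes this observation a quick sanity check that the non-finicky abstraction genuinely subsumes the Borda case and justifies the special attention given to it in Section~\ref{sec:approx-borda}.
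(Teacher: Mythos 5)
Your unpacking is certainly the argument the authors intend (the paper states this observation with no proof at all), but the rounding step you flag and then wave away is a genuine gap, not a cosmetic one. The number of ranks $k \in \{1, \ldots, m\}$ with $k \leq (1-x)m + x$ is $\lfloor (1-x)m + x \rfloor$, and your justification that this count is at least $(1-x)m$ ``since $(1-x)m + x \geq (1-x)m$'' compares two real numbers where what is needed is a comparison between an integer count and a real threshold. Writing $f$ for the fractional part of $(1-x)m$, the slack $x$ absorbs the fractional part only when $f = 0$ or $f + x \geq 1$; otherwise it does not. Concretely, take $m = 10$ and $x = 0.35$: then $u_{\max} = 9$, the items with utility at least $x u_{\max} = 3.15$ are exactly those with utilities $4, 5, \ldots, 9$, i.e., $6$ items, while $(1-x)m = 6.5$. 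So the count falls short, your absorption claim fails (here $f = 0.5$ and $f + x = 0.85 < 1$), and indeed Borda-based utilities are not $(0.35, 0.65)$-non-finicky under an exact reading of the definition.

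Note that this means the observation as literally stated is false for such $(m,x)$, so the defect traces back to the paper's own informal treatment of ``at least $\gamma m$ items'' as much as to your proof. What your computation does establish cleanly is: the number of items with utility at least $x u_{\max}$ equals $\lfloor (1-x)m + x \rfloor = 1 + \lfloor (1-x)(m-1) \rfloor > (1-x)(m-1)$, so Borda-based utilities are $\left(x, (1-x)\left(1 - \frac{1}{m}\right)\right)$-non-finicky for every $x \in [0,1]$, and they are $(x, 1-x)$-non-finicky whenever $(1-x)m$ is an integer (or if one reads ``at least $\gamma m$ items'' with an implicit floor). This corrected version is all the paper's later applications need---for instance, Corollary~\ref{cor:borda-ptas} goes through with an additional loss of order $1/m$ in the approximation guarantee---so the right fix is to state the bound with the $(1-\frac{1}{m})$ factor (or note the rounding convention), rather than assert that the additive slack $x$ handles the fractional part, which is false in general.
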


However there are also other natural cases of non-finicky
utilities. For example, consider agents that have approval-based
utilities and where each agent approves of at least a $\gamma$
fraction of the items. These agents have $(1,\gamma)$-non-finicky
utilities.  (The reader may be surprised here that approval-based
utilities may be non-finicky even though we said that we obtain
inapproximability results for them. Yet, there is no contradiction
here. These inapproximability results rely on the fact that some
agents approve of very few items.)

\subsection{A Dictionary of Useful OWA Families}\label{sec:dictionary}

Below we give a catalog of OWA families that we focus on throughout
the paper (in the description below we take $K$ to be the dimension of
the vectors to which we apply a given OWA).

\begin{enumerate}
\item \textbf{$\boldsymbol{k}$-median OWA.} For each $k \in \{1,\ldots,
  K\}$, $k\text{-}\med$ is the OWA defined by the vector of $k-1$
  zeros, followed by a single one, followed by $K-k$ zeros.  It is
  easy to see that $k\text{-}\med (x_1, \ldots, x_K)$ is the $k$-th
  largest number in the set $\{x_1, \ldots, x_K\}$ and is known as the
  $k$-median of $\vec x$. In particular, $1\text{-}\med(\vec x)$ is
  the maximum operator, $K\text{-}\med(\vec x)$ is the minimum
  operator, and if $K$ is odd, $\frac{K+1}{2}\text{-}\med(\vec x)$ is
  the median operator.

\item \textbf{$\boldsymbol{k}$-best OWA.} For each $k \in \{1,\ldots,
  K\}$, $k\text{-}\best$ OWA is defined through the vector of $k$ ones
  followed by $K-k$ zeros. That is, $k\text{-}\best({\vec x})$ is the
  sum of the top $k$ values in $\vec x$ (with appropriate scaling,
  this means an arithmetic average of the top $k$ numbers).
  $K\text{-}\best_K$ is simply the sum of all the numbers in $\vec x$
  (after scaling, the arithmetic average).

\item \textbf{Arithmetic progression OWA.} This OWA is defined through
  the vector $\ap = \langle a+(K-1)b, a+(K-2)b, \ldots, a \rangle$,
  where $a \geq 0$ and $b > 0$.  (One can easily check that the choice
  of $b$ has no impact on the outcome of OWA-Winner; this is not the
  case for $a$, though.)

\item \textbf{Geometric progression OWA.} This OWA is defined through
  the vector $\gp = \langle p^{K-1}, p^{K-2}, \ldots, 1 \rangle$,
  where $p > 1$. (This is without loss of generality, because multiplying
  the vector by a constant factor has no  impact on the outcome
  of OWA-Winner; but the choice of $p$ matters.)

\item \textbf{Harmonic OWA.} This OWA is defined through the vector
  $\langle 1, \frac{1}{2}, \frac{1}{3}, \ldots, \frac{1}{K} \rangle$,

\item \textbf{Hurwicz OWA.} This OWA is defined through a vector
  $(\lambda, 0, \ldots, 0, 1-\lambda)$, where $\lambda$, $0 \leq
  \lambda \leq 1$, is a parameter.
\end{enumerate}

Naturally, all sorts of middle-ground OWAs are possible between these
particular cases, and can be tailored for specific applications.  As
our natural assumption is that highly ranked items have more impact
than lower-ranked objects, we often make the assumption that OWA
vectors are {\em nonincreasing}, that is,
 $\alpha_1 \geq \ldots \geq \alpha_K$. 
While most OWA operators we consider in the paper are indeed nonincreasing, this is
not the case for $k$-medians (except for $1$-median) and Hurwicz 
(except for $\lambda = 1$).

\section{Applications of the Model}\label{sec:scenarios}
We believe that our model is very general. To substantiate this claim,
in this section we provide four quite different scenarios where it is
applicable.  \medskip

\noindent\textbf{Generalizing Voting Rules.}\quad
Our research started as an attempt to generalize the rule of
Chamberlin and Courant~\shortcite{ccElection} for electing sets of
representatives.
For this rule, the voters (the agents) have Borda-based utilities over
a set of candidates and we wish to elect a $K$-member committee (e.g.,
a parliament), such that each voter is represented by one member of
the committee.  If we select $K$ candidates, then a voter is
``represented'' by the selected candidate that she ranks highest
among the chosen ones.
Thus, winner determination under Chamberlin--Courant's voting rule
boils down to solving $1\text{-}\mathrm{best}$-\textsc{OWA-Winner} for
the case of Borda-based utilities.  On the other hand, solving
$K\text{-}\mathrm{best}$-\textsc{OWA-Winner} for Borda-based utilities
is equivalent to finding winners under $K$-Borda, the rule that picks
$K$ candidates with the highest Borda scores (see the work of Elkind
et al.~\shortcite{elk-fal-sko-sli:c:multiwinner-rules} for a
classification of multiwinner voting rules, including, e.g., $K$-Borda
and Chamberlin--Courant's rule).

Our model extends one more appealing voting rule, known as
Proportional Approval Voting (PAV; see the work of
Kilgour~\shortcite{pavVoting} for a review of approval-based
multiwinner rules, and the work of Aziz et
al.~\shortcite{azi-gas-gud-mac-mat-wal:c:approval-multiwinner} and
Elkind and Lackner~\cite{elk-lac:c:dichotomous-prefs} for
computational results). Winner determination under PAV is equivalent
to solving $\alpha$-\textsc{OWA-Winner} for the harmonic OWA, for the
case of approval-based utilities.  \medskip

\noindent\textbf{Malfunctioning Items or Unavailable Candidates.}\quad
Consider a setting where we pick the items off-line, but on-line it
may turn out that some of them are unavailable (for example, we pick a
set of journals the library subscribes to, but when an agent goes to a
library, a particular journal could already be borrowed by someone
else; see the work of Lu and Boutilier~\shortcite{LuBoutilier10} for
other examples of social choice with possibly unavailable candidates).
We assume that each item is available with the same, given,
probability $p$ (i.i.d.). The utility an agent gets from a set of
selected items $W$ is the expected value of the best available
object. The probability that the $i$'th item is available while the
preceding $i-1$ items are not, is proportional to $p(1-p)^{i-1}$. So,
to model the problem of selecting items in this case, we should use
the geometric progression OWA with initial value $p$ and coefficient
$1-p$.
\medskip

\noindent\textbf{Uncertainty Regarding How Many Items a User Enjoys.}\quad 
There may be some uncertainty about the number of items a user would
enjoy (e.g., on a plane, it is uncertain how many movies a passenger
would watch; one might fall asleep or might only watch those
movies that are good enough). We give two possible models for the
choice of the OWA vectors:

\begin{enumerate}\fixlist
\item The probability that an agent enjoys $i$ items, for $0 \leq i
  \leq K$, is uniformly distributed, i.e., an agent would enjoy
  exactly his or her first $i$ items in $W$ with probability
  $\frac{1}{K+1}$. So, the agent enjoys the $i$'th item if she enjoys
  at least $i$ items, which occurs with probability
  $\frac{K-i+1}{K+1}$; we should use OWA vector defined by
  $\alpha_i = K-i+1$ (we disregard the normalizing constant), i.e., an
  arithmetic progression. 
\item We assume that the values given by each user to each item are
  distributed uniformly, i.i.d., on $[0,1]$ and that each user uses
  only the items that have a value at least $\theta$, where $\theta$
  is a fixed (user-independent) threshold. Therefore, a user enjoys
  the item in $W$ ranked in position $i$ if she values at least
  $i$ items at least $\theta$, which occurs with probability
  $\sum_{j=i}^K{K \choose i} (1-\theta)^i\theta^{K-i}$, thus leading
  to the OWA vector defined by $\alpha_i = \sum_{j=i}^K{K \choose i}
  (1-\theta)^i\theta^{K-i}$.
\end{enumerate}

\noindent\textbf{Ignorance About Which Item Will Be Assigned to a User.}\quad
We now assume that a matching mechanism will be used after selecting
the $K$ items. The matching mechanism is not specified; it might also
be randomized. If the agents have a {\em complete ignorance about the
  mechanism used}, then it makes sense to use known criteria for
decision-making under complete uncertainty:
\begin{enumerate}\fixlist
\item The {\em Wald} criterion assumes that agents are extremely
  risk-averse, and corresponds to $\alpha = K\text{-}\med = \langle 0, \ldots, 0, 1 \rangle$.  The
  agents consider their worst possible items.
\item The {\em Hurwicz} criterion is a linear combination between the
  worst and the best outcomes, and corresponds to $\alpha = (\lambda,
  0, \ldots, 0, 1-\lambda)$ for some fixed $\lambda \in (0,1)$.
\end{enumerate}
If the agents know that they are guaranteed to get one of their best
$i$ items, then the Wald and Hurwicz criteria lead, respectively, to
the OWAs $\alpha = i\text{-}\med$ and $\alpha = (\lambda, 0, \ldots,
0, 1-\lambda, 0, \ldots, 0)$, with $1-\lambda$ in position $i$.
If the agents know that the mechanism gives them one of their top
$i$ items, each with the same probability, then we should use
$i\text{-best}$ OWA.  More generally, the matching mechanism may
assign items to agents with a probability 
that decreases when the rank increases.

\section{Overview of the Results}
\label{sec:overview}

\begin{table}[tb!]
  \footnotesize
\centering
\begin{tabular}{p{3.2cm}|c|c|l}
\multicolumn{4}{c}{}\\
             & general and       & $(\beta, \gamma)$--non-finicky & \\
  OWA family & approval utilities & and Borda utilities & References\\

\hline&&&\\[-0.7em]

\multirow{2}{*}{$k$-median ($k$ fixed)}      
                            & $\np$-hard & $\np$-hard (Borda)   & Proposition~\ref{thm:k-best-k-med} \\
                            & \textsc{DkS}-bounded & $(\beta -\epsilon)$-approx. & Theorem~\ref{thm:dks}~and Corollary~\ref{cor:borda-ptas} \\
                            &                      & PTAS (Borda) & Theorem~\ref{thm:borda:any-first-ell-ptas} \\[0.6em]

\multirow{2}{*}{$K$-median}   
                            & $\np$-hard & $\np$-hard & Theorems~\ref{thm:k-1-best}~and~\ref{thm:borda:k-1-best}\\
                            & \textsc{MEBP}-bounded & ? & Theorem~\ref{thm:mebp-bounded}, open problem \\[0.2em]

  \hline
  &&&\\[-0.7em]
  \multirow{2}{*}{$1$-best}   & $\np$-hard (approval) & $\np$-hard (Borda)  & Literature~\cite{complexityProportionalRepr, budgetSocialChoice}\\
  & $(1-\frac{1}{e})$-approx. & $(\beta -\epsilon)$-approx.     & Literature~\cite{budgetSocialChoice}, Corollary~\ref{cor:borda-ptas}\\ 
  &                           & PTAS (Borda)          & Literature~\cite{sko-fal-sli:c:multiwinner}  \\[0.6em] 

\multirow{2}{*}{$k$-best ($k$ fixed)}      
                            & $\np$-hard (approval) & $\np$-hard (Borda)    & Proposition~\ref{thm:k-best-k-med} \\
                            & $(1-\frac{1}{e})$-approx. & $(\beta -\epsilon)$-approx. & Theorem~\ref{thm:greedyAprox}~and Corollary~\ref{cor:borda-ptas} \\[0.6em]

\multirow{2}{*}{$(K-1)$-best}   
                            & $\np$-hard (approval) & $\np$-hard (Borda) & Theorems~\ref{thm:k-1-best}~and~\ref{thm:borda:k-1-best}\\
                            & PTAS & PTAS & Theorem~\ref{thm:approx} \\[0.6em]

{$K$-best}   & $\p$ & $\p$ & folk result \\[0.2em]

\hline&&&\\[-0.7em]

\multirow{2}{*}{arithmetic progression}      
                            & $\np$-hard & ?    & Theorem~\ref{thm:almost-all-hard}, open problem\\
                            & $(1-\frac{1}{e})$-approx. & $(1-\frac{1}{e})$-approx. & Theorem~\ref{thm:greedyAprox} \\[0.6em]

\multirow{2}{*}{geometric progression}      
                            & $\np$-hard & ?    & Theorem~\ref{thm:almost-all-hard}, open problem\\
                            & $(1-\frac{1}{e})$-approx. & $\beta -\epsilon$ & Theorem~\ref{thm:greedyAprox}, Corollary~\ref{cor:borda-geometric-ptas} \\[0.2em]

\hline&&&\\[-0.7em]

\multirow{3}{*}{Hurwicz[$\lambda$]}      
                            & $\np$-hard (approval) & ?    & Corollary~\ref{cor:hurwicz:np-hard}, open problem\\
                            & $\lambda(1-\frac{1}{e})$-approx. & $\lambda(1-\epsilon)$-approx. & Corollary~\ref{cor:hurwicz} \\
                            & & for each $\epsilon > 0$ & \\

\end{tabular}
\caption{Summary of our results for the OWA families 
  from Section~\ref{sec:dictionary}. For each OWA family we provide four 
  entries: In the first row (for a given OWA family) we give its worst 
  case complexity (in the general case and in the non-finicky utilities 
  case), and in the second row we list the best known approximation result 
  (in the general case and in the non-finicky utilities case). We write $K$ 
  to mean the cardinality of the winner set that we seek. In the ``References'' 
  column we point to the respective result in the paper/literature. For 
  negative results we indicate the simplest types of utilities where they 
  hold; for positive results we give the most general types of utilities 
  where they hold. For approximability results for the case of non-finicky 
  utilities, we write $(\beta-\epsilon)$-approx to mean that there is a 
  polynomial-time approximation algorithm whose approximation ratio approaches 
  $\beta$ as the size of the committee increases (in effect, for each $\epsilon$, 
  $\epsilon > 0$, there is a polynomial-time algorithm that achieves 
  $\beta-\epsilon$ approximation ratio, by using a brute-force algorithm is 
  the size of the committee is smaller than a certain constant). For inapproximability
  results, by \textsc{DkS}-bounded and \textsc{MEBP}-bounded we mean, 
  respectively, inapproximability results derived from the \textsc{Densest-k-Subgraph} 
  problem and from the \textsc{Maximum Edge Biclique Problem}.}
\label{tab:summary}
\end{table}

In this section we provide a high-level overview of our results. It
turns out that computational properties of the \textsc{OWA-Winner}
problem are quite varied and strongly depend on the types of OWA
operators and the allowed agent utilities. We present a summary of our
results in Table~\ref{tab:summary} (however, we stress that some of
our technical results are not listed in the table and can be found
only in the following sections).

Our first observation is that without any restrictions,
\textsc{OWA-Winner} is $\np$-hard. This is hardly surprising since the
problem generalizes other $\np$-hard problems, and it is natural to
ask if there are any special cases where it is easy. Unfortunately, as
we show in Section~\ref{sec:worst-case}, they are very rare.  For
example, without restrictions on the agents' utilities,
\textsc{OWA-Winner} can be solved in polynomial time either if we
treat $K$ as a constant or if we use the constant OWA vector (i.e., if
we use $K\text{-}\mathrm{best}$ OWA). Indeed, the problem becomes
$\np$-hard already for the $(K-1)\text{-}\mathrm{best}$ OWA.  This
holds even if the agents are restricted to have approval-based
utilities (Theorem~\ref{thm:k-1-best}) or Borda-based utilities
(Theorem~\ref{thm:borda:k-1-best}).  More generally, we show that
\textsc{OWA-Winner} is $\np$-hard for every family of OWA vectors that
are nonconstant and nonincreasing (Theorem~\ref{sec:worst-case}),
which captures a significant fraction of all interesting settings.

After considering the worst-case complexity of computing exact
solutions in Section~\ref{sec:worst-case}, in
Section~\ref{sec:approx-general} we focus on the approximability of
the \textsc{OWA-Winner} problem. We show that in this respect there is
a significant difference between two main classes of OWA vectors,
those that are nonincreasing and the remaining ones. We show that for
the nonincreasing OWA vectors the standard greedy algorithm for
optimizing submodular functions achieves approximation ratio of
($1-1/e$), irrespective of the nature of the agents' utilities
(Lemma~\ref{lemma:constOWA} and Theorem~\ref{thm:greedyAprox}). On the
other hand, we present evidence that there is little hope for good
approximation algorithms for the case of OWA vectors that are not
nonincreasing (Example~\ref{example:non-sub} and
Theorems~\ref{thm:dks}~and~\ref{thm:mebp-bounded}).

Next, in Section~\ref{sec:approx-borda}, we consider approximation
algorithms for \textsc{OWA-Winner} for the case where agents have
non-finicky utilities. It turns out that for non-finicky utilities we
can sometimes obtain much better approximability guarantees than in
the general setting. The key feature of non-finicky utilities
assumption is that every agent gives sufficiently high utility values
to sufficiently many items, so that the algorithms have enough
flexibility in picking the items to achieve high quality results.
Specifically, we show a strong approximation algorithm for the case of
non-finicky utilities and OWA vectors that concentrate most of the
weight in a constant number of their top coefficients
(Theorems~\ref{thm:borda:nonincreasing},~\ref{thm:borda:any-first-ell},~\ref{thm:borda:any-first-ell-ptas},
and Corollary~\ref{cor:borda-geometric-ptas}). These results apply,
for example, to the case of geometric progression OWAs,
$\ell\text{-}\mathrm{best}$ OWAs, and $\ell\text{-}\mathrm{med}$ OWAs
(for fixed values of $\ell$).  Further, when applied to the case of
Borda-based utilities (which, as we have argued in
Section~\ref{sec:utilities}, are non-finicky in a very strong sense),
we obtain polynomial-time approximation schemes (that is,
approximation algorithms that can compute solutions with an
arbitrarily good precision, but whose running time depends
polynomially only on the size of the problem but not necessarily on
the desired approximation ratio).

\section{Computing Exact Solutions}\label{sec:worst-case}

We start our analysis by discussing the complexity of solving the
\textsc{OWA-Winner} problem exactly. In general, it seems that
\textsc{OWA-Winner} is a rather difficult problem and below we show
this section's main negative result. That is, we show that our problem
is $\np$-hard for \emph{any class of OWA vectors} satisfying a certain
natural restriction. Intuitively, this restriction says that in a
considered family of OWAs, the impact of more-liked items on the total
satisfaction of an agent is greater than that of the less-liked ones.

\begin{theorem}\label{thm:almost-all-hard}
  Fix an OWA family $\alpha$ such that for every $K$, $\alpha^{(K)}$
  is nonincreasing and nonconstant. $\alpha$-\textsc{OWA-Winner} is
  $\np$-hard, even for approval-based utilities.
\end{theorem}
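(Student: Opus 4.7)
The plan is to reduce from \textsc{Perfect Code} in cubic graphs (\np-complete), in which one must decide, given a cubic $G=(V,E)$ with $|V|=4k$, whether there exists $S\subseteq V$ of size $k$ with $|N[v]\cap S|=1$ for every $v\in V$. Because $\alpha^{(K)}$ is nonincreasing and nonconstant, for each $K$ there is a smallest index $j=j(K)\in\{1,\ldots,K-1\}$ with $\alpha^{(K)}_j>\alpha^{(K)}_{j+1}$, and this index is polynomial-time computable since the OWA family is. The construction uses $j-1$ ``universal'' padding items to shift the interesting part of the OWA vector so that its critical jump falls at position~$1$ of the resulting subproblem.

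Given a cubic \textsc{Perfect Code} instance $(G,k)$, I would choose a committee size $K$ with $K-j(K)+1=k$ and set $j:=j(K)$. The $\alpha$-\textsc{OWA-Winner} instance has items $V\cup\{u_1,\ldots,u_{j-1}\}$ and two kinds of agents: one vertex-agent per $v\in V$ that approves $N[v]\cup\{u_1,\ldots,u_{j-1}\}$; and, for each $u_\ell$, a block of $M:=|V|+1$ dummy agents each approving only $u_\ell$. Including any universal $u_\ell$ contributes $M\alpha^{(K)}_1$ from its dummies, while dropping $u_\ell$ could improve the vertex-agents' contribution by at most $|V|\alpha^{(K)}_1$; hence every optimal committee contains all $j-1$ universals.

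With the universals fixed, the remaining $k=K-j+1$ slots are filled by some $S\subseteq V$, and each vertex-agent $v$ contributes $A(j-1+t_v):=\sum_{\ell=1}^{j-1+t_v}\alpha^{(K)}_\ell$, where $t_v=|N[v]\cap S|$. Cubicity forces $\sum_v t_v=4k=|V|$ for every $S$ of size $k$, so the task reduces to maximizing $\sum_v A(j-1+t_v)$ over integer distributions $(t_v)$ with this fixed sum. The marginals of $A(j-1+\cdot)$ are $\alpha^{(K)}_j,\alpha^{(K)}_{j+1},\alpha^{(K)}_{j+2},\ldots$ and drop strictly between positions $1$ and $2$ because $\alpha_j>\alpha_{j+1}$. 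A swap argument (moving one unit from any $t_v\ge 2$ to some $t_u=0$ strictly increases the objective) then shows that the unconstrained optimum is attained uniquely at $t_v=1$ for every $v$, which is realizable by an actual $S$ of size $k$ iff $S$ is a perfect code. Thus the $\alpha$-\textsc{OWA-Winner} value reaches a specific explicit threshold iff $(G,k)$ is a yes-instance.

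The main obstacle is the parameter-matching step, since $\alpha^{(K)}$ is handed to us and we cannot freely choose~$j$. If $K-j(K)+1$ is unbounded as $K$ grows, then for every sufficiently large $k$ we can find a valid $K$ by scanning in polynomial time, so the reduction above already yields \np-hardness. The complementary case is $K-j(K)\le c$ for some constant $c$, i.e.\ $\alpha^{(K)}=(\alpha_1,\ldots,\alpha_1,\alpha^{(K)}_{K-c+1},\ldots,\alpha^{(K)}_K)$ has only a bounded number of decreasing tail positions; then $\alpha$ is essentially a $(K-c')$-\textsc{best} family for some bounded $c'$, and \np-hardness follows by an argument parallel to Theorem~\ref{thm:k-1-best}. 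Cleanly combining these two cases is what makes the full proof nontrivial; the dummy-dominance step and the concavity-and-swap argument itself are routine.
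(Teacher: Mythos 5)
Your core gadget (universal padding items to shift the jump to the front, closed neighborhoods in a cubic graph, and the concavity-and-swap argument) is sound, but the reduction breaks exactly where you flag it: the parameter-matching step. In your ``unbounded'' case you claim that if $K-j(K)+1$ is unbounded, then for every sufficiently large $k$ you can find a valid $K$ by scanning in polynomial time. This fails on two counts. First, unboundedness of the map $K\mapsto K-j(K)+1$ does not mean it attains every sufficiently large value; it can skip values, and \textsc{Perfect Code} in a cubic graph pins $k=|V|/4$ exactly (padding with disjoint copies of $K_4$ lets you shift $k$ upward, but only to values the map actually attains). Second, and fatally, even when some usable value $k'\geq k$ is attained, there is no bound on how large the smallest witnessing $K$ is as a function of $k$: the family $\alpha$ is adversarial, and the first $K$ with $K-j(K)+1\geq k$ could be exponential (or worse) in $k$. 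Then the scan is not polynomial time, and the constructed \textsc{OWA-Winner} instance itself (which needs at least $K$ items) is not of polynomial size. An $\np$-hardness proof must give a polynomial-time reduction for the given family, so this case is a genuine hole, and it is the generic case---your ``bounded tail'' fallback covers only families where the jump always sits within a constant distance of the end.

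The paper's proof inverts your quantification and thereby avoids the inversion problem entirely: given a \textsc{CubicVertexCover} instance with parameter $K$, it \emph{fixes} the committee size to $2K$ (polynomial in the input), inspects the polynomially many entries of $\alpha^{(2K)}$ to locate a strict decrease, and then adapts the gadget to wherever that jump happens to fall. If the jump is at some position $p\leq K$, it uses the construction of Theorem~\ref{thm:constantThresholdNPHard} (your padding idea, with $p-1$ dummy items approved by everyone, plus extra dummy items and agents to fill the committee to size $2K$); if instead $\alpha^{(2K)}$ is constant on the first $K+1$ positions and jumps only at some $p>K$, it uses the complementary-pairs construction of Theorem~\ref{thm:k-1-best}, again padded with forced dummy items. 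The lesson is that one should never try to choose the committee size so that the jump lands at a prescribed position (that requires inverting $K\mapsto j(K)$, which may be infeasible); instead one chooses the committee size as a simple function of the input instance and case-splits, per instance, on where the jump lies. If you reorganize your argument this way---your perfect-code gadget would serve in the early-jump case, and a worked-out version of your sketched second case in the late-jump case---the proof goes through.
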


For the sake of readability, we first prove two simpler results that
we later use in the proof of Theorem~\ref{thm:almost-all-hard}.  In
these proofs, we give reductions from the standard
\textsc{VertexCover} problem and from \textsc{CubicVertexCover}, its
variant restricted to cubic graphs.

\begin{definition}
  In the \textsc{VertexCover} problem we are given an undirected graph
  $G = (V,E)$, where $V = \{v_1,\ldots, v_m\}$ is the set of vertices
  and $E = \{e_1, \ldots, e_n\}$ is the set of edges, and a positive
  integer $K$. We ask if there is a set $C$ of up to $K$ vertices such
  that each edge is incident to at least one vertex from $C$.
%
  The \textsc{CubicVertexCover} problem the same problem, restricted to graphs
  where each vertex has  degree exactly three.
\end{definition}

\textsc{VertexCover} is well-known to be
$\np$-hard~\cite{gar-joh:b:int}; $\np$-hardness for
\textsc{CubicVertexCover} was shown by Alimonti and
Kann~\shortcite{journals/tcs/AlimontiK00}.

\begin{theorem}\label{thm:constantThresholdNPHard}
  Fix an OWA family $\alpha$, such that there exists $p$ such that for
  every $\alpha^{(K)}$ we have $\alpha^{(K)}_p >
  \alpha^{(K)}_{p+1}$. $\alpha$-\textsc{OWA-Winner} is $\np$-hard,
  ever for approval-based utilities.
\end{theorem}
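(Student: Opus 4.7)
My plan is to reduce from \textsc{CubicVertexCover}, known to be $\np$-hard. Given a cubic graph $G = (V, E)$ with $m = |V|$, $n = |E| = 3m/2$, and an integer $K$, I construct the following $\alpha$-\textsc{OWA-Winner} instance with approval-based utilities. The items are $m$ \emph{vertex-items} $x_v$ (one per $v \in V$), $p-1$ \emph{pad items} $d_1, \ldots, d_{p-1}$, and $p$ \emph{anchor items} $y_1, \ldots, y_p$; the committee size is $K' = K + 2p - 1$. For every edge $e = \{u,v\}$ of $G$ I add $M$ copies of an \emph{edge-agent} whose approval set is $\{x_u, x_v, d_1, \ldots, d_{p-1}\}$; I also add $L$ \emph{anchor-forcing agents}, each approving $\{y_1, \ldots, y_p\}$, and $L'$ \emph{pad-forcing agents}, each approving $\{y_1, d_1, \ldots, d_{p-1}\}$. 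The multiplicities $M, L', L$ will be chosen (in this order) as polynomials in the input size and in $\max_i \alpha_i^{(K')}$, so that each group dominates the cumulative utility bound of the previously-introduced groups.

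Granting for the moment that every optimal committee $W$ contains all $p$ anchors and all $p-1$ pads (leaving exactly $K$ slots for vertex-items), the two forcing groups contribute the fixed totals $Ls_p$ and $L's_p$, where $s_k := \sum_{i=1}^k \alpha_i^{(K')}$. Each edge-agent for $e$ approves $p - 1 + t_e$ items of $W$, where $t_e \in \{0,1,2\}$ counts the endpoints of $e$ in $W$, so the edge-agent contribution to the objective is
\[
M \cdot \bigl(s_{p-1}\, n + \alpha_p\, c + \alpha_{p+1}\, c_2\bigr),
\]
where $c$ is the number of edges of $G$ covered by $W \cap V$ and $c_2$ the number with both endpoints in $W \cap V$. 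Because $G$ is $3$-regular and $|W \cap V| = K$, an incidence double-count gives $c + c_2 = 3K$; substituting $c_2 = 3K - c$ rewrites the expression as
\[
M \cdot \bigl(s_{p-1}\, n + 3K\,\alpha_{p+1} + (\alpha_p - \alpha_{p+1})\, c\bigr).
\]
The hypothesis $\alpha_p > \alpha_{p+1}$ makes this strictly increasing in $c$, and $c$ reaches its upper bound $n$ iff $W \cap V$ is a vertex cover of $G$. Taking the OWA-Winner target to be the value attained at $c = n$ then yields the desired equivalence.

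The only real work is the forcing step, and this is where the main difficulty lies. From $\alpha_p > \alpha_{p+1} \geq 0$ we have $\alpha_p > 0$. Dropping any single anchor from $W$ reduces every anchor-forcing agent's approval count from $p$ to at most $p-1$, a loss of at least $\alpha_p$ per agent, hence at least $L\alpha_p$ in total; since the total utility produced by the $Mn$ edge-agents and $L'$ pad-forcing agents is bounded by $(Mn + L')(p+1)\max_i \alpha_i^{(K')}$, choosing $L$ to exceed this bound divided by $\alpha_p$ forces every anchor into every optimal committee. An analogous argument, with $L'$ large enough to dominate the edge-agent contribution, forces every pad into $W$ once the anchors are fixed. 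All three multiplicities are polynomial in the input, so the whole construction runs in polynomial time and the reduction establishes $\np$-hardness even under approval-based utilities.
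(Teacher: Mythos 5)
Your core counting step is exactly the paper's: the same reduction source (\textsc{CubicVertexCover}), the same use of $3$-regularity to obtain $c + c_2 = 3K$, and the same target value $n s_p + (3K-n)\alpha_{p+1}$ (writing $s_k = \sum_{i=1}^k \alpha^{(K')}_i$ as you do), with the hypothesis $\alpha_p > \alpha_{p+1}$ making the objective strictly increasing in the number of covered edges. The genuine gap is in your forcing step. You choose the multiplicities $L'$ and $L$ to be ``polynomial in the input size and in $\max_i \alpha_i^{(K')}$,'' and indeed your own inequalities require roughly $L' > Mn(p+1)\max_i\alpha_i^{(K')}/\alpha_p^{(K')}$ and $L > (Mn+L')(p+1)\max_i\alpha_i^{(K')}/\alpha_p^{(K')}$. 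But this ratio need not be polynomially bounded in the size of the \textsc{CubicVertexCover} instance: the paper only assumes OWA families are polynomial-time computable, so entries may have exponential magnitude with polynomial bit-length (the paper's own geometric-progression family has $\alpha_1 = p^{K-1}$). Concretely, consider the family with $\alpha^{(K)} = \langle 2^K, 2^K, 1, 0, \ldots, 0\rangle$ for even $K$ and $\alpha^{(K)} = \langle 2^K, 1, 1, 0, \ldots, 0\rangle$ for odd $K$: the only position $p$ satisfying $\alpha^{(K)}_p > \alpha^{(K)}_{p+1}$ for \emph{every} $K$ is $p=3$, where the ratio is $2^{K'}$. Because the theorem claims hardness for approval-based utilities, you cannot compress cloned agents into a single weighted agent (utilities must stay in $\{0,1\}$), so the copies must appear explicitly and your constructed instance has exponentially many agents. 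The reduction is therefore not polynomial-time for all families covered by the theorem; it proves the statement only under the extra assumption that the OWA entries are polynomially bounded.

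The paper avoids this issue by not forcing anything with multiplicities: its $p-1$ dummy items are placed inside the approval set of \emph{every} edge-agent, and the backward direction is a pure counting argument (the total approval mass of any committee is at most $(p-1)n + 3K$, attainable only when all dummies are selected), so no quantity in the constructed instance depends on the magnitudes of the $\alpha_i$. To repair your proof you would either have to adopt such a structural/counting argument in place of the anchor and pad gadgets, or weaken the theorem. A secondary, fixable imprecision: for non-monotone OWAs your exchange argument should compare any anchor-deficient committee against the fixed committee $W^*$ containing all anchors and pads (using $s_t \leq s_{p-1}$ for every $t \leq p-1$), rather than re-inserting one anchor at a time, since a single re-insertion can gain $\alpha_{t+1} = 0$ when the committee holds fewer than $p-1$ anchors.
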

\begin{proof}
  We give a reduction from \textsc{CubicVertexCover} problem.  Let $I$
  be an instance of \textsc{CubicVertexCover} with graph $G = (V,E)$,
  where $V = \{v_1, \dots, v_m\}$ and $E = \{e_1, \dots, e_n\}$, and
  positive integer $K$. W.l.o.g., we assume that $n > 3$.

  We construct an instance $I'$ of $\alpha$-\textsc{OWA-Winner}. In
  $I'$ we set $N = E$ (the agents correspond to the edges), $A = V
  \cup \{b_1, b_2 ,\dots b_{p-1}\}$ (there are $(p-1)$ dummy items;
  other items correspond to the vertices), and we seek a collection of
  items of size $K+p-1$. 
  Each agent $e_i$, $e_i \in E$, has utility $1$ exactly for all the
  dummy items and for two vertices that $e_i$ connects and for each of
  the dummy items (for the remaining items $e_i$ has utility $0$).  In
  effect, each agent has utility $1$ for exactly $p+1$ items.



  We claim that $I$ is a yes-instance of \textsc{CubicVertexCover} if
  and only if there exists a solution for $I'$ with the total utility
  at least $n\sum_{i = 1}^{p}\alpha_i + (3K - n)\alpha_{p+1}$.

  $(\Rightarrow)$ If there is a vertex cover $C$ of size $K$ for $G$,
  then by selecting the items $W = C \cup \{b_1, b_2 ,\dots b_{p-1}\}$
  we obtain the required utility of the agents. Indeed, for every
  agent $e_i$ there are at least $p$ items in $W$ for which $i$ gives
  value $1$ (the $p-1$ dummy items and at least one vertex incident to
  $e_i$). These items contribute the value $n\sum_{i = 1}^{p}\alpha_i$
  to the total agents' utility. Additionally, since every non-dummy
  item has value $1$ for exactly 3 agents, and since every agent has
  at most $(p+1)$ items with value $1$, there are exactly $(3K - n)$
  agents that have exactly $(p+1)$ items in $W$ with values $1$.
  These $(p+1)$'th additional utility-$1$ items of the $(3k-n)$ agents
  contribute $(3K - n)\alpha_{p+1}$ to the total utility.  Altogether,
  the agents' utility is $n\sum_{i = 1}^{p}\alpha_i + (3K -
  n)\alpha_{p+1}$, as claimed.

  $(\Leftarrow)$ Let us assume that there is a set of $(K+p-1)$ items
  with total utility at least $n\sum_{i = 1}^{p}\alpha_i + (3K -
  n)\alpha_{p+1}$. In $I'$ we have $(p-1)$ items that have value $1$
  for each of the $n$ agents, and every other item has value $1$ for
  exactly $3$ agents. Thus, the sum of the utilities of $K+p-1$ items
  (without applying the OWA operator yet) is at most $(p-1)n + 3K = pn
  + (3K - n)$.  Thus, the total utility of the agents (now applying
  the OWA operator) is $n\sum_{i = 1}^{p}\alpha_i + (3K -
  n)\alpha_{p+1}$ only if for each agent $e_i$ the solution contains
  $p$ items with utility $1$. Since there are only $p-1$ dummy items,
  it means that for each agent $e_i$ there is a vertex $v_j$ in the
  solution such that $e_j$ is incident to $v_j$. That is, $I$ is a
  yes-instance of \textsc{CubicVertexCover}.
\end{proof}

\begin{theorem}\label{thm:k-1-best}
  $(K-1)\text{-}\mathrm{best}$-\textsc{OWA-Winner} is $\np$-complete even for approval-based
  utilities.
\end{theorem}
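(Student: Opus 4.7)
The plan is to reduce from \textsc{CubicVertexCover}, in the spirit of the construction used in Theorem~\ref{thm:constantThresholdNPHard}. Given a cubic graph $G=(V,E)$ with $n=|E|$ and a target cover size $K$, I build an $\alpha$-\textsc{OWA-Winner} instance with $\alpha=(K-1)\text{-best}$, items $A=V$, and one agent $a_e$ per edge $e=\{u,v\}\in E$ whose approval-based utility is $u_{a_e,w}=1$ if $w\notin\{u,v\}$ and $0$ otherwise; the committee size is $K$. Membership in $\np$ is immediate since the OWA score of any candidate committee is evaluable in polynomial time.

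The crux is a clean closed form for the total OWA score. For a committee $W$ of size $K$ and an agent $a_e$ with $e=\{u,v\}$, let $t_e(W):=K-|W\cap\{u,v\}|$ be the number of items in $W$ that $a_e$ approves. Since $a_e$'s utility vector on $W$ is a $0/1$ vector with exactly $t_e(W)$ ones, summing its top $K-1$ entries yields $\min(t_e(W),K-1)$; this equals $t_e(W)$ whenever $t_e(W)\leq K-1$, and equals $K-1=t_e(W)-1$ precisely when $W\cap\{u,v\}=\emptyset$, i.e., when the edge $e$ is left uncovered by $W$. Summing over all edges and exploiting the fact that $G$ is $3$-regular (so that $\sum_{e\in E}|W\cap\{u,v\}|=\sum_{v\in W}\deg(v)=3K$, independently of $W$), I obtain
\[
u^{(K-1)\text{-best}}_{\util}(W)\;=\;\sum_{e\in E}t_e(W)\;-\;|\{e\in E:t_e(W)=K\}|\;=\;(nK-3K)-q(W),
\]
where $q(W)$ denotes the number of edges left uncovered by $W$.

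Consequently, the OWA score is maximized exactly when $q(W)=0$, i.e., when $W$ is a vertex cover of $G$; so $G$ admits a vertex cover of size $K$ if and only if the constructed instance has a committee with OWA score at least $K(n-3)$. Combined with NP-hardness of \textsc{CubicVertexCover}~\cite{journals/tcs/AlimontiK00}, this completes the reduction.

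The main delicate point is that the telescoping above relies on $\sum_{v\in W}\deg(v)$ being a constant as $W$ varies — this is exactly what reducing from the \emph{cubic} (rather than general) variant of vertex cover buys us. On arbitrary graphs $\sum_{e\in E}|W\cap\{u,v\}|$ would itself depend on $W$, blurring the correspondence between total utility and uncovered-edge count. The corner case $K=1$ is trivial, as $(K-1)\text{-best}$ is identically zero there, and can be excluded by a trivial padding argument.
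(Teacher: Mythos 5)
Your proof is correct, but it takes a genuinely different route from the paper's. The paper reduces from general \textsc{VertexCover} and uses \emph{two} agents per edge $e_i=\{v_{i,1},v_{i,2}\}$: agent $e_i^1$ approves exactly the two endpoints and agent $e_i^2$ approves exactly their complement. A three-way case analysis (zero, one, or both endpoints selected) shows the pair's combined $(K-1)$-best utility is $K$ when the edge is covered and $K-1$ otherwise — the complementary pair cancels, edge by edge, any dependence on \emph{how many} endpoints were chosen, which is what lets the paper work with arbitrary graphs. You keep only the complement agent and instead cancel that dependence \emph{globally}, using $3$-regularity so that $\sum_{e\in E}|W\cap\{u,v\}|=\sum_{v\in W}\deg(v)=3K$ is independent of $W$; this is exactly why you must reduce from \textsc{CubicVertexCover}, and it is the same counting trick the paper deploys in its proof of Theorem~\ref{thm:constantThresholdNPHard}. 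So your argument is in effect a hybrid of the paper's two reductions: the complement-approval gadget from the proof of Theorem~\ref{thm:k-1-best} combined with the regularity accounting from the proof of Theorem~\ref{thm:constantThresholdNPHard}. What your version buys: half as many agents and a transparent closed form, $u^{(K-1)\text{-best}}_{\util}(W)=K(n-3)-q(W)$, so the objective literally counts uncovered edges. What the paper's version buys: no regularity hypothesis, hence a reduction from the more classical problem. Both rest on the same phenomenon — $(K-1)$-best truncates exactly one unit of utility from an agent who approves the entire committee, which happens precisely for uncovered edges. Two small points you should make explicit for completeness: the \textsc{OWA-Winner} definition requires $K\leq m=|V|=2n/3$, so instances with larger $K$ must be dispatched as trivial yes-instances (just as you dispatch $K=1$); and a vertex cover of size strictly less than $K$ must be padded to size exactly $K$, which is harmless since any superset of a cover is a cover.
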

\begin{proof}
  Membership in $\np$ is clear.  We show a reduction from the
  \textsc{VertexCover} problem. 
%
%
  Let $I$ be an instance of \textsc{VertexCover} with graph $G =
  (V,E)$, where $V = \{v_1, \dots, v_m\}$ and $E = \{e_1, \dots,
  e_n\}$, and with a positive integer $K$ (without loss of generality,
  we assume that $K \geq 3$ and $K < m$).


  We construct an instance $I'$ of $(K-1)$-best-\textsc{OWA-Winner} in
  the following way.  We let the set of items be $A = V$ and we form
  $2n$ agents, two for each edge. Specifically, if $e_i$ is an edge
  connecting two vertices, call them $v_{i,1}$ and $v_{i,2}$, then we
  introduce two agents, $e_i^1$ and $e_i^2$, with the following
  utilities: $e_i^1$ has utility $1$ for $v_{i, 1}$ and for $v_{i,
    2}$, and has utility $0$ for all the other items; $e_i^2$ has
  opposite utilities---it has utility $0$ for $v_{i, 1}$ and for
  $v_{i, 2}$, and has utility $1$ for all the remaining ones.

  Let $W$ be some set of $K$ items (i.e., vertices) and
  consider the sum of the utilities derived by the two agents $e_i^1$
  and $e_i^2$ from $W$ under $(K-1)$-best-OWA. If neither $v_{i, 1}$
  nor $v_{i, 2}$ belong to $W$, then the total utility of $e_i^1$ and
  $e_i^2$ is equal to $K-1$ (the former agent gets utility $0$ and the
  latter one gets $K-1$). If only one of the items, i.e.,
  either $v_{i, 1}$ or $v_{i, 2}$, belongs to $W$, then the total
  utility of $e_i^1$ and $e_i^2$ is equal to $K$ (the former agent
  gets utility $1$ and the latter one still gets $K-1$). Finally, if
  both items $v_{i, 1}, v_{i, 2}$ belong to $W$, then the total
  utility of $e_i^1$ and $e_i^2$ is also equal to $K$ (the former gets
  utility $2$ and the latter gets utility $K-2$). Thus the total
  utility of all agents is equal to $K \cdot n$ if and only if the
  answer to the instance $I$ is ``yes''. This shows that the reduction
  is correct and, since the reduction is computable in polynomial
  time, the proof is complete.
\end{proof}

Using a proof that combines the ideas of the proof of
Theorems~\ref{thm:constantThresholdNPHard} and~\ref{thm:k-1-best}, we show that
indeed \textsc{OWA-Winner} is $\np$-hard for a large class of natural
OWAs. \medskip

\newenvironment{proof-sketch-almost-all-hard}{\noindent\textbf{\textit{Proof
      of Theorem~\ref{thm:almost-all-hard}}.}\quad}{\hfill\medskip}

\begin{proof}[Proof of Theorem~\ref{thm:almost-all-hard}]
  We give a reduction from \textsc{CubicVertexCover}.  Let $I$ be an instance of
  \textsc{CubicVertexCover} with graph $G = (V,E)$, where $V = \{v_1,
  \dots, v_m\}$ and $E = \{e_1, \dots, e_n\}$, and with positive
  integer $K$.

  Now let us consider $\alpha^{(2K)}$. Since $\alpha^{(2K)}$ is
  nonincreasing and nonconstant, one of the two following conditions
  must hold.
  \begin{enumerate}
  \item There exists $p\leq K$ such that $\alpha^{(2K)}_p >
    \alpha^{(2K)}_{p+1}$.
  \item There exists $p > K$ such that $\alpha^{(2K)}_p >
    \alpha^{(2K)}_{p+1}$, and for every $p \leq K$, we have
    $\alpha^{(2K)}_p = \alpha^{(2K)}_{p+1}$.
  \end{enumerate}

  If (1) is the case then we use a reduction similar to that in the
  proof of Theorem~\ref{thm:constantThresholdNPHard}. The only
  difference is that apart from the set $D_1$ of $(p-1)$ dummy items
  (ranked first by all agents), we introduce the set $D_2$ of
  $(2K-p+1)$ dummy items and $(2K-p+1)$ sets $N_1, N_2, \dots,
  N_{2K-p+1}$, each consisting of $2n$ dummy agents. The dummy items
  from $D_2$ are introduced only to fill-up the solution up to $2K$
  members. The dummy agents from $N_i$ have utility $1$ for each of
  the items from $D_1$ and for the $i$'th item from $D_2$ (they have
  utility $0$ for all the other items). This is to enforce that the
  items from $D_2$ are selected in the optimal solution. The further
  part of the reduction is as in the proof of
  Theorem~\ref{thm:constantThresholdNPHard}.

  If (2) is the case, then we use a reduction similar to that in the
  proof of Theorem~\ref{thm:k-1-best}. We let the set of items be $A =
  V \cup D_1 \cup D_2$, where $D_1$, $|D_1| = p+1-K$, and $D_2$,
  $|D_2| = 2K-p-1$ are sets of dummy items that we need for our
  construction. Similarly as in the proof of
  Theorem~\ref{thm:k-1-best}, for each edge $e_i \in E$ we introduce
  two agents $e_i^1$ and $e_i^2$. Here, however, we additionally need
  the set $F$ of $(2n+1)$ dummy agents. Each dummy agent from $F$
  assigns utility 1 to each dummy item from $D_2$ and utility 0 to the
  remaining items---consequently, since $|F| > 2n$, each dummy item
  from $D_2$ must be selected to every optimal solution. Further, each
  non-dummy agent assigns utility 1 to each dummy agent from
  $D_1$---this way we ensure that every item from $D_1$ must be
  selected to every optimal solution. Finally, the utilities of the
  non-dummy agents for the non-dummy items are defined exactly as in
  the proof of Theorem~\ref{thm:k-1-best}. This ensures that the
  optimal solution, apart from $D_1$ and $D_2$, will contain the
  non-dummy items that correspond to the vertices from the optimal
  vertex cover.
\end{proof}

One may wonder if our just-presented hardness results also hold for
other restrictions on agents' utilities.  Below we show a variant of
the result from Theorem~\ref{thm:k-1-best} for Borda-based
utilities. It follows by an application of a similar idea as in the
proof of Theorem~\ref{thm:k-1-best}, but the restriction to
Borda-based utilities requires a much more technical proof (available
in the appendix).


\newcommand{\thmbordakminusonebest}{
  $(K-1)\text{-}\mathrm{best}$-\textsc{OWA-Winner} is $\np$-hard even
  for Borda-based utilities.  }

\begin{theorem}\label{thm:borda:k-1-best}
  \thmbordakminusonebest
\end{theorem}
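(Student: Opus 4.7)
The plan is to reduce from \textsc{CubicVertexCover} and lift the approval-based construction from the proof of Theorem~\ref{thm:k-1-best} to the Borda setting. The key structural observation is that, for any committee $W$ of size $K'$, the $(K'-1)$-best OWA score decomposes as the $K'$-Borda score of $W$ minus the sum over all agents of the dropped minimum,
\[
\sum_i\sum_{w\in W}u_i(w) \;-\; \sum_i\min_{w\in W}u_i(w).
\]
My goal is therefore to engineer Borda preferences so that the first term is the same for every candidate committee of the relevant shape; then optimization reduces to minimizing the sum of dropped minima, and an approval-style case analysis should show that this minimization over the variable part of $W$ picks out vertex covers of the input graph.

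Concretely, I would take an instance $(G,K)$ of \textsc{CubicVertexCover} with $|V|=m$ and $|E|=n$ and create an item set $A=V\cup D$ for a block $D$ of dummy items, with committee size $K'=K+|D|$. A sufficiently large population of dummy agents, each ranking $D$ at the top, pins $D$ into every optimal committee, so the real choice reduces to a size-$K$ subset $V'\subseteq V$. For each edge $e_i=\{v_{i,1},v_{i,2}\}$ I would introduce two edge-agents in the spirit of the approval proof: agent $e_i^1$ ranks $D$ first, then $v_{i,1},v_{i,2}$, and finally the remaining vertices at the bottom; agent $e_i^2$ ranks $D$ first, then the non-incident vertices, and places $v_{i,1},v_{i,2}$ at the very bottom. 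Cubicity of $G$ makes it possible to symmetrize the within-block orderings---e.g., by duplicating each edge-agent with $v_{i,1}$ and $v_{i,2}$ swapped and by rotating the ordering of the ``other'' vertices across edges---so that the total Borda contribution of every vertex is the same across all edge-agents. With this symmetry in place, the first term of the decomposition becomes a constant in $V'$, and a per-edge case analysis analogous to the one in the proof of Theorem~\ref{thm:k-1-best} shows that the dropped-minimum term is strictly smaller when the corresponding edge is covered by $V'$, so that the global OWA score is maximized exactly when $V'$ is a vertex cover of $G$.

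The main obstacle is the calibration of parameters. Unlike the $0/1$ approval utilities, Borda utilities descend in unit steps, so the per-edge gap between a covered and an uncovered edge is small and can easily be swamped by secondary effects. I would need to size $D$ and the dummy-agent population so that (i) the dummy-agent contribution strictly outweighs any rearrangement of non-dummy items, pinning $D$ into every optimal committee; (ii) the minimum-utility item for each edge-agent lies in the precise stratum---either the incident block or the non-incident block---that the case analysis requires, so that the gap for each edge is exactly the unit-Borda difference induced by covering; and (iii) the averaging used to make the $K'$-Borda term constant in $V'$ preserves the additive structure of the per-edge gaps. Managing these three constraints simultaneously, and verifying that the aggregated unit gaps over all $n$ edges yield a strict separation between vertex-cover committees and every other size-$K$ choice of $V'$, is where I expect the ``much more technical'' part of the argument to arise.
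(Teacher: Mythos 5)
Your high-level plan (lift the approval construction of Theorem~\ref{thm:k-1-best}, pin a dummy block into the committee with a large dummy population, finish with a per-edge case analysis) is the same as the paper's, and your decomposition of the $(K'-1)$-best score as the full Borda sum minus the sum of dropped minima is correct. But the proposal leaves unresolved exactly the step that makes the Borda case hard, and the construction as you describe it does not work. With pure Borda utilities the ``dropped minimum'' terms are not unit-scale corrections but order-$m$ quantities that vary with the chosen vertex set $V'$: for your agent $e_i^1$ the dropped item is the bottom-ranked chosen non-incident vertex, whose utility can be anything between $0$ and roughly $m$ depending on which $K$ vertices are selected and how the bottom block is ordered; for your dummy agents (which rank $D$ on top and the vertices below) the dropped item is likewise the bottom-ranked chosen vertex. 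Your symmetrization fixes only the first (linear) term of the decomposition; it does nothing to the minima, because reversing an order within a pair turns ``the lowest chosen vertex'' into ``the highest chosen vertex,'' not into a constant, so the pair's dropped-minimum sum still depends on the spread of $V'$. Since the covered-versus-uncovered gap per edge in your construction can be as small as a single Borda unit, and since the dummy population must be huge to pin $D$, these uncontrolled fluctuations swamp the signal and no strict separation between cover and non-cover committees follows. This is not a calibration detail to be managed later; it is the crux of the theorem.

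The paper resolves it with two devices missing from your sketch. First, it inserts a huge padding block $H$ of $x = 4n(m+2)(K+4)$ filler items into every preference order between each agent's ``approved'' stratum and its ``disapproved'' stratum, e.g., $e_i^1 \colon d_1 \succ d_2 \succ V \setminus \{v_{i,1},v_{i,2}\} \succ H \succ \{v_{i,1},v_{i,2}\}$. Every approved item is then worth at least $x$ while every disapproved item is worth at most about $m$, so the Borda utilities behave like approval utilities scaled by $x$ up to an error that is negligible by the choice of $x$; the approval-style counting then goes through with an explicit threshold ($nx(K+4)$ for the edge agents) and explicit upper bounds, such as $(x+m+2)(K+3)+mK$, for pairs corresponding to uncovered edges. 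Second, the dummy agents come in reversed pairs $f_i^1\colon d_1 \succ v_1 \succ \cdots \succ v_m \succ d_2 \succ H$ and $f_i^2\colon d_2 \succ v_m \succ \cdots \succ v_1 \succ d_1 \succ H$, so the dropped minimum is always the fixed item $d_2$ (respectively $d_1$) rather than a chosen vertex, and the reversal makes each pair's total utility independent of $V'$; it is this sandwiching of $d_1,d_2$ between the vertex block and $H$, not merely ``ranking the dummies first,'' that makes the dummy population's contribution exactly constant. (Incidentally, once these devices are in place cubicity buys you nothing---the paper reduces from plain \textsc{VertexCover}.) With these two ingredients added, your decomposition argument becomes essentially the paper's proof.
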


\subsection{Inherited Hardness Results}

We now consider the cases of
$k\text{-}\mathrm{best}$-\textsc{OWA-Winner} and
$k\text{-}\mathrm{med}$-\textsc{OWA-Winner} (where $k$ is a
constant). By results of Procaccia, Rosenschein and
Zohar~\shortcite{complexityProportionalRepr} and Lu and
Boutilier~\shortcite{budgetSocialChoice}, we know that the
$1$-best-\textsc{OWA-Winner} problem is $\np$-hard both for both
approval-based utilities and Borda-based utilities (in this case the
problem is equivalent to winner determination under appropriate
variants of Chamberlin--Courant voting rule; in effect, many results
regarding the complexity of this rule are applicable for this variant
of the
problem~\cite{fullyProportionalRepr,sko-fal-sli:c:multiwinner,yu-cha-elk:c:sp-tree,sko-fal:t:maxcover}).
A simple reduction shows that this result carries over to each family
of $k$-best OWAs and of $k$-med OWAs, where $k$ is a fixed positive
integer (note that for the case of approval-based utilities, these
results also follow through Thoerem~\ref{thm:almost-all-hard}).

\begin{proposition}\label{thm:k-best-k-med}
  For each fixed $k$, $k\text{-}\mathrm{best}$-\textsc{OWA-Winner} and
  $k\text{-}\mathrm{med}$-\textsc{OWA-Winner} are $\np$-complete, even if the
  utility profiles are restricted to be approval-based or Borda-based.
\end{proposition}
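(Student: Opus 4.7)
The plan is to reduce from the $1$-best case, whose $\np$-hardness for both approval and Borda utilities is the cited Chamberlin--Courant fact. In the approval setting, Theorem~\ref{thm:almost-all-hard} already handles $k$-best (its OWA vector $(1,\ldots,1,0,\ldots,0)$ is nonincreasing and nonconstant whenever $k<K'$), so the genuinely new work is the approval-based $k$-med case and the two Borda-based cases. For all of these I will use the same simple construction: given a $1$-best instance $(N,A,u,K)$ with $m=|A|$, introduce $k-1$ fresh dummy items $D=\{d_1,\ldots,d_{k-1}\}$, set $K'=K+k-1$, and extend each agent's preferences so that every dummy beats every original item. In the Borda model each agent's new ranking is $d_1 \succ \cdots \succ d_{k-1}$ followed by the original ranking over $A$; each original item's Borda utility is unchanged because ranks and the total number of items both shift by $k-1$.

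For the Borda \textbf{$k$-best} case I will show that every optimal $S^*$ contains $D$ via a swap argument. If some $d_j \notin S^*$, pick any $x \in S^* \setminus D$ and compare the $k$-best value of $S^*$ with that of $S^* \cup \{d_j\} \setminus \{x\}$. Tracking how each agent's sorted top-$k$ sublist shifts, the per-agent change equals $u_i(d_j)$ minus the utility of some original item in $S^*$; since $u_i(d_j) \geq m' - (k-1) = m$ and every original utility is at most $m-1$, the change is at least $1$. The total gain of at least $n$ contradicts optimality, so $D \subseteq S^*$. Then the $k$-best value of $S^*$ equals $n\bigl((k-1)m' - \binom{k}{2}\bigr) + \sum_i u_i^{(1)}(S^* \cap A)$, and maximizing the second term is exactly the $1$-best objective on $A$.

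For the \textbf{$k$-med} case (both utility models) I will not need to force $D$ into every optimum. For each $q \in \{0,\ldots,k-1\}$, let $V_q$ denote the best $k$-med value attainable by a committee with exactly $q$ dummies; the key claim is $V_{k-1} \geq V_q$ for every $q$. Given any $S^*$ achieving $V_q$, with $|S^* \cap A| = K + (k-1-q)$, remove any $k-1-q$ originals to obtain $S' \subseteq S^* \cap A$ of size $K$. In both models each agent's $k$-med contribution from $S^*$ is certified by a ``witness'' subset of $S^* \cap A$ of size at least $k-q$: the agent's top $k-q$ originals in the Borda model, or a $(k-q)$-subset of the agent's approved items lying in $S^*$ in the approval model. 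Since $|S^* \cap A \setminus S'| = k-1-q < k-q$, pigeonhole guarantees that $S'$ still meets every witness, so $u_i^{(1)}(S') \geq u_i^{(k-q)}(S^*)$ in the Borda model and the agent remains covered by $S'$ in the approval model. Summing and taking the committee $D \cup S'$ yields $V_{k-1} \geq V_q$, so the $k$-med optimum on the new instance equals the $1$-best optimum on $A$.

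The step I expect to require the most care is the Borda $k$-best swap: one must case-split on whether the removed $x$ lies inside the agent's sorted top-$(t+1)$ originals in $S^*$ (with $t = k-1-|D \cap S^*|$) and verify in each case that the item moving up into the top-$k$ slot has utility at most $m-1$, which is exactly what preserves the per-agent gain $k-j \geq 1$. The $k$-med pigeonhole, by contrast, is uniform across utility models and essentially proof-by-inspection. Membership in $\np$ is immediate in all four sub-cases.
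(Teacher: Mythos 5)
Your proposal is correct and follows essentially the same route as the paper: reduce from $1$-best-\textsc{OWA-Winner} by adding $k-1$ dummy items that every agent ranks (or approves) above all original items, enlarging the committee size to $K+k-1$, and observing that the $k$-best/$k$-med objective on committees containing all dummies collapses to a constant plus the $1$-best objective. The paper dismisses the exchange step with ``it is easy to see''; your swap argument for Borda $k$-best and the $V_q$-pigeonhole argument for $k$-med are exactly the details that claim hides, so this is a fleshed-out version of the paper's proof rather than a different one.
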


\begin{proof}
  Let $k$ be a fixed constant.  It is easy to see that
  $k$-best-\textsc{OWA-Winner} and $k$-med-\textsc{OWA-Winner}
  are both in $\np$. To show $\np$-hardness, we give reductions from
  $1$-best-\textsc{OWA-Winner} (either with approval-based utilities or with
  Borda-based utilities) to $k$-best-\textsc{OWA-Winner} and to
  $k$-med-\textsc{OWA-Winner} (with the same types of utilities).

  Let $I$ be an instance of $1$-best-\textsc{OWA-Winner} with $n$ agents, $m$
  items, and where we seek a winner set of size $K$. We form an
  instance $I'$ of $k$-best-\textsc{OWA-Winner} that is identical to $I$ except
  that: (1) We add $k-1$ special items $b_1, \ldots, b_{k-1}$
  such that under approval-based utilities each agent $i$ has utility
  $1$ for each item $b_j$, $1 \leq j \leq k-1$, and under
  Borda-based utilities each agent $i$ has utility $m+j-1$ for
  item $b_j$, $1 \leq j \leq k-1$. (2) We set the size of the
  desired winner set to be $K' = K+k-1$. It is easy to see that if
  there is an optimal solution $W'$ for $I'$ that achieves some
  utility $x$, then there is a solution $W''$ for $I'$ that uses all
  the $k-1$ items $b_1, \ldots, b_{k-1}$ and also achieves
  utility $x$. Further, the set $W''-\{b_1, \ldots, b_{k-1}\}$ is an
  optimal solution for $I$ and, for $I$, has utility $x -
  \sum_{i=1}^{n}\sum_{j=1}^{k-1}u_{i,b_j} = x -
  n\sum_{j=1}^{k-1}u_{1,b_j}$.

  Analogous argument shows that $1$-best-\textsc{OWA-Winner} reduces to
  $k$-med-\textsc{OWA-Winner} (also for approval-based and for
  Borda-based utilities).
\end{proof}

We leave the problem of generalizing the above two theorems to more
general classes of OWA vectors as a technical (but conceptually easy)
open problem.

\subsection{Rare Easy Cases}

While the \textsc{OWA-Winner} problem is in general $\np$-hard, there
are also some natural easy cases. For example, the problem is in $\p$
provided that we seek a winner set of a fixed size.  Naturally, in
practice the variant of the problem with fixed $K$ has only limited
applicability.

\begin{proposition}\label{pro:fixed-K}
  For each fixed constant $K$ (the size of the winner set), \textsc{OWA-Winner}
  is in $\p$.
\end{proposition}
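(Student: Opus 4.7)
The plan is to observe that with $K$ fixed, brute-force enumeration of all candidate winner sets runs in polynomial time, so the obstacle (such as it is) is only bookkeeping rather than any algorithmic insight.

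Concretely, I would proceed as follows. First, I would note that the number of size-$K$ subsets of the $m$-element item set $A$ is $\binom{m}{K} = O(m^K)$, which is polynomial in the input size whenever $K$ is a constant. Second, for each such candidate set $W = \{w_1, \ldots, w_K\}$, I would show that its OWA-utility
\[
u^{\alpha^{(K)}}_{\util}(W) = \sum_{i=1}^{n} \alpha^{(K)}(u_{i,w_1}, \ldots, u_{i,w_K})
\]
can be computed in time $O(nK \log K)$: for each of the $n$ agents, read off the $K$ intrinsic utilities $u_{i,w_1}, \ldots, u_{i,w_K}$, sort them in nonincreasing order (this is $O(K \log K)$, constant since $K$ is fixed), and take the dot product with the OWA weight vector $\alpha^{(K)}$ (which by assumption is polynomial-time computable). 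Summing over agents gives the score of $W$.

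The algorithm then enumerates all $O(m^K)$ candidate sets, evaluates each in time $O(nK \log K)$, and returns one with maximum score. The total running time is $O(m^K \cdot n K \log K)$, which is polynomial in $n$ and $m$ for every fixed $K$, establishing membership in $\p$. There is no real obstacle here; the statement is essentially a sanity check that the hardness results in the preceding subsections genuinely rely on $K$ being part of the input.
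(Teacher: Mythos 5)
Your proof is correct and matches the paper's own argument: the paper likewise observes that there are only $O(m^K)$ candidate winner sets, tries them all, and picks the best. You simply spell out the per-set evaluation cost (sorting each agent's utilities and applying the OWA vector), which the paper leaves implicit.
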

\begin{proof}
  For a profile with $m$ items, there are only $O(m^K)$ sets of
  winners to try. We try them all and pick one that yields highest
  utility.
\end{proof}

Similarly, the problem is in $\p$ when the number of available items
is fixed (it follows by applying the above proposition; if the number
of items is fixed then so is $K$).  Throughout the rest of the paper
we focus on the $\alpha$-\textsc{OWA-Winner} variant of the problem,
where $K$ is given as part of the input and $\alpha$ represents a
family of OWAs, one for each value of $K$.

It is easy to note that for $K$-best OWA (that is, for the family of
constant OWAs $\alpha = (1, \ldots, 1)$) the problem is in $\p$.

\begin{proposition}\label{pro:K-in-p}
  $K\text{-}\mathrm{best}$-\textsc{OWA-Winner} is in $\p$.
\end{proposition}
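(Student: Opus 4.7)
The plan is to observe that the $K$-best OWA, being the all-ones vector of length $K$, is insensitive to the reordering step in the definition of $\OWA$: for any vector $\vec x = \langle x_1, \ldots, x_K\rangle$, we have $K\text{-}\best(\vec x) = \sum_{j=1}^K x_j^\downarrow = \sum_{j=1}^K x_j$. Consequently, for any candidate size-$K$ winner set $W = \{w_1, \ldots, w_K\}$, the contribution of each agent reduces to a plain sum of intrinsic utilities, with no dependence on how the items of $W$ are ranked by that agent.

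Next, I swap the order of summation in the objective. Writing
\[
u^{\alpha^{(K)}}_\util(W) = \sum_{i=1}^n \sum_{j=1}^K u_{i,w_j} = \sum_{w \in W} \sum_{i=1}^n u_{i,w},
\]
and defining the \emph{marginal score} $s(a_j) := \sum_{i=1}^n u_{i,a_j}$, the problem becomes: find a size-$K$ subset $W \subseteq A$ maximizing $\sum_{a \in W} s(a)$. In other words, the objective has become fully separable over items, so the combinatorial coupling between items that makes general $\alpha$-\textsc{OWA-Winner} difficult has disappeared.

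The final step is algorithmic and essentially trivial: compute each $s(a_j)$ in $O(n)$ time (for a total of $O(nm)$), sort the items by their $s$-values, and return the top $K$ (breaking ties arbitrarily). This runs in $O(nm + m\log m)$ time, which is polynomial in the size of the input. There is no real obstacle here; the entire proof rests on the fact that summing all coordinates of a vector is invariant under the permutation $\sigma$ used in the OWA definition, which collapses the objective into an additive function over items and thereby admits a simple greedy solution.
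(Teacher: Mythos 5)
Your proof is correct and follows essentially the same approach as the paper: both reduce the $K$-best objective to a sum of per-item total utilities $s(a_j)=\sum_{i=1}^n u_{i,a_j}$ and select the $K$ items with the highest such scores. Your write-up merely makes explicit the permutation-invariance of the all-ones OWA and the swap of summation order, which the paper leaves implicit.
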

\begin{proof}
  Let $I$ be an input instance with $m$ items and $n$ agents,
  where we seek a winner set of size $K$.  It suffices to compute for
  each item the total utility that all the agents would derive
  if this item were included in the winner set and return $K$
  items for which this value is highest.
\end{proof}

Indeed, if the agents' utilities are either approval-based or
Borda-based, $K$-best-\textsc{OWA-Winner} boils down to
(polynomial-time) winner determination for $K$-best approval rule and
for $K$-Borda rule~\cite{deb:j:k-borda}, respectively (see also the
work of Elkind et al.~\shortcite{elk-fal-sko-sli:c:multiwinner-rules}
for a general discussion of multiwinner rules).  However, in light of
this fact, Theorems~\ref{thm:k-1-best} and~\ref{thm:borda:k-1-best}
appear quite surprising.

Given the results in this section so far, we conjecture that the
family of constant OWAs, that is, the family of $K$-best OWAs, is the
only natural family for which $\alpha$-\textsc{OWA-Winner} is in
$\p$. We leave this conjecture as a natural follow-up
question.\footnote{It is tempting to conjecture that for {\em all}
  families of non-constant OWAs, not just the natural ones, the
  problem is $\np$-hard.  This, however, is not the case. Indeed, by
  following the arguments of the classic theorem of
  Ladner~\cite{lad:j:np-incomplete}, it is possible to show a
  polynomial-time computable family of OWAs such that
  $\alpha$-\textsc{OWA-Winner} is in $\np$, but is neither
  $\np$-complete nor in $\p$. (Intuitively put, such a family could
  consist of interspersed long fragments where the OWAs are either
  $K$-best or $1$-best.  The $K$-best fragments would prevent the
  problem from being $\np$-complete, while the $1$-best fragments
  would prevent it from being in $\p$.)}

\subsection{Integer Programming}

In spite of all the hardness results that we have seen so far, we
still might be in a position where it is necessary to obtain an exact
solution for a given $\alpha$-\textsc{OWA-Winner} instance, and where
the brute-force algorithm from Proposition~\ref{pro:fixed-K} is too
slow. In such a case, it might be possible to use an integer linear
programming (ILP) formulation of the problem, given below.  We believe
that this ILP formulation is interesting in its own right and, in
particular, that it is interesting future work to experimentally
assess the size of instances for which it yields solutions in
reasonable amount of time.

\begin{theorem}\label{thm:ilp}
  \textsc{OWA-Winner} reduces to computing a solution for the following integer
  linear program.
{\upshape
  \begin{align*}
    &\text{minimize }    \sum_{i=1}^n \sum_{j=1}^m \sum_{k=1}^K \alpha_k u_{i, a_j} x_{i, j, k}  \\
    &\text{subject to: }\\  
    & \text{(a)}: \sum_{i=1}^m x_i = K                                                                          \ \\
    & \text{(b)}: x_{i, j, k} \leq x_j                                                                             \ &  \!\!\!\!\!\!\!\!, i \in [n]; j, k \in [K] \\
    & \text{(c)}: \sum_{j=1}^m x_{i, j, k} = 1                                                                     \ &, i \in [n]; k \in [K] \\
    & \text{(d)}: \sum_{k=1}^K x_{i, j, k} = 1                        
                                             \ &, i \in [n]; j \in [m] \\
    & \text{(e)}: \sum_{j=1}^m u_{i, a_j} x_{i, j, k} \geq \sum_{j=1}^m u_{i,
    a_j} x_{i, j, (k+1)}                                   \ &, i \in [n]; k \in [K-1] \\
    & \text{(f)}: x_{i, j, k} \in \{0, 1\}                                                                         \ &   \!\!\!\!\!\!\!\!, i \in [n]; j, k \in [K] \\
    & \text{(g)}: x_{j} \in \{0, 1\} \ &, j \in [m]
  \end{align*}
}
\end{theorem}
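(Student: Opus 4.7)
\noindent\textit{Proof proposal.}\quad The plan is to exhibit an explicit correspondence between feasible ILP solutions and candidate winner sets of $\alpha$-\textsc{OWA-Winner}, then argue that the objective values match. The intended interpretation of the variables is as follows: $x_j \in \{0,1\}$ is an indicator that item $a_j$ is placed in the winner set $W$, and $x_{i,j,k} \in \{0,1\}$ is an indicator that, from agent $i$'s point of view, item $a_j$ occupies position $k$ in the non-increasing rearrangement of the utilities of the items in $W$. Under this reading the objective $\sum_{i,j,k}\alpha_k u_{i,a_j} x_{i,j,k}$ is literally $\sum_i \alpha^{(K)}(u_{i,w_1},\ldots,u_{i,w_K})$ provided the ranks are assigned consistently with the sorted order.

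The first step is to verify that constraints (a)--(d) force the purported combinatorial structure: (a) picks a size-$K$ subset $W\subseteq A$; (b) confines any assignment of a position to items that actually lie in $W$; and (c) and (d) together make the map $k\mapsto a_j$ (for fixed $i$) a bijection between the $K$ positions and the $K$ selected items, i.e.\ a permutation $\sigma_i$ of $W$. (Here I read the indexing in (b) and (d) as $j \in [m]$ and the equality in (d) as the natural constraint $\sum_k x_{i,j,k} = x_j$ forced by a size-$K$ counting argument against (c); either reading yields the same permutation structure.) The second step is to observe that constraint (e), which enforces $\sum_j u_{i,a_j} x_{i,j,k} \geq \sum_j u_{i,a_j} x_{i,j,k+1}$, pins down $\sigma_i$ to be the non-increasing rearrangement of $(u_{i,a_j})_{a_j \in W}$ up to ties; in particular the sequence $(u_{i,\sigma_i(1)},\ldots,u_{i,\sigma_i(K)})$ coincides with $\vec{u}_i^{\downarrow}$ as defined in Section~\ref{sec:model}.

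Once the variable semantics are established, both directions of the reduction are immediate. Given an optimal $W = \{w_1,\ldots,w_K\}$ for $\alpha$-\textsc{OWA-Winner}, set $x_j = 1$ iff $a_j \in W$ and, for each agent $i$, pick any permutation $\sigma_i$ of $W$ that sorts utilities non-increasingly and set $x_{i,j,k} = 1$ iff $\sigma_i(k) = a_j$; one checks (a)--(g) hold and the objective equals $u^{\alpha^{(K)}}_{\util}(W)$. Conversely, from any feasible ILP solution, reading off $W = \{a_j : x_j = 1\}$ yields a set of size $K$ whose OWA-utility equals the ILP objective, by the sorting observation above. Consequently a solution to the ILP that optimizes the objective (with the natural correction that we maximize $u^{\alpha^{(K)}}_{\util}(W)$, i.e.\ the ``minimize'' in the statement is to be read as ``maximize'' to match Definition~\ref{def:owa-winner}) yields an optimal winner set.

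The only mildly subtle point — and the one I would be most careful about — is constraint (e). Without it, when $\alpha^{(K)}$ is not non-increasing the ILP could cheat by assigning large-weight positions to items the agent dislikes, producing an objective that does not correspond to any OWA value. Constraint (e) rules this out uniformly across all OWA families, so the reduction is correct for arbitrary $\alpha^{(K)}$, including the non-monotone families ($k$-median for $k>1$, Hurwicz) discussed in Section~\ref{sec:dictionary}. The remaining bookkeeping (number of variables $O(nmK)$, number of constraints $O(nmK)$, integrality of the coefficients, polynomial-time computability of the reduction) is routine.
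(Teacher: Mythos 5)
Your proposal is correct and follows essentially the same route as the paper's own proof: interpret $x_j$ as membership of $a_j$ in the winner set and $x_{i,j,k}$ as agent $i$'s rank-$k$ assignment, check that constraints (a)--(d) enforce a bijection between positions and selected items, and use constraint (e) to force the sorted order so that the objective equals $\sum_i \alpha^{(K)}(u_{i,w_1},\ldots,u_{i,w_K})$. You are also right to flag (and repair) the statement's typos (``minimize'' for ``maximize,'' the index ranges in (a), (b), (d)), and your emphasis on constraint (e) being essential for non-monotone OWAs matches the paper's remark, which additionally notes the converse fact that (e) is redundant for nonincreasing~$\alpha$.
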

\begin{proof}
  Consider an input instance with $n$ agents $N = [n]$ and $m$
  items $A = \{a_1, \ldots, a_m\}$, where we seek a winner set
  of size $K$, under OWA $\alpha = (\alpha_1, \ldots, \alpha_K)$. For
  each $i \in N$, $a_j \in A$, we write $u_{i,a_j}$ to denote the
  utility that agent $i$ derives from item $a_j$.

  We form an instance of ILP with the following variables: (1) For
  each $i \in N$, $j \in [m]$, and $k \in [K]$, there is an indicator
  variable $x_{i,j,k}$ (intuitively, we interpret $x_{i,j,k}=1$ to
  mean that for agent $i$, item $a_j$ is the $k$-th most
  preferred one among those selected for the solution).  (2) For each
  $j \in [m]$, there is an indicator variable $x_j$ (intuitively, we
  interpret $x_j=1$ to mean that $a_j$ is included in the
  solution). Given these variables (and assuming that we enforce their
  intuitive meaning), the goal of our ILP is to maximize the function
  $\sum_{i=1}^n \sum_{j=1}^m \sum_{k=1}^K \alpha_k u_{i, a_j} x_{i, j,
    k}$.  

  We require that our variables are indeed indicator
  variables and, thus, take values from the set $\{0,1\}$ only (constraints (f) and (g)).
  We requite that the variables of the form $x_{i,j,k}$ are internally
  consistent. (constraint (c) says that each agent ranks only one of
  the candidates from the solution as $k$-th best, constraint (d) say
  that there is no agent $i$ and item $a_j$ such that $i$ views
  $a_j$ as ranked on two different positions among the items
  from the solution.)
  Then, we require that variables of the form $x_{i,j,k}$ are
  consistent with those of the form $x_j$ (constraint (b)) and that exactly $K$
  items are selected for the solution (constraint (a)).

  Our final constraint, constraint (e), requires that variables
  $x_{i,j,k}$ indeed for each agent sort the items from the
  solution in the order of descending utility values. We mention that
  constraint (e) is necessary only for the case of OWAs $\alpha$ that
  are not-nonincreasing. For a nonincreasing $\alpha$, an optimal
  solution for our ILP already ensures the correct ``sorting''
  (otherwise our goal function would not be maximized).
\end{proof}

We should note that linear-programming formulations of OWA-based
optimization problems have appeared in the literature far before our
work; see, for example, the paper of Ogryczak and
\'Sliwinski~\cite{OgryczakS03}. Yet, we use the OWA operators in a
very different way and, thus, our approach is different. (In essence,
Ogryczak and \'Sliwi\'nski use an OWA operator to aggregate a number
of values, whereas we use a simple sum to aggregate the agents'
perceived utilities, but we compute these perceived utilities by
applying an OWA operator to each agent's individual, intrinsic
utilities.)

\section{Approximation: General Utilities and Approval Utilities}\label{sec:approx-general}
The \textsc{OWA-Winner} problem is particularly well-suited for
applications that involve recommendation systems (see, e.g., the work
of Lu and Boutilier~\shortcite{budgetSocialChoice} for a discussion of
$1$-best-OWA-Winner in this context). For recommendation systems it
often suffices to find good approximate solutions instead of perfect,
exact ones, especially if we only have estimates of agents' utilities.
It turns out that the quality of the approximate solutions that we can
produce for \textsc{OWA-Winner} very strongly depends on both the
properties of the particular family of OWAs used and on the nature of
agents' utilities.

First, we show that as long as our OWA is nonincreasing, a simple
greedy algorithm achieves $\left(1 - \frac{1}{e}\right)$ approximation ratio. This
result follows by showing that for a nonincreasing OWA $\alpha$, the
function $u^{\alpha}_\util$ (recall Definition~\ref{def:owa-winner}) is submodular and nondecreasing,
and by applying the famous result of Nemhauser et
al.~\shortcite{submodular}.

\newcommand{\utility}{\mathit{util}}
\newcommand{\gain}{\mathit{gain}}
\SetKwInput{KwNotation}{Notation}
\begin{algorithm}[t]
   \small
   \KwNotation{\\
   $\alpha_\ell \leftarrow$ input OWA operator $\alpha$, restricted to its top $\ell$ entries. \\
    }
   
   \hspace{1mm}\\
   \small
   \SetAlCapFnt{\small}
   $W \leftarrow \emptyset$\;
   \For{$\ell\leftarrow 1$ \KwTo $K$}{
      $\gain \leftarrow \{\}$\;
      \ForEach{$a \in A \setminus W$}{
          $\gain[a] \leftarrow u^{\alpha_\ell}_\util( W \cup \{a\} ) - u^{\alpha_{\ell-1}}_\util( W )$\;
      }
      $W \leftarrow W \cup \displaystyle\argmax_{a \in A \setminus W}( \gain[a] )$\;
   }
      \Return $W$\;

   \caption{\small The greedy algorithm for finding the utilitarian set of $K$ winners.}
   \label{alg:greedy}
\end{algorithm}

Recall that if $A$ is some set and $u$ is a function $u \colon 2^{A}
\rightarrow \realsplus$, then we say that: (1) $u$ is submodular if
for each $W$ and $W'$, $W \subseteq W' \subseteq A$, and each $a \in A
\setminus W'$ it holds that:
\[ 
  u(W \cup {a}) - u(W) \geq u(W' \cup {a}) - u(W'),
\] 
and (2) $u$ is nondecreasing if for each $W \subseteq A$ and
each $a \in A$ it holds that $u(W \cup \{a\}) \geq u(W)$.


\begin{lemma}\label{lemma:constOWA}
  Let $I$ be an instance of \textsc{OWA-Winner} with a nonincreasing
  OWA $\alpha$. The function $u^{\alpha}_{ut}$ is submodular and
  nondecreasing.
\end{lemma}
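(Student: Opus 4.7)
The plan is to prove both properties simultaneously by decomposing a nonincreasing OWA into a nonnegative combination of ``top-$t$-sum'' operators, for which submodularity and nondecreasingness are direct to verify. Concretely, set $\beta_t := \alpha_t - \alpha_{t+1}$ (with $\alpha_{K+1} := 0$); the nonincreasingness of $\alpha$ is used exactly once, to ensure $\beta_t \geq 0$. Swapping the order of summation in $\sum_s \alpha_s x_s^\downarrow = \sum_s x_s^\downarrow \sum_{t=s}^{K} \beta_t$ gives
\[
\OWA_\alpha(x_1^\downarrow,\ldots,x_K^\downarrow) \;=\; \sum_{t=1}^{K} \beta_t \sum_{s=1}^{t} x_s^\downarrow,
\]
so, summing over agents, $u^\alpha_\util(W) = \sum_{t=1}^{K} \beta_t\, g_t(W)$, where $g_t(W) := \sum_{i=1}^{n} \max\bigl\{ \sum_{a \in S} u_{i,a} : S \subseteq W,\; |S| \leq t \bigr\}$ aggregates each agent's best-$t$ utility. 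Since the class of nondecreasing submodular set functions is closed under nonnegative linear combinations, it suffices to show that every $g_t$ is nondecreasing and submodular.

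For this reduction I would fix an agent $i$ and analyze the single-agent function $h_t(W) := \max\bigl\{ \sum_{a \in S} u_{i,a} : S \subseteq W,\; |S| \leq t \bigr\}$; the aggregate $g_t = \sum_i h_t$ then inherits both properties automatically. Nondecreasingness of $h_t$ is immediate, since enlarging $W$ only enlarges the family of feasible sets. For submodularity, let $\tau(X)$ be the $t$-th largest value of $u_{i,\cdot}$ on $X$ when $|X|\geq t$, and set $\tau(X) := 0$ otherwise; a direct case analysis shows that the marginal gain $h_t(X \cup \{a\}) - h_t(X)$ for $a \notin X$ equals $u_{i,a}$ when $|X| < t$ and equals $\max(0,\, u_{i,a} - \tau(X))$ when $|X| \geq t$. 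Both expressions are nonincreasing functions of $\tau(X)$, and $\tau$ itself is monotone in $X$; for $W \subseteq W'$ with $a \notin W'$ this gives $h_t(W \cup \{a\}) - h_t(W) \geq h_t(W' \cup \{a\}) - h_t(W')$, which is exactly submodularity.

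Chaining the pieces together is routine: both properties pass through the sum over agents, and then through the nonnegative combination $\sum_t \beta_t g_t$, delivering them for $u^\alpha_\util$. I do not anticipate a serious obstacle; the only mild bookkeeping point is the boundary case $|W| < t \leq |W'|$, where the gain on $W$ is $u_{i,a}$ while the gain on $W'$ is at most $u_{i,a}$, so the inequality still holds. Conceptually, the proof reveals that a nonincreasing OWA objective is, structurally, a nonnegative combination of rank-$t$ uniform-matroid weight functions, which are standard examples of submodular nondecreasing set functions; this matroidal viewpoint also makes clear why dropping the nonincreasingness hypothesis would break the argument (some $\beta_t$ would become negative, and $-g_t$ is in general neither submodular nor supermodular).
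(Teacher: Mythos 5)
Your proof is correct and takes essentially the same route as the paper's: the same telescoping decomposition with coefficients $\beta_t = \alpha_t - \alpha_{t+1} \geq 0$ expressing $u^{\alpha}_{\mathit{util}}$ as a nonnegative combination of per-agent top-$t$-sum ($t$-best OWA) functions, followed by the same case analysis of the marginal gain against the $t$-th largest utility in the set. The differences (your explicit $\tau$ notation, folding the $\alpha_K$ term in via $\alpha_{K+1}:=0$, and the closing matroid remark) are purely cosmetic.
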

\begin{proof}
  Let $I$ be an instance of \textsc{OWA-Winner} with agent set $N =
  [n]$, item set $A = \{a_1, \ldots, a_m\}$, desired solution
  size $K$, and OWA $\alpha = \langle \alpha_1, \ldots,
  \alpha_K\rangle$. For each agent $i \in N$ and each item $a_j
  \in A$, $u_{i,a_j}$ is a nonnegative utility that $i$ derives from
  $a_j$.

  Since all the utilities and all the entries of the OWA vector are
  nonnegative, we note that $u^\alpha_\util$ is nondecreasing.  
  To show submodularity, we decompose $u^\alpha_\util$ as follows:
  \begin{align*}
  u^\alpha_\util(W) = \sum_{\ell=1}^{K-1}(\alpha_\ell -
  \alpha_{\ell+1})u^{\ell\text{-best-OWA}}_\util(W) + \alpha_Ku^{K\text{-best-OWA}}_\util(W)
  \end{align*}
  For each $W \subseteq A$, $i \in N$ and $\ell \in [m]$, let $\Top(W,
  i, \ell)$ be the set of those $\ell$ items from $W$ whose
  utility, from the point of view of agent $i$, is highest (we break
  ties in an arbitrary way).  Since nonnegative linear combinations of
  submodular functions are submodular, it suffices to prove that for
  each $i \in N$ and each $\ell \in [m]$, function $u_i^{\ell}(W) =
  \sum_{{w \in \Top(W, i, \ell)}} u_{i, w}$ is submodular.

  To show submodularity of $u_i^{\ell}$, consider two sets, $W$ and
  $W'$, $W \subseteq W' \subseteq A$, and some $a \in A \setminus W'$.
  We claim that:
  \begin{align}\label{ineq::submodularity}
    u_i^{\ell}(W \cup \{a\}) - u_i^{\ell}(W) \geq u_i^{\ell}(W' \cup
    \{a\}) - u_i^{\ell}(W') \textrm{.}
  \end{align}
  Let $u_{W}$ and $u_{W'}$ denote the utilities that the $i$-th agent
  has for the $\ell$-th best items from $W$ and $W'$,
  respectively (or $0$ if a given set has fewer than $\ell$
  elements). Of course, $u_{W'} \geq u_{W}$. Let $u_a$ denote $i$-th
  agent's utility for $a$. We consider two cases. If $u_a \leq u_{W}$,
  then both sides of \eqref{ineq::submodularity} have value
  0. Otherwise:
  \begin{align*}
    u_i^{\ell}(W' \cup \{a\}) - u_i^{\ell}(W') &= \max(u_a - u_{W'}, 0) \\
    u_i^{\ell}(W \cup \{a\}) - u_i^{\ell}(W) &= u_a - u_{W} \textrm{,}
  \end{align*}
  which proves \eqref{ineq::submodularity} and completes the
  proof.
\end{proof}

Based on the above result, we can easily show that
Algorithm~\ref{alg:greedy} is a polynomial time $(1 -
1/e)$-approximation for the \textsc{OWA-Winner} problem, for the case
of nonincreasing OWA vectors (see Theorem~\ref{thm:greedyAprox}
below). Algorithm~\ref{alg:greedy} is a natural incarnation of the
greedy algorithm of Nemhauser et al.~\shortcite{submodular}.  It
starts by setting the found-so-far solution $W$ to be empty.  Then, in
each iteration it extends $W$ by adding this item that causes the
greatest increase in the utility.

\begin{example}\label{exjl2}
  Let the items and agents be as in Example~\ref{exjl}. Let $K = 3$
  and consider OWA vector $\alpha = (2,1,0)$. Throughout the $K = 3$
  iterations, we obtain the following gain values (the contents of
  $W$ are given at the beginning of each iteration; below we also explain
  some of the computation):
  \[
  \setlength{\arraycolsep}{3pt}
  \begin{array}{c|cc|cccccc}
             & W & u_\util^{\alpha_{\ell-1}}(W) & \gain(a_1) & \gain(a_2) & \gain(a_3) & \gain(a_4) & \gain(a_5) & \gain(a_6)\\ 
    \hline
    \text{Iter.\ 1} & \emptyset & 0 &48 & 30 & 38 & 48 & 46 & 34\\  
    \text{Iter.\ 2} & \{a_1\} & 48 & - & 15 & 21 & 28 & 27  & 21 \\  
    \text{Iter.\ 3} & \{a_1,a_4\} & 76 & - & 2 & 7 & - & 8 & 5 
  \end{array}
  \]
  At the beginning of the first iteration $W = \emptyset$ and the
  algorithm simply computes the utility of each item separately, using
  OWA operator $\alpha_1 = \langle 2 \rangle$. For example,
  $u_\util^{\alpha_{1}}(\{a_1\}) = 2\cdot( 10 + 6 + 8 ) = 48$.  In the
  first iteration both $a_1$ and $a_4$ lead to the highest gain and,
  so, the algorithm is free to pick either of them. We assume it picks
  $a_1$.  In the second iteration, we have $W = \{a_1\}$ and, for
  example, the gain value for $a_4$ is computed as:
\[
u_\util^{\alpha_{2}}(\{a_1,a_4\}) - u_\util^{\alpha_{1}}(\{a_1\}) =
2\cdot(10+10+8) + (8+6+6) -48 = 76-48 = 28.
\]
It is the highest gain value and so the algorithm includes $a_4$ in
the solution. In the third iteration, item $a_5$ has the highest gain
and so the algorithm includes it in $W$. Finally, the algorithm
outputs $W = \{a_1,a_4, a_5\}$.
\end{example}

\begin{theorem}\label{thm:greedyAprox}
  For a nonincreasing OWA $\owa$, Algorithm~\ref{alg:greedy} is a
  polynomial time $(1 - 1/e)$-approximation algorithm for the problem
  of finding the utilitarian set of $K$ winners.
\end{theorem}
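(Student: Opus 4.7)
The plan is to recognize Algorithm~\ref{alg:greedy} as an instance of the classical greedy algorithm of Nemhauser et al.~\cite{submodular} for maximizing a monotone submodular set function subject to a cardinality constraint, and then to apply their $(1-1/e)$-approximation guarantee as a black box. The algorithm's slightly unusual gain rule, which changes the OWA length between iterations, turns out to be exactly the standard marginal gain of a natural monotone submodular extension of $u^{\alpha}_\util$ to subsets of size smaller than $K$.

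Concretely, I would first define, for every subset $W \subseteq A$ with $|W| \leq K$, the extension $f(W) := u^{\alpha_{|W|}}_\util(W)$. Because the utilities are nonnegative and $\alpha$ is nonincreasing, padding each agent's utility vector with $K - |W|$ zeros and then applying the full OWA $\alpha^{(K)}$ assigns the top entries of $\alpha$ to the actual utilities and the remaining entries to zeros; this yields exactly $u^{\alpha_{|W|}}_\util(W)$. Hence $f$ is a consistent extension of $u^{\alpha}_\util$ (which coincides with $u^{\alpha}_\util$ when $|W|=K$), and the algorithm's quantity
\[
u^{\alpha_\ell}_\util( W \cup \{a\} ) - u^{\alpha_{\ell-1}}_\util( W ) \;=\; f(W \cup \{a\}) - f(W)
\]
is nothing but the standard marginal value of $f$ at $W$.

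Next I would establish that $f$ is nondecreasing and submodular on the whole Boolean lattice of subsets of $A$ of size at most $K$. Lemma~\ref{lemma:constOWA} already proves this for $u^{\alpha}_\util$, and its proof uses only (i) the nonincreasing decomposition $u^{\alpha}_\util = \sum_{\ell=1}^{K-1}(\alpha_\ell - \alpha_{\ell+1})\, u^{\ell\text{-best-OWA}}_\util + \alpha_K\, u^{K\text{-best-OWA}}_\util$ with nonnegative coefficients, together with (ii) the per-agent submodularity of the function $u_i^{\ell}(W) = \sum_{w \in \Top(W,i,\ell)} u_{i,w}$, which is the classical top-$\ell$ sum and is submodular regardless of $|W|$. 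Both ingredients apply verbatim to $f$, so the extension remains monotone and submodular.

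Having placed the problem in the classical framework, I would invoke the Nemhauser--Wolsey--Fisher theorem: the greedy algorithm that picks, at each step, an element maximizing the marginal gain achieves at least a $(1-1/e)$ fraction of the optimum value of $\max\{f(W) : |W| = K\}$. Since $f(W) = u^{\alpha}_\util(W)$ whenever $|W|=K$, this is exactly the $(1-1/e)$-approximation guarantee we want. Polynomial running time is immediate by inspection of Algorithm~\ref{alg:greedy}: the outer loop runs $K$ times, the inner loop scans at most $m$ items, and each gain evaluation requires only sorting $n$ utility vectors of length at most $K$ and taking inner products with the truncated OWA vector. I expect no real obstacles; the only nontrivial conceptual step is recognising that the truncated-OWA gain equals the marginal of a submodular extension, after which the result falls out of the black-box application of~\cite{submodular}.
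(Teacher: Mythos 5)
Your proof is correct and follows essentially the same route as the paper: Lemma~\ref{lemma:constOWA}'s decomposition of the objective into nonnegatively weighted $\ell$-best functions gives monotonicity and submodularity, and the $(1-1/e)$ bound is then the black-box guarantee of Nemhauser et al.~\cite{submodular}. The only place you go beyond the paper's one-sentence proof is in making explicit the extension $f(W)=u^{\alpha_{|W|}}_\util(W)$ to sets of size below $K$ (via zero-padding) and verifying that the algorithm's truncated-OWA gain is exactly the marginal value of this monotone submodular extension---a detail the paper leaves implicit but which is needed for the framework to apply verbatim.
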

\begin{proof}
  The thesis follows from the results of Nemhauser et
  al.~\shortcite{submodular} on approximating nondecreasing submodular
  functions.
\end{proof}

Algorithm~\ref{alg:greedy} has interesting interpretation in the
context of voting systems. This greedy algorithm can be viewed not
only as an approximation algorithm, but also as a new iterative voting
rule. Indeed, many popular voting rules are defined as iterative
(greedy) algorithms. Such rules are not only polynomially solvable,
but also are easier to understand for the society. Further,
Caragiannis et al.~\cite{car-kak-kar-pro:j:dodgson-acceptable} and,
later, Elkind~et~al.~\cite{elk-fal-sko-sli:c:multiwinner-rules},
advocate viewing approximation algorithms for computationally hard
voting rules as new election systems, and study their axiomatic
properties (often showing that they are better than those of the
original rules). 

Here we give another interesting observation. It turns out that the
algorithm from Theorem~\ref{thm:greedyAprox}, when applied to the case
of approval-based utilities and the harmonic OWA, is simply the winner
determination procedure for the \emph{Sequential Proportional Approval
  Voting} rule~\cite{RePEc:pra:mprapa:22709} (developed by the Danish
astronomer and mathematician Thorvald~N.~Thiele, and used for a short
period in Sweden during early 1900's). That is, the Sequential
Proportional Approval Voting rule is simply an approximation of the
PAV rule (the Proportional Approval Voting rule). We believe that this
observation gives another evidence that approximation algorithms for
computationally hard voting rules can indeed be viewed as new
full-fledged voting rules. (We point readers interested in
approval-based multiwinner voting rules to the overview of
Kilgour~\cite{pavVoting} and to the works of Aziz et
al.~\cite{azi-gas-gud-mac-mat-wal:c:approval-multiwinner,azi-bri-con-elk-fre-wal:c:justified-representation},
Elkind and Lackner~\cite{elk-lac:c:dichotomous-prefs}, and Skowron and
Faliszewski~\cite{sko-fal:t:maxcover}).

Is a $(1-\frac{1}{e})$-approximation algorithm a good result? After
all, $1-\frac{1}{e} \approx 0.63$ and so the algorithm guarantees only
about 63\% of the maximum possible satisfaction for the
agents. Irrespective if one views it as sufficient or not, this is the
best possible approximation ratio of a polynomial-time algorithm for
(unrestricted) \textsc{OWA-Winner} with a nonincreasing OWA. The
reason is that $1$-best-OWA-Winner with approval-based utilities is,
in essence, another name for the \textsc{MaxCover} problem, and if $\p
\neq \np$, then $(1-\frac{1}{e})$ is approximation upper bound for
\textsc{MaxCover}~\cite{Feige:1998:TLN:285055.285059}. We omit the
exact details of the connection between \textsc{MaxCover} and
$1$-best-OWA-Winner and instead we point the readers to the work of
Skowron and Faliszewski~\shortcite{sko-fal:t:maxcover} who discuss
this point in detail (we mention that they refer to what we call
$1$-best-OWA-Winner as winner determination for Chamberlin--Courant's
voting rule).

For OWAs that are not nonincreasing, it seems that we cannot even hope
for a $(1-\frac{1}{e})$-approximation algorithm. There are two
arguments to support this belief. First, such OWAs yield utility
functions that are not necessarily submodular and, so, it is
impossible to apply the result of Nemhauser et al.~\cite{submodular}.
As an example, we show that $2$-med-OWA yields a utility function that
is not submodular.

\begin{example}\label{example:non-sub}
  Let us consider a single agent, two sets of items $W = \{c,
  d\}$ and $W' = \{b, c, d\}$ (of course $W \subset W'$), and
  $2\text{-}\mathrm{med}$-OWA $\alpha$. The utilities of the agent over the
  items $a$, $b$, $c$, and $d$ are equal to 10, 9, 2, and 1,
  respectively. We get:
\begin{align*}
& u_\util^{\alpha}(W \cup \{a\}) - u_\util^{\alpha}(W) = 2 - 1 = 1, 
& u_\util^{\alpha}(W' \cup \{a\}) - u_\util^{\alpha}(W') = 9 - 2 = 7 \textrm{.} 
\end{align*}
That is, $u_\util^{\alpha}$ is not submodular. Indeed, this example
works even for approval-based utilities: it suffices to set the
utilities for $a$ and $b$ to be $1$, and for $c$ and $d$ to be $0$.
\end{example}

Second, it is quite plausible that there are no constant-factor
approximation algorithms for many not-nonincreasing OWAs.  As an
example, let us consider the case of families of OWAs with the
following structure: their first entries are zeros followed by some
nonzero entry at a sufficiently early position. If there were a good
approximation algorithm for winner determination under such OWAs, then
there would be a good approximation algorithm for the
\textsc{Densest-K-Subgraph} problem, which seems unlikely.

\begin{definition}
  In a \textsc{Densest-k-Subgraph} problem we are given an undirected
  graph $G = (V, E)$ and a positive integer $K$. We ask for a subgraph
  $S$ with $K$ vertices with the maximal number of edges.
\end{definition}

\begin{theorem}\label{thm:dks}
  Fix some integer $\ell$, $\ell \geq 2$.  Let $\alpha$ be a family of
  OWAs such that each OWA in the family (for at least $\ell$ numbers)
  has $0$s on positions $1$ through $\ell-1$, and has a nonzero value
  on the $\ell$'th position. If there is a polynomial-time
  $x(n)$-approximation algorithm for $\alpha$-\textsc{OWA-Winner} then
  there is a polynomial-time $x(n)$-approximation algorithm for the
  \textsc{Densest-k-Subgraph} problem.
\end{theorem}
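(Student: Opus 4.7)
The plan is to give an approximation-preserving polynomial-time reduction from \textsc{Densest-k-Subgraph} to $\alpha$-\textsc{OWA-Winner}. Given an instance $(G,K)$ of \textsc{Densest-k-Subgraph} with $G=(V,E)$ and $|V|=n$, I build an instance of $\alpha$-\textsc{OWA-Winner} with items $V \cup D$, where $D = \{d_1,\ldots,d_{\ell-2}\}$ is a set of $\ell-2$ fresh dummy items; agents $N = E$ (one per edge); and approval-based utilities in which each edge-agent $e = \{u,v\}$ assigns utility $1$ to the $\ell$ items $u,v,d_1,\ldots,d_{\ell-2}$ and utility $0$ to everything else. The target winner-set size is $K' = K + \ell - 2$.

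The central claim that drives the reduction is that, because $\alpha_1 = \cdots = \alpha_{\ell-1} = 0$ while $\alpha_\ell > 0$, an edge-agent $e=\{u,v\}$ can contribute positively to the OWA objective only if at least $\ell$ of its utility-$1$ items lie in $W$. As $e$ has exactly $\ell$ utility-$1$ items in total, this forces both $D \subseteq W$ and $\{u,v\} \subseteq W$; in that case the contribution equals exactly $\alpha_\ell$ (the top $\ell$ values are all $1$, weighted by $\alpha_1,\ldots,\alpha_\ell$, which sum to $\alpha_\ell$; the remaining selected items add $0$). Writing $S = W \cap V$, this yields
\[
  u^{\alpha}_{\util}(W) \;=\; \begin{cases} \alpha_\ell \cdot |E[S]| & \text{if } D \subseteq W, \\ 0 & \text{otherwise,}\end{cases}
\]
where $E[S]$ denotes the edge set of the subgraph induced by $S$. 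In particular, $\OPT_{\mathrm{OWA}} = \alpha_\ell \cdot \OPT_{\mathrm{DkS}}$.

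Approximation preservation then follows directly. In the nontrivial case $\OPT_{\mathrm{DkS}} \geq 1$, any output $W^*$ of value at least $x(n)\cdot\OPT_{\mathrm{OWA}} > 0$ must satisfy $D \subseteq W^*$ (otherwise its value would be $0$), so $W^* = D \cup S^*$ with $|S^*|=K$ and $|E[S^*]| \geq x(n)\cdot\OPT_{\mathrm{DkS}}$, giving the claimed $x(n)$-approximation for \textsc{Densest-k-Subgraph}; the trivial case $\OPT_{\mathrm{DkS}} = 0$ is handled by returning an arbitrary $K$-subset of $V$. Since the reduction enlarges the input only polynomially (the number of items grows by $\ell - 2$ and the number of agents is $|E| \leq \binom{n}{2}$), the ratio $x(n)$ transfers with at most polynomial slack in the input size.

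The one point that needs care is the possible interference of the unconstrained coefficients $\alpha_{\ell+1},\ldots,\alpha_{K'}$, which the hypothesis leaves entirely free and could a priori perturb the clean equality above. This worry is dispelled by the choice of utilities: each edge-agent owns at most $\ell$ utility-$1$ items, so positions past $\ell$ always contribute $\alpha_i \cdot 0 = 0$ regardless of how the later weights are set. This is precisely what makes the objective collapse onto $\alpha_\ell \cdot |E[S]|$ whenever all the dummies have been selected, and is the key feature enabling the approximation guarantee to transfer.
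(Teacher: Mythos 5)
Your proposal is correct and follows essentially the same route as the paper's proof: the identical reduction (edge-agents, $\ell-2$ dummy items with universal utility $1$, target size $K+\ell-2$), the same key observation that an edge-agent contributes exactly $\alpha_\ell$ precisely when both endpoints and all dummies are selected and $0$ otherwise, and the same transfer of the approximation ratio via division by $\alpha_\ell$. Your write-up is in fact slightly more explicit than the paper's on two minor points---why the unconstrained coefficients $\alpha_{\ell+1},\ldots$ cannot interfere, and how to handle the degenerate case where the optimum has no edges---but these are refinements of, not departures from, the published argument.
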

We should mention that Theorem~\ref{thm:dks} holds for a somewhat more
general class of OWAs than stated explicitly. The proof relies on the
fact that the first entry of the OWA is zero and that after the first
non-zero entry of the OWA there are still $K-1$ positions, where $K$
is the parameter from the input \textsc{Densest-K-Subset}
instance.\medskip

\begin{proof}[Proof of Theorem~\ref{thm:dks}]
  Let $I$ be an instance of the \textsc{Densest-K-Subgraph} problem
  with graph $G = (V, E)$ and positive integer $K$.  From $I$ we
  construct an instance $I'$ of $\alpha$-\textsc{OWA-Winner}, where
  the set of agents $N$ is $E$, the set of items is $A = V \cup
  \{d_1, \ldots, d_{\ell-2}\}$ (or $V$ if $\ell = 2$), and we seek a
  winner set of size $K+\ell-2$. Agents utilities are set as follows:
  For each agent $e$ and each item $d_j$, $1 \leq j \leq
  \ell-2$, the utility of $e$ for $d_j$ is $1$. If $e$ is an edge in
  $G$ than connects vertices $u$ and $v$ then agent $e$'s utility for
  $u$ and $v$ is $1$ and is $0$ for the remaining items from
  $V$.

  It is easy too see that the items $d_1, \ldots, d_{\ell-2}$
  all belong to every optimal solution for $I'$. It is also easy to
  see that in each optimal solution the utility of each agent $e$ is
  nonzero (and exactly equal to $\alpha_{\ell}$, the $\ell$-th entry
  of the OWA $\alpha$ used) if and only if both items
  corresponding to the vertices connected by $e$ are included in the
  solution. Thus the total utility of every optimal solution for $I'$
  is equal to $\alpha_{\ell}$ times the number of edges that connect
  any two vertices corresponding to the items from the
  solution.

  Let $\mathcal{A}$ be a polynomial-time $x(n)$-approximation algorithm
  for $\alpha$-\textsc{OWA-Winner}. If $\mathcal{A}$,
  returns a solution $S$ for $I'$ with none-zero utility, then the items $d_1, \ldots, d_{\ell-2}$
  all belong to $S$. Let us take the vertices corresponding to the
  items $S \setminus \{d_1, \ldots, d_{\ell-2}\}$. The number of the edges connecting
  these vertices is equal to the total utility of $S$ divided by $\alpha_{\ell}$.
  Thus, from $x(n)$-approximation solution for $I'$ we can extract an
  $x(n)$-approximation solution for $I$. This completes the proof.
\end{proof}

It seems that the \textsc{Densest-k-Subgraph} is not easy to
approximate. Khot~\shortcite{Khot_rulingout} ruled out the existence
of a PTAS for the problem under standard complexity-theoretic
assumptions, Bhaskara~et~al.~\shortcite{conf/soda/BhaskaraCVGZ12}
showed polynomial integrality gap, Raghavendra and
Steurer~\shortcite{conf/stoc/RaghavendraS10} and
Alon~et~al.~\shortcite{techreport/inaaproxDkS} proved that there is no
polynomial-time constant approximation under non-standard
assumptions. Finally, the best approximation algorithm for the problem
that we know of, due to
Bhaskara~et~al.~\shortcite{conf/stoc/BhaskaraCCFV10}, has
approximation ratio $O(n^{1/4 + \epsilon})$, where $n$ is the number
of vertices in the input graph.

As a further evidence that OWAs that are not nonincreasing are
particularly hard to deal with from the point of view of approximation
algorithms, we show that for an extreme example of an OWA family,
i.e., for the $K$-med OWAs, there is a very strong
hardness-of-approximation result. We start from the following graph
problem.

\begin{definition}
  In the \textsc{Maximum Edge Biclique Problem} (\textsc{MEBP}) we are
  given a balanced bipartite graph $(U \cup V, E)$ where $U \cup V$ is
  the set of vertices ($\|U\| = \|V\|$) and $E$ is the set of edges
  (there are edges only between the vertices from $U$ and $V$). We ask
  for a biclique (i.e., a subgraph $S$, such that every vertex from $U
  \cap S$ is connected with every vertex from $V \cap S$) with as many
  edges as possible.
\end{definition}

According to Feige~and~Kogan~\shortcite{Feige04hardnessof}, there exists a
constant $c$ such that there is no polynomial $(2^{c\sqrt{\lg
    n}}/n)$-approximation algorithm for \textsc{MEBP} unless for some
$\epsilon$ we have $\textsc{3-SAT} \in \mathrm{DTIME}(2^{n^{3/4 +
    \epsilon}})$. Currently it seems unlikely that such an algorithm
for \textsc{3-SAT} exists.
For our argument it is more convenient to
define and use the following variant of \textsc{MEBP}.

\begin{definition}
  In \textsc{MEBP-V} we are given the same input as in \textsc{MEBP}
  and a positive integer $K$. We ask for a biclique $S$ such that $\|S
  \cap V\| = K$ and $S$ contains as many edges as possible.
\end{definition}

\begin{lemma}\label{lemma:MEBP-V}
  There exists a constant $c$ such that there is no polynomial-time
  $(2^{c\sqrt{\lg n}}/n)$-approximation algorithm for \textsc{MEBP-V}
  unless for some $\epsilon$ we have $\textsc{3-SAT} \in
  \mathrm{DTIME}(2^{n^{3/4 + \epsilon}})$.
\end{lemma}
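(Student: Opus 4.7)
The plan is to give an approximation-preserving reduction from \textsc{MEBP} to \textsc{MEBP-V} so that the Feige--Kogan lower bound transfers. The observation that drives the reduction is trivial: \textsc{MEBP-V} is just \textsc{MEBP} with the $V$-side cardinality of the biclique pinned down in advance, and any optimal \textsc{MEBP} biclique $S^*$ has some particular value $K^* = \|S^* \cap V\| \in \{1,\ldots,n\}$ on the $V$ side. So I would run a hypothetical $(2^{c\sqrt{\lg n}}/n)$-approximation algorithm $\mathcal{A}$ for \textsc{MEBP-V} on the \textsc{MEBP} instance $(U \cup V, E)$ for every choice of $K \in \{1,\ldots,n\}$, and return the biclique with the most edges among the $n$ solutions obtained.

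First, I would formally define the reduction: given an \textsc{MEBP} instance with $\|U\|=\|V\|=n$, construct the $n$ \textsc{MEBP-V} instances sharing the same graph but with parameter $K = 1, 2, \ldots, n$, feed each to $\mathcal{A}$, and output the best result. The running time is clearly polynomial since we invoke $\mathcal{A}$ only $n$ times. Second, I would argue approximation preservation. Let $\OPT_{\textsc{MEBP}}(I)$ be the number of edges of the optimal biclique in $I$, and let $K^*$ be its $V$-side cardinality. When we run $\mathcal{A}$ with parameter $K^*$, the optimal \textsc{MEBP-V} solution has at least $\OPT_{\textsc{MEBP}}(I)$ edges (since the optimal \textsc{MEBP} biclique itself is a feasible \textsc{MEBP-V} solution for $K=K^*$). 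Hence $\mathcal{A}$ returns a biclique with at least $(2^{c\sqrt{\lg n}}/n) \cdot \OPT_{\textsc{MEBP}}(I)$ edges on this particular call, and taking the maximum over all $n$ calls preserves this lower bound.

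Combining, we obtain a polynomial-time $(2^{c\sqrt{\lg n}}/n)$-approximation for \textsc{MEBP}, and the Feige--Kogan result cited in the paper immediately yields the claimed conditional lower bound: such an algorithm would place \textsc{3-SAT} in $\mathrm{DTIME}(2^{n^{3/4+\epsilon}})$ for some $\epsilon$. Since $n$ is the same parameter (number of vertices on one side) in both problems, no change in the ratio is needed.

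I do not expect any serious obstacle here; the only point requiring mild care is checking that the size parameter $n$ used in the approximation factor refers to the same quantity in \textsc{MEBP} and \textsc{MEBP-V} (it does, since the underlying graph is unchanged by the reduction), and that trying all values of $K$ is legitimate because $K$ is bounded by $n$. If one wanted to avoid even this mild blow-up, one could instead guess $K^*$ non-deterministically, but the straightforward enumeration keeps everything polynomial-time and deterministic.
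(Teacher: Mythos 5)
Your proposal is correct and follows essentially the same route as the paper: the paper's proof likewise assumes a $(2^{c\sqrt{\lg n}}/n)$-approximation algorithm for \textsc{MEBP-V}, runs it for every $K$ from $1$ to $\|V\|$, and observes that the best of these outputs is a $(2^{c\sqrt{\lg n}}/n)$-approximation for \textsc{MEBP}, contradicting the Feige--Kogan bound. Your write-up merely makes explicit the details (feasibility of the optimal \textsc{MEBP} biclique at $K = K^*$, and preservation of the parameter $n$) that the paper leaves implicit.
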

\begin{proof}
  For the sake of contradiction, let us assume that there exists a
  constant $c$ and a polynomial-time $(2^{c\sqrt{\lg
      n}}/n)$-approximation algorithm $\mathcal{A}$ for
  \textsc{MEBP-V}. By running $\mathcal{A}$ for every value of $K$
  ranging from $1$ to $\|V\|$, we obtain a polynomial-time
  $(2^{c\sqrt{\lg n}}/n)$-approximation algorithm for
  \textsc{MEBP}. This stays in contradiction with the result of
  Feige~and~Kogan~\shortcite{Feige04hardnessof}.
\end{proof}

\begin{theorem}\label{thm:mebp-bounded}
  There exists a constant $c$ such that there is no polynomial-time
  $(2^{c\sqrt{\lg n}}/n)$-approximation algorithm for
  $K\text{-}\mathrm{med}$-\textsc{OWA-Winner} unless for some $\epsilon$ we
  have $\textsc{3-SAT} \in \mathit{DTIME}(2^{n^{3/4 + \epsilon}})$.
\end{theorem}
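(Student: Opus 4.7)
The plan is to give an approximation-preserving reduction from \textsc{MEBP-V} to $K\text{-}\med$-\textsc{OWA-Winner}, and then invoke Lemma~\ref{lemma:MEBP-V}. The guiding intuition is that the $K$-median OWA, whose vector is $(0,\ldots,0,1)$ with the $1$ in the last position, evaluates any $K$-subset at each agent by taking the \emph{minimum} of the intrinsic utilities; so with $0/1$ utilities each agent contributes $1$ exactly when \emph{all} chosen items are ``good'' for that agent. This is precisely the biclique condition.

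First I would set up the reduction. Given an \textsc{MEBP-V} instance $((U \cup V, E), K)$ with $|U|=|V|=:t$, I construct an \textsc{OWA-Winner} instance by letting $N := U$ be the set of agents and $A := V$ be the set of items, with approval-based utilities $u_{i,v} = 1$ if $(i,v) \in E$ and $u_{i,v}=0$ otherwise, and with committee size $K$. This transformation is clearly polynomial-time, and both instances have size polynomial in $t$, so the respective parameters $n$ differ only by a polynomial factor; this will matter only for adjusting the constant $c$ at the end.

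Next I would establish the key correspondence. Fix any $W \subseteq V$ with $|W|=K$. Under $K\text{-}\med$, agent $i$'s contribution to $u^{\alpha}_{\util}(W)$ equals $\min_{v \in W} u_{i,v}$, which is $1$ if $i$ is adjacent to every vertex in $W$ and $0$ otherwise. Writing $S_W := \{i \in U : i \text{ is adjacent to every } v \in W\}$, we obtain
\[
u^{\alpha}_{\util}(W) \;=\; |S_W|,
\]
and the subgraph $S_W \cup W$ is a biclique with exactly $K \cdot |S_W|$ edges and with $|W\cap V|=K$. Conversely, every feasible biclique $S$ for \textsc{MEBP-V} with $|S \cap V|=K$ yields a $K$-subset $W := S \cap V$ with $u^{\alpha}_{\util}(W) \geq |S \cap U|$ (the inequality being equality if we take the largest $U$-side extending $W$). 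Hence the optimum of the \textsc{OWA-Winner} instance equals the optimum of \textsc{MEBP-V} divided by $K$, and the same rescaling by the factor $K$ holds for any candidate solution.

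Finally, I would translate approximation ratios. If $\mathcal{A}$ is a polynomial-time $(2^{c\sqrt{\lg n}}/n)$-approximation for $K\text{-}\med$-\textsc{OWA-Winner}, then running $\mathcal{A}$ on the constructed instance yields $W$ with $u^{\alpha}_{\util}(W) \geq (2^{c\sqrt{\lg n}}/n)\cdot \OPT_{\text{OWA}}$; taking the corresponding biclique gives $K \cdot u^{\alpha}_{\util}(W) \geq (2^{c\sqrt{\lg n}}/n)\cdot \OPT_{\text{MEBP-V}}$ edges, since the factor $K$ cancels on both sides. Because the two instance sizes are polynomially related, the ratio $2^{c\sqrt{\lg n}}/n$ becomes another ratio of the same form with an adjusted constant $c'$. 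By Lemma~\ref{lemma:MEBP-V}, the existence of such an algorithm would place \textsc{3-SAT} in $\mathrm{DTIME}(2^{n^{3/4+\epsilon}})$ for some $\epsilon$, contradicting the assumption. The only mildly delicate step is the bookkeeping in this last paragraph, namely verifying that rescaling the approximation ratio by the polynomial change of the parameter $n$ keeps it within the same family $2^{c'\sqrt{\lg n}}/n$; this is straightforward since $\sqrt{\lg n}$ is preserved up to a constant factor by polynomial substitutions.
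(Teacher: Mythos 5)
Your proposal is correct and follows essentially the same route as the paper: the identical reduction (agents $=U$, items $=V$, adjacency-based $0/1$ utilities, committee size $K$), the same observation that under $K\text{-}\med$ each agent contributes $1$ exactly when it is adjacent to all of $W$ (so solution values correspond up to the factor $K$, which cancels in the ratio), and the same final appeal to Lemma~\ref{lemma:MEBP-V}. Your extra bookkeeping about the parameter $n$ being polynomially preserved is a harmless refinement the paper leaves implicit.
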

\begin{proof}
  Let us assume that there is a constant $c$ and a polynomial-time
  $(2^{c\sqrt{\lg n}}/n)$-approximation algorithm $\mathcal{A}$ for
  $K$-med-\textsc{OWA-Winner}. We will show that we can use
  $\mathcal{A}$ to solve instances of \textsc{MEBP-V} with the same
  approximation ratio. By Lemma~\ref{lemma:MEBP-V}, this will prove
  our theorem.

  Let $I$ be an instance of \textsc{MEBP-V} with bipartite graph $G =
  (U \cup V, E)$ and positive integer $K$.  From $I$ we construct an
  instance $I'$ of $K$-med-\textsc{OWA-Winner} in the
  following way. We let the set of agents $N$ be $U$, the set of
  items $A$ be $V$, and we seek a winner set of size $K$. The
  utility of agent $u$ from item $v$ is equal to $1$ if and
  only if $u$ and $v$ are connected in $G$; otherwise it is $0$. Now
  we note that there is a one-to-one correspondence between the solutions
  for $I$ and for $I'$. Let $S$ be a solution for $I$ with $x$ edges:
  $S \cap V$ is also a solution for $I'$ with the utility at least equal to $x/K$. 
  Let $S$ be a solution for $I'$ with the utility $x$. All the agents from
  $U$ with non-zero utilities, together with $S$, form a biclique with $Kx$ edges.
  Thus, from the $(2^{c\sqrt{\lg n}}/n)$-approximation solution for $I'$ we
  can extract a $(2^{c\sqrt{\lg n}}/n)$-approximation solution for $I$.
  This completes the proof.
\end{proof}

As a corollary of the above proof, we also have that
$\mathrm{Hurwicz[}\lambda\mathrm{]}$-\textsc{OWA-Winner} is $\np$-hard
(through an almost identical proof, but with a certain dummy candidate
added, that gets utility $1$ from everyone, and with the size of the
winner set extended by $1$).

\begin{corollary}\label{cor:hurwicz:np-hard}
$\mathrm{Hurwicz[}\lambda\mathrm{]}$-\textsc{OWA-Winner} is $\np$-hard
\end{corollary}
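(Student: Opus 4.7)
The plan is to reduce from $K\text{-}\mathrm{med}$-\textsc{OWA-Winner}, whose $\np$-hardness was established in Theorem~\ref{thm:mebp-bounded} (via \textsc{MEBP-V}), by inflating the instance with a single dummy item that every agent likes maximally. The dummy will absorb the ``first coordinate'' of the Hurwicz OWA, leaving the ``last coordinate'' to do the real work, i.e., to behave like a minimum operator over the remaining $K$ selected items.

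Concretely, given an instance of $K\text{-}\mathrm{med}$-\textsc{OWA-Winner} (or directly an instance of \textsc{MEBP-V} transformed as in Theorem~\ref{thm:mebp-bounded}) with agent set $N$, item set $A$, utilities $u_{i,a_j} \in \{0,1\}$, and target winner-set size $K$, I would construct a $\mathrm{Hurwicz[}\lambda\mathrm{]}$-\textsc{OWA-Winner} instance with the same agent set $N$, item set $A' = A \cup \{d\}$ where $d$ is a fresh dummy, utilities extended by $u_{i,d}=1$ for every $i \in N$, and target winner-set size $K' = K+1$. The relevant Hurwicz vector has length $K+1$ and equals $(\lambda, 0, \ldots, 0, 1-\lambda)$.

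The crux is the following observation: because $d$ attains the maximum possible utility $1$ for every agent, adding $d$ to any candidate set $W \subseteq A$ of size $K$ cannot decrease the minimum utility any agent derives from the selection, and it forces the maximum to equal $1$. Hence there is always an optimal solution containing $d$, and I may fix $d \in W'$ without loss of generality. Once $d$ is fixed, for each agent $i$ the top coordinate of the sorted utility vector is $1$, so the per-agent Hurwicz score reduces to $\lambda + (1-\lambda)\min_{a \in W' \setminus \{d\}} u_{i,a}$. Summing over agents,
\[
u^{\mathrm{Hurwicz}[\lambda]}_\util(W') = n\lambda + (1-\lambda)\,u^{K\text{-}\mathrm{med}}_\util(W' \setminus \{d\}),
\]
so maximizing the left-hand side over size-$(K+1)$ sets containing $d$ is equivalent, for any $\lambda \in (0,1)$, to solving $K\text{-}\mathrm{med}$-\textsc{OWA-Winner} on the original instance. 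This yields $\np$-hardness for the generic Hurwicz case. The boundary cases are immediate: for $\lambda = 0$ the Hurwicz vector is exactly the $K$-med vector, so hardness follows directly from Theorem~\ref{thm:mebp-bounded}; for $\lambda = 1$ the Hurwicz vector is $(1,0,\ldots,0) = 1\text{-}\mathrm{best}$, whose $\np$-hardness is known from Chamberlin--Courant winner determination.

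The main thing to be careful about is to verify rigorously that the dummy is indeed in some optimal solution (a short monotonicity argument in the min and max) so that the reduction is honest; beyond that, the argument is essentially a transcription of the proof of Theorem~\ref{thm:mebp-bounded} with an extra item and one extra slot in the committee, which is exactly the modification announced in the text.
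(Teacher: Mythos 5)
Your proposal is correct and is essentially the paper's own argument: the paper proves this corollary by taking the reduction of Theorem~\ref{thm:mebp-bounded} (from \textsc{MEBP-V} to $K\text{-}\mathrm{med}$-\textsc{OWA-Winner}), adding one dummy candidate with utility $1$ from every agent, and enlarging the winner set by one, exactly as you do. Your write-up merely makes explicit the details the paper leaves implicit---the swap/monotonicity argument showing the dummy lies in some optimal solution, the identity $u^{\mathrm{Hurwicz}[\lambda]}_\util(W') = n\lambda + (1-\lambda)\,u^{K\text{-}\mathrm{med}}_\util(W' \setminus \{d\})$, and the boundary cases $\lambda \in \{0,1\}$---all of which check out.
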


The reader may wonder why for the case of
$\mathrm{Hurwicz[}\lambda\mathrm{]}$ OWA we only obtain $\np$-hardness
and not inapproximability. The reason is that due to the added dummy
candidate it is easy to find a winner set with nonnegligible
utility. In fact, this is a general property of the
$\mathrm{Hurwicz[}\lambda\mathrm{]}$ OWA and we show an approximation
algorithm for it with a constant approximation ratio. This shows that
even for OWAs that are not nonicreasing it is sometimes possible to
find positive approximation results (though later we will argue that
this approximation is not fully satisfying).

\begin{proposition}\label{pro:hurwicz}
  Let $\calA$ be a $\beta$-approximation algorithm for
  $1\text{-}\mathrm{best}$-\textsc{OWA-Winner}.
  $\calA$ is a $\lambda\cdot\beta$-approximation algorithm for
  $\mathrm{Hurwicz[}\lambda\mathrm{]}$-\textsc{OWA-Winner}.
\end{proposition}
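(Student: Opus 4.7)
The plan is to compare the Hurwicz objective to the $1\text{-}\mathrm{best}$ objective pointwise (at the level of individual sets $W$) and then bound the output of $\calA$ through a short chain of inequalities.

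First, I would record two elementary inequalities that hold for every agent $i$ and every size-$K$ set $W$. Writing $M_i(W) = \max_{w\in W} u_{i,w}$ and $m_i(W) = \min_{w\in W} u_{i,w}$, the Hurwicz contribution of agent $i$ is $\lambda M_i(W) + (1-\lambda) m_i(W)$, while the $1\text{-}\mathrm{best}$ contribution is $M_i(W)$. Since $0 \leq m_i(W) \leq M_i(W)$, we immediately get
\[
\lambda\, M_i(W) \;\leq\; \lambda M_i(W) + (1-\lambda) m_i(W) \;\leq\; M_i(W).
\]
Summing over $i$, this yields, for every $W$, the sandwich $\lambda\, u^{1\text{-}\mathrm{best}}_{\util}(W) \leq u^{\mathrm{Hurwicz}[\lambda]}_{\util}(W) \leq u^{1\text{-}\mathrm{best}}_{\util}(W)$.

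Next I would compare the two optima. Let $W^{\star}_H$ be optimal for $\mathrm{Hurwicz}[\lambda]$ and $W^{\star}_1$ be optimal for $1\text{-}\mathrm{best}$ on the same instance. The upper half of the sandwich, applied to $W^{\star}_H$, together with the optimality of $W^{\star}_1$, gives
\[
\OPT_H \;=\; u^{\mathrm{Hurwicz}[\lambda]}_{\util}(W^{\star}_H) \;\leq\; u^{1\text{-}\mathrm{best}}_{\util}(W^{\star}_H) \;\leq\; u^{1\text{-}\mathrm{best}}_{\util}(W^{\star}_1) \;=\; \OPT_1.
\]

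Finally, let $W$ be the set returned by $\calA$ on this instance. By the assumed approximation guarantee, $u^{1\text{-}\mathrm{best}}_{\util}(W) \geq \beta\, \OPT_1$. Applying the lower half of the sandwich to $W$ and combining with the previous inequality, I obtain
\[
u^{\mathrm{Hurwicz}[\lambda]}_{\util}(W) \;\geq\; \lambda\, u^{1\text{-}\mathrm{best}}_{\util}(W) \;\geq\; \lambda\beta\, \OPT_1 \;\geq\; \lambda\beta\, \OPT_H,
\]
which is exactly the $\lambda\beta$ approximation guarantee we want. There is no real obstacle here; the only point that needs to be flagged is that the argument relies on utilities being nonnegative (so that the $(1-\lambda)m_i(W)$ term can be dropped without reversing the inequality), which is part of the problem's standing assumption in Definition~\ref{def:owa-winner}.
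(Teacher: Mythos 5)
Your proposal is correct and follows essentially the same route as the paper's proof: the paper establishes exactly your two sandwich inequalities (that $u_{ut}^{\mathrm{Hurwicz[}\lambda\mathrm{]}}(W) \leq u_{ut}^{1\text{-}\mathrm{best}}(W)$ applied to the Hurwicz-optimal set, and $u_{ut}^{\mathrm{Hurwicz[}\lambda\mathrm{]}}(W) \geq \lambda\, u_{ut}^{1\text{-}\mathrm{best}}(W)$ for the returned set) and chains them with the $\beta$-approximation guarantee in the same order. Your explicit remark that nonnegativity of the utilities is what lets one drop the $(1-\lambda)m_i(W)$ term is a worthwhile clarification, but it does not change the substance of the argument.
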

\begin{proof}
  Let us consider some instance $I^H$ of
  $\mathrm{Hurwicz[}\lambda\mathrm{]}$-\textsc{OWA-Winner}, where the
  goal is to pick a set of $K$ items. We construct an instance $I^1$
  that is identical to $I^H$, but for the $1$-best-OWA, and we run
  algorithm $\calA$ on $I^1$. The algorithm outputs some set $W =
  \{w_1, \ldots, w_K\}$ (a $\beta$-approximate solution for $I^1$).
  We claim that $W$ is a $\lambda\beta$-approximate solution for
  $I^H$.

  Let $W^H = \{w_1^H, \ldots, w_K^H\}$ be an optimal solution for
  $I^H$ and let $W^1 = \{w_1^1, \ldots, w_K^1\}$ be an optimal
  solution for $I^1$.  We first note that the following holds (recall
  the $\vec x^{\downarrow}$ notation for sorted sequences):
  \begin{align*}
  u_{ut}^{\mathrm{Hurwicz[}\lambda\mathrm{]}}(W^H) = \sum_{i=1}^n\left( \lambda u_{i, w^H_1}^{\downarrow} + (1-\lambda) u_{i, w^H_K}^{\downarrow}\right) \leq
  \sum_{i=1}^nu_{i, w^H_1}^{\downarrow} \leq
  \sum_{i=1}^nu_{i, w^1_1}^{\downarrow} =
  u_{ut}^{1\text{-}\mathrm{best}}(W^1).
  \end{align*}
  In effect, we have that $u_{ut}^{1\text{-}\mathrm{best}}(W^1) \geq
  u_{ut}^{\mathrm{Hurwicz[}\lambda\mathrm{]}}(W^H)$.  Now, it is easy
  to verify that for $W$ (or, in fact, for any set of $K$ items) it
  holds that:
  \begin{align*}
    u_{ut}^{\mathrm{Hurwicz[}\lambda\mathrm{]}}(W) =
    \sum_{i=1}^n\left( \lambda u_{i, w_1}^{\downarrow} + (1-\lambda)
      u_{i, w_K}^{\downarrow}\right)  \geq \lambda \sum_{i=1}^n u_{i,
      w_1}^{\downarrow} = \lambda u_{ut}^{1\text{-}\mathrm{best}}(W)
    \text{.}
  \end{align*}
  Finally, combining these two inequalities and the fact that $W$ is a $\beta$-approximate
  solution for $1$-best\textsc{OWA-Winner}, we get:
  \[
     u_{ut}^{\mathrm{Hurwicz[}\lambda\mathrm{]}}(W) \geq \lambda u_{ut}^{1\text{-}\mathrm{best}}(W) 
     \geq \lambda\beta u_{ut}^{1\text{-}\mathrm{best}}(W^1) \geq \lambda\beta u_{ut}^{\mathrm{Hurwicz[}\lambda\mathrm{]}}(W^H).
  \]
  This completes the proof.
\end{proof}

By using Algorithm~\ref{alg:greedy} in the general case, and the PTAS
of Skowron et al.~\shortcite{sko-fal-sli:c:multiwinner} for
$1$-best-\textsc{OWA-Winner} with Borda-based utilities, we get the
following corollary.

\begin{corollary}\label{cor:hurwicz}
  (1) There is an algorithm that for
  $\mathrm{Hurwicz[}\lambda\mathrm{]}$-\textsc{OWA-Winner} with no
  restrictions on the utility functions achieves approximation ratio
  $\lambda(1-\frac{1}{e})$.  (2) For each positive $\epsilon$, there
  is an algorithm that for
  $\mathrm{Hurwicz[}\lambda\mathrm{]}$-\textsc{OWA-Winner} for the
  case of Borda-based utilities achieves approximation ration
  $\lambda(1-\epsilon)$.
\end{corollary}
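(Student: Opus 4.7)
The plan is to simply invoke Proposition~\ref{pro:hurwicz} twice, each time instantiating the generic ``$\beta$-approximation algorithm for $1\text{-}\mathrm{best}$-\textsc{OWA-Winner}'' with a different concrete algorithm. The proposition does all of the real work: it transfers any approximation guarantee for $1$-best back to $\mathrm{Hurwicz}[\lambda]$ at a cost of multiplying the ratio by $\lambda$. So the corollary is essentially a plug-in exercise.

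For part~(1), I first note that the $1$-best OWA vector $(1,0,\ldots,0)$ is nonincreasing, so by Theorem~\ref{thm:greedyAprox} Algorithm~\ref{alg:greedy} is a polynomial-time $(1-\tfrac{1}{e})$-approximation algorithm for $1\text{-}\mathrm{best}$-\textsc{OWA-Winner} with no restriction on the utilities. Plugging $\beta = 1-\tfrac{1}{e}$ into Proposition~\ref{pro:hurwicz} yields a polynomial-time $\lambda(1-\tfrac{1}{e})$-approximation for $\mathrm{Hurwicz}[\lambda]$-\textsc{OWA-Winner} in the unrestricted case. For part~(2), I instead use the PTAS of Skowron et al.~\shortcite{sko-fal-sli:c:multiwinner} for $1\text{-}\mathrm{best}$-\textsc{OWA-Winner} with Borda-based utilities: for every $\epsilon > 0$ it provides a polynomial-time $(1-\epsilon)$-approximation. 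Applying Proposition~\ref{pro:hurwicz} with $\beta = 1-\epsilon$ delivers a $\lambda(1-\epsilon)$-approximation for $\mathrm{Hurwicz}[\lambda]$-\textsc{OWA-Winner} on Borda-based profiles, for every $\epsilon > 0$.

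There is no real obstacle here: both ingredients (the greedy analysis and the Borda PTAS) are already established, and Proposition~\ref{pro:hurwicz} was stated precisely so that this composition is immediate. The only thing to double-check in writing is that the algorithm being composed in runs in polynomial time and that its output is actually used as a $K$-element set for $\mathrm{Hurwicz}[\lambda]$-\textsc{OWA-Winner}, which is exactly the hypothesis of Proposition~\ref{pro:hurwicz}. Both conditions hold by construction, so the proof can be written in one or two sentences.
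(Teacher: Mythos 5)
Your proposal is correct and matches the paper's own argument exactly: the paper derives the corollary by plugging Algorithm~\ref{alg:greedy} (via Theorem~\ref{thm:greedyAprox}) into Proposition~\ref{pro:hurwicz} for the general case, and the PTAS of Skowron et al.~\shortcite{sko-fal-sli:c:multiwinner} for the Borda-based case. Your added remark that the reduction in Proposition~\ref{pro:hurwicz} leaves the utilities untouched (so Borda-based instances stay Borda-based, making the PTAS applicable) is precisely the implicit step the paper relies on.
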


Nonetheless, Corollary~\ref{cor:hurwicz} has a bitter-sweet taste. In
essence, it says that instead of using Hurwicz[$\lambda$] OWAs, we
might as well use $1$-best OWAs. If one wanted to use
Hurwicz[$\lambda$] OWAs for some important reason, then our
approximation result would not be sufficient. Yet, from a different
perspective, one could interpret Corollary~\ref{cor:hurwicz} as
suggesting that such an important reason is unlikely to exist (for
large values of $\lambda$).

Nonetheless, the idea of using a simpler OWA instead of a more complex
one can lead to quite intriguing results. Based on this approach,
below we show a PTAS for \textsc{OWA-Winner} for a family OWAs that
are similar to $K$-best OWAs (this restriction is necessary to defeat
the relation with the \textsc{MaxCover} problem which precludes
arbitrarily good approximation algorithms).

\begin{theorem}\label{thm:approx}
  Consider a nonincreasing OWA $\alpha$,
  $\alpha = \langle \alpha_1, \ldots, \alpha_K\rangle$. Let $I$ be an
  instance for $\alpha$-\textsc{OWA-Winner} (where we seek a winner
  set of size $K$). An optimal solution for the same instance but with
  $K\text{-}\mathrm{best}$-OWA is a $(\sum_{i=1}^K\alpha_i)/(K\alpha_1)$-approximate solution
  for $I$.
\end{theorem}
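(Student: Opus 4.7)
My plan is to relate both $u^{\alpha}_{\util}(W^{\alpha})$ and $u^{\alpha}_{\util}(W^{\ast})$ to the common pivot $u^{K\text{-}\mathrm{best}}_{\util}$, where $W^{\ast}$ denotes an optimum for $K\text{-}\mathrm{best}$-\textsc{OWA-Winner} on $I$ and $W^{\alpha}$ an optimum for $\alpha$-\textsc{OWA-Winner} on $I$. Specifically, I would prove two pointwise estimates valid for every size-$K$ set $W$ and every agent $i$, aggregate them over agents, and then chain the resulting bounds via the optimality of $W^{\ast}$ for $K\text{-}\mathrm{best}$-OWA.

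For the upper bound on $u^{\alpha}_{\util}(W^{\alpha})$, nonincreasingness of $\alpha$ gives $\alpha_j \le \alpha_1$ for every $j\in[K]$, so for each agent $i$ and each set $W$,
\[
  \sum_{j=1}^{K}\alpha_j\,u_i(W)^{\downarrow}_j \;\le\; \alpha_1\sum_{j=1}^{K}u_i(W)^{\downarrow}_j,
\]
where $u_i(W)^{\downarrow}_j$ is the $j$-th largest of agent $i$'s utilities for items in $W$. Summing over $i$ gives $u^{\alpha}_{\util}(W)\le \alpha_1\,u^{K\text{-}\mathrm{best}}_{\util}(W)$. Applying this to $W=W^{\alpha}$ and then using $u^{K\text{-}\mathrm{best}}_{\util}(W^{\alpha})\le u^{K\text{-}\mathrm{best}}_{\util}(W^{\ast})$ (optimality of $W^{\ast}$ for $K\text{-}\mathrm{best}$-OWA) yields $u^{\alpha}_{\util}(W^{\alpha})\le \alpha_1\,u^{K\text{-}\mathrm{best}}_{\util}(W^{\ast})$.

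For the lower bound on $u^{\alpha}_{\util}(W^{\ast})$, I would invoke Chebyshev's sum inequality applied to the two similarly sorted $K$-vectors $\alpha$ and $\bigl(u_i(W^{\ast})^{\downarrow}_1,\ldots,u_i(W^{\ast})^{\downarrow}_K\bigr)$, both nonincreasing (the first by hypothesis, the second by construction). This gives
\[
  K\sum_{j=1}^{K}\alpha_j\,u_i(W^{\ast})^{\downarrow}_j \;\ge\; \Bigl(\sum_{j=1}^{K}\alpha_j\Bigr)\Bigl(\sum_{j=1}^{K}u_i(W^{\ast})^{\downarrow}_j\Bigr).
\]
Summing over $i$ and dividing by $K$ yields $u^{\alpha}_{\util}(W^{\ast})\ge \tfrac{1}{K}\bigl(\sum_j\alpha_j\bigr)\,u^{K\text{-}\mathrm{best}}_{\util}(W^{\ast})$. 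Dividing this by the earlier upper bound on $u^{\alpha}_{\util}(W^{\alpha})$ yields the claimed ratio $(\sum_j\alpha_j)/(K\alpha_1)$.

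The only nontrivial ingredient is Chebyshev's inequality, so there is no real obstacle beyond recognizing that both sequences are sorted in the same direction. If one wishes to avoid naming the inequality, one can instead write $\alpha_j=\sum_{\ell=j}^{K}(\alpha_\ell-\alpha_{\ell+1})$ (with $\alpha_{K+1}:=0$), which expresses $u^{\alpha}_{\util}$ as a nonnegative combination of sum-of-top-$\ell$ operators, and then appeal to the elementary fact that the sum of the top $\ell$ of $K$ nonnegative values is at least $(\ell/K)$ times the total; summing this bound against the nonnegative weights $\alpha_\ell-\alpha_{\ell+1}$ gives the same conclusion.
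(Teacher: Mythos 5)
Your proposal is correct and follows essentially the same route as the paper's proof: the identical chain of three inequalities pivoting through the $K$-best utility, namely the Chebyshev-type per-agent lower bound $\sum_{j}\alpha_j u^{\downarrow}_j \ge \frac{\sum_j \alpha_j}{K}\sum_j u^{\downarrow}_j$ applied to the $K$-best optimum, the optimality of that set under the constant OWA, and the trivial upper bound $\sum_j \alpha_j u^{\downarrow}_j \le \alpha_1 \sum_j u^{\downarrow}_j$ applied to the $\alpha$-optimum. The only difference is presentational: where you cite Chebyshev's sum inequality (or equivalently your telescoping decomposition $\alpha_j = \sum_{\ell=j}^{K}(\alpha_\ell - \alpha_{\ell+1})$ into top-$\ell$ operators), the paper proves the same per-agent inequality from scratch by an Abel-summation rearrangement.
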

\begin{proof}
  Let $I$ be the instance of $\alpha$-\textsc{OWA-Winner} described in
  the statement of the theorem, let $W^{*}$ be one of its optimal
  solution, and let $W$ be an optimal solution for the same
  instance, but with the $K$-best-OWA. Note that $W$ is also an
  optimal solution for the $K$-number constant OWA $\beta = \langle \alpha_1, \ldots, \alpha_1 \rangle$. We claim
  that the following inequalities hold ($u^\alpha_\util$ is defined
  with respect to the instance $I$ and $u^\beta_\util$ is defined with
  respect to instance $I$ with $\beta$ as the OWA):
  \begin{align*}
    u_{ut}^{\owa}(W) \geq \frac{\sum_{i=1}^K\alpha_i}{K\alpha_1} u_{ut}^{\owab}(W)
    \geq \frac{\sum_{i=1}^K\alpha_i}{K\alpha_1} u_{ut}^{\owab}(W^{*}) 
    \geq \frac{\sum_{i=1}^K\alpha_i}{K\alpha_1} u_{ut}^{\owa}(W^{*}) \textrm{,}
  \end{align*}
  The second inequality holds because $W$ is an optimal solution
  for $I$ with OWA $\beta$. To see why the first and the third
  inequalities hold, let us focus on some agent $i$. The third inequality
  is simpler and so we prove it first.

  Let $u^{*}_1, \ldots, u^{*}_k$ be the utilities, in the
  nonincreasing order, that agent $i$ has for the items in
  $W^{*}$. Thus the utility that $i$ gets from $W^{*}$ under $\alpha$
  is $\sum_{i=1}^{K} \alpha_iu^{*}_i$. Since for each $i$, $1 \leq i
  \leq K$, we have $\alpha_i \leq \alpha_1$, $i$'s utility under
  $\alpha$ is less or equal to $i$'s utility under $\beta$,
  $\sum_{i=1}^{K} \alpha_1u^{*}_i$.

  We now prove the first inequality. Let $u_1, \ldots, u_K$ be the
  utilities, in the nonincreasing order, that agent $i$ has for the
  items in $W$. Our goal is to show that:
  \begin{align*}
    \alpha_1u_1 + \cdots + \alpha_Ku_K \geq
    \frac{\sum_{i=1}^K\alpha_i}{K\alpha_1}\alpha_1u_1 + \cdots + \frac{\sum_{i=1}^K\alpha_i}{K\alpha_1}\alpha_1u_K
    = \frac{\sum_{i=1}^K\alpha_i}{K}u_1 + \cdots + \frac{\sum_{i=1}^K\alpha_i}{K}u_K.
  \end{align*}
  This inequality is equivalent to
  \begin{align*}
    K\alpha_1u_1 + \cdots + K\alpha_Ku_K \geq \sum_{i=1}^K\alpha_iu_1
    + \cdots + \sum_{i=1}^K\alpha_iu_K,
  \end{align*}
  which itself is equivalent to
  \begin{align*}\textstyle
    u_1(K\alpha_1-\sum_{i=1}^K\alpha_i) + \cdots + u_K(K\alpha_K-\sum_{i=1}^K\alpha_i) \geq 0.
  \end{align*}
  We can rewrite the left-hand side of this inequality as:
  \begin{align*}
    &\textstyle (u_1-u_2)(K\alpha_1-\sum_{i=1}^K\alpha_i) +(u_2-u_3)(K\alpha_1+K\alpha_2 -2\sum_{i=1}^K\alpha_i) +\cdots+ \\
    &\textstyle +(u_{K-1}-u_K)(\sum_{j=1}^{K-1}K\alpha_j -(K-1)\sum_{i=1}^K\alpha_i) +u_K(\sum_{j=1}^{K}K\alpha_j -K\sum_{i=1}^K\alpha_i).
  \end{align*}
  We claim that each summand in this expression is nonnegative.  Since
  $u_1, \ldots, u_K$ is a nonincreasing sequence of nonnegative
  utilities, we have that for each $j$, $1 \leq j \leq K-1$,
  $u_j-u_{j+1}$ is nonnegative, and so is $u_K$. Now fix some $t$, $1
  \leq t \leq K$. We have:
  \begin{align*}\textstyle
    \sum_{j=1}^tK\alpha_j - t\sum_{i=1}^K \alpha_i &\textstyle= \sum_{j=1}^t(K-t)\alpha_j - t\sum_{i=t+1}^K\alpha_i\\
    &\geq t(K-t)\alpha_t - t\sum_{i=t+1}^K\alpha_i 
    \textstyle \geq t(K-t)\alpha_t - t(K-t)\alpha_t = 0
  \end{align*}
  This completes the proof.
\end{proof}

As a consequence of this theorem, we immediately get the following
result.
\begin{theorem}\label{thm:k-1-best-ptas}
  Let $f: \naturals \rightarrow \naturals$ be a function computable in
  polynomial-time with respect to the value of its argument, such that
  $f(K)$ is $o(K)$. There is a PTAS for $(K-f(K))\text{-}\mathrm{best}$-\textsc{OWA-Winner}.
\end{theorem}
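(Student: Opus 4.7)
The plan is to combine Theorem~\ref{thm:approx} with the polynomial-time solvability of $K\text{-}\mathrm{best}$-\textsc{OWA-Winner} (Proposition~\ref{pro:K-in-p}) and the brute-force algorithm of Proposition~\ref{pro:fixed-K}.

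First I would observe that for the OWA vector $\alpha$ corresponding to $(K-f(K))\text{-}\mathrm{best}$, we have $\alpha_1 = 1$ and $\sum_{i=1}^K \alpha_i = K - f(K)$, so $\alpha$ is nonincreasing. Plugging these values into Theorem~\ref{thm:approx} shows that an optimal solution to the $K\text{-}\mathrm{best}$ variant of the given instance is a $\frac{K-f(K)}{K} = 1 - \frac{f(K)}{K}$-approximation for the $(K-f(K))\text{-}\mathrm{best}$-\textsc{OWA-Winner} instance. By Proposition~\ref{pro:K-in-p}, such an optimal $K\text{-}\mathrm{best}$ solution can be computed in polynomial time.

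Next I would use the assumption $f(K) = o(K)$: for every $\epsilon > 0$ there exists a threshold $K_\epsilon$ such that $f(K)/K < \epsilon$ whenever $K \geq K_\epsilon$. Since $f$ is polynomial-time computable in the value of its argument, $K_\epsilon$ itself depends only on $\epsilon$ and can be found by searching for the smallest $K$ meeting the inequality (or bounded by any convenient estimate). The PTAS then takes the following form, given input $\epsilon > 0$:
\begin{enumerate}
\item If $K < K_\epsilon$, compute an exact solution via the brute-force algorithm of Proposition~\ref{pro:fixed-K}. Since $K_\epsilon$ depends only on $\epsilon$, the running time $O(m^{K_\epsilon})$ is polynomial in the input size.
\item If $K \geq K_\epsilon$, compute an optimal $K\text{-}\mathrm{best}$-\textsc{OWA-Winner} solution using Proposition~\ref{pro:K-in-p}, and return it. By Theorem~\ref{thm:approx} its approximation ratio is at least $1 - f(K)/K > 1 - \epsilon$.
\end{enumerate}
In either branch the output is a $(1-\epsilon)$-approximation and the running time is polynomial in $n$ and $m$, which is precisely the definition of a PTAS.

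There is no real obstacle here; the only point that needs a sentence of justification is that $K_\epsilon$ is a constant (depending only on $\epsilon$, not on the input), so the brute-force branch is genuinely polynomial. Everything else is a direct substitution of $\alpha$'s parameters into Theorem~\ref{thm:approx}.
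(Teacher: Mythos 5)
Your proof is correct and follows essentially the same route as the paper's own argument: the paper likewise fixes $\epsilon$, uses the $o(K)$ assumption to obtain a constant threshold $K_\epsilon$, runs the polynomial-time $K\text{-}\mathrm{best}$ algorithm when $K \geq K_\epsilon$ (with the approximation guarantee coming from Theorem~\ref{thm:approx}, since $\alpha_1 = 1$ and $\sum_{i=1}^K \alpha_i = K - f(K)$), and falls back on brute force for $K < K_\epsilon$. Your write-up is, if anything, slightly more explicit than the paper about which earlier results (Theorem~\ref{thm:approx}, Propositions~\ref{pro:fixed-K} and~\ref{pro:K-in-p}) are being invoked.
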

\begin{proof}
  Let us fix some $\epsilon$, $0 < \epsilon <1$. We give a polynomial
  time $\epsilon$-approximation algorithm for
  $(K-f(K))$-best-\textsc{OWA-Winner}. Since $f(K)$ is $o(K)$, there
  is some value $K_\epsilon$ such that for each $K \geq K_\epsilon$ it
  holds that $\frac{K-f(K)}{K} \geq \epsilon$. If for our input
  instance we are to find a winner set of size $K$, $K \geq
  K_\epsilon$, then we simply run the polynomial-time algorithm for
  $K$-best-OWA.  Otherwise, we seek a winner set of size at most
  $K_\epsilon$ and we try all subsets of items of size
  $K$. Since, in this case, $K$ is bounded by a constant, our
  algorithm runs in polynomial time.
\end{proof}

While Theorem~\ref{thm:k-1-best-ptas} suffers from the same criticism
as Corollary~\ref{cor:hurwicz}, it is still a very interesting result,
especially when contrasted with
Theorem~\ref{thm:dks}. Theorem~\ref{thm:k-1-best-ptas} says that there
is a PTAS for $\alpha$-\textsc{OWA-Winner} for OWA family $\langle 1,
\ldots, 1, 0\rangle$, whereas Theorem~\ref{thm:dks} suggests that it
is unlikely that there is a constant-factor approximation algorithm
for $\alpha$-\textsc{OWA-Winner} with OWA family $\langle 0,1, \ldots,
1 \rangle$. Even though these two OWA families seem very similar, the
fact that one is nonincreasing and the other one is not makes a huge
difference in terms of approximability of \textsc{OWA-Winner}.

\section{Approximation: Non-Finicky Utilities}\label{sec:approx-borda}

One of the greatest sources of hardness of the \textsc{OWA-Winner}
problem, that we rely on in our proofs, is that the agents may have
very high utilities for some very small subsets of items, and very low
utilities for the remaining ones (consider, e.g., approval-based
utilities where each agent approves of relatively few items). In such
cases, intuitively, either we find a perfect solution or some of the
agents have to be very badly off. On the other hand, for Borda-based
utilities when some agent does not get his or her top items, it is
still possible to provide the agent with not-much-worse ones; the
utilities decrease linearly. Indeed, Skowron et
al.~\shortcite{sko-fal-sli:c:multiwinner} used this observation to
give a PTAS for the Chamberlin--Courant rule. Here we give a strong
generalization of their result that applies to non-finicky utilities
and OWA families that include, for each fixed$k$, $k$-median,
$k$-best, and geometric progression OWAs.

We focus on the case of OWA vectors where only some constant number
$\ell$ of top positions are nonzero, and on
$(\beta,\gamma)$-non-finicky utilities ($\beta, \gamma \in [0,1]$). In
this case, Algorithm~\ref{alg:greedy2} (a generalization of an
algorithm of Skowron~et~al.~\shortcite{sko-fal-sli:c:multiwinner})
achieves a good approximation ratio.  The idea behind the algorithm is
as follows: To pick $K$ items, it proceeds in $K$ iterations and in
each iteration it introduces one new item into the winner set. For
each agent it considers the top $x = \gamma m$ items with the highest
utilities and in a given iteration it picks an item $a$ that maximizes
the number of agents that (1) rank $a$ among items with the highest
$x$ utilities, and (2) still have ``free slots'' (an agent has a free
slot if among the so-far-selected winners, fewer than $\ell$ have
utilities among the $x$ highest ones for this agent). 
Before we prove
that our algorithm works well, let us consider the following example.

\newcommand{\rank}{\mathrm{rank}}

\SetKwInput{KwNotation}{Notation}
\begin{algorithm}[t]
   \small
   \SetAlCapFnt{\small}
    \small
   \SetAlCapFnt{\small}
   \KwNotation{\\
   $\Phi \leftarrow$ a map giving the number of free slots per agent; at first, for each agent $i$ we have $\Phi[i] = \ell$.\\
   $\rank(j, a) = \|\{b \in A \colon u_{j, b} > u_{j, a}\}\|$ gives the rank of item $a$ according to agent $i$.\\}
   \hspace{1mm}\\
   $x \leftarrow \gamma m$\;
   $S \leftarrow \emptyset$\;
   \For{$i\leftarrow 1$ \KwTo $K$}{
      $a \leftarrow \mathrm{argmax}_{a \in A \setminus S} \|\{j \mid \Phi(j) > 0 \land \rank(j, a) < x\}\|$\;
      \ForEach{$j \in \{j \mid \Phi(j) > 0\}$}{
         \If{$\rank(j, a) < x$}
         {
           $\Phi[j] \leftarrow \Phi[j] - 1$\;
         }
      }
      $S \leftarrow S \cup \{a\}$\; 
   }
   \Return{$S$}

   \caption{An algorithm for nonincreasing OWAs where at most first
     $\ell$ entries are nonzero, for the case of $(\beta, \gamma)$-non-finicky utilities.}\label{alg:greedy2}
\end{algorithm}

\begin{example}\label{expf}
  Let the items and agents be the same as in Example~\ref{exjl} (just
  as in Example~\ref{exjl2} for Algorithm~\ref{alg:greedy}). Let $K =
  3$ and let the OWA vector be $\alpha = (2,1,0)$. We have $\ell = 2$
  nonzero entries in $\alpha$. We treat the agents utilities as
  $(0.8,0.5)$-non-finicky ones.  Before we execute the algorithm, it
  is convenient to compute the $\rank$ function:\footnote{Note that
    here the best rank is $0$ and not $1$ (using rank $1$ for the top
    item is the more common approach). This simplifies our technical
    discussion.}
  \[
  \begin{array}{c|cccccc}
                     & a_1 & a_2 & a_3 & a_4 & a_5 & a_6\\ 
    \hline
    \rank(1, \cdot ) & 0 & 0 & 2 & 3 & 4 & 5 \\
    \rank(2, \cdot )  & 3 & 4 & 5 & 0 & 2 & 0 \\
    \rank(3, \cdot ) & 2 & 5 & 0 & 4 & 0 & 3 
  \end{array}
  \]
  Now we can start to execute the algorithm. We have $x = \gamma m =
  3$ and initially each agent has two free slots.  In the first
  iteration, the algorithm can pick either $a_1$, $a_3$, or $a_5$,
  because for each of them there are two agents for whom their rank is
  below $3$, while for each other item there is only one agent that
  ranks it below $3$. Let us assume that the algorithm picks $a_1$
  (see the table below for information regarding the slots of the
  agents after each iteration). In the second iteration all the agents
  still have  free slots so the algorithm can pick either $a_3$ or $a_5$.
  Let us assume it picks $a_3$. In effect, Agents $1$ and $3$ no longer
  have free slots and in the final iteration the algorithm picks one
  of the items to which Agent~$2$ assigns rank lower than $3$, i.e.,
  one of $a_4$, $a_5$, and $a_6$. Let us assume it picks $a_4$.
  Below we show the contents of agents' slots after executing each
  iteration.
  \[
  \begin{array}{c|cc|cc|cc}
                     & \multicolumn{2}{c|}{\mathtt{Agent\ 1}}  & \multicolumn{2}{c|}{\mathtt{Agent\ 2}} & \multicolumn{2}{c}{\mathtt{Agent\ 3}}\\ 
   & \text{Slot 1} & \text{Slot 2} & \text{Slot 1} & \text{Slot 2} & \text{Slot 1} & \text{Slot 2} \\
    \hline
    \text{After iteration 1} & a_1 & - & - & - & a_1 & - \\
    \text{After iteration 2}  & a_1 & a_3 & - & - & a_1 & a_3 \\
    \text{After iteration 3} & a_1 & a_3 & a_4 & - & a_1 & a_3 
  \end{array}
  \]
  The algorithm outputs set $S = \{a_1, a_3, a_4\}$. It is interesting
  that this is a different set than the one returned by
  Algorithm~\ref{alg:greedy} (see Example~\ref{exjl2}), which returned
  set $W = \{a_1, a_4, a_5\}$. This latter set is slightly
  better than the one output by Algorithm~\ref{alg:greedy2}; it
  achieves aggregated utility $84$ as opposed to $83$.
\end{example}

\begin{theorem}\label{thm:borda:nonincreasing}
  Fix a positive integer $\ell$ and let $\owa$ be a nonincreasing OWA
  where at most first $\ell$ entries are nonzero. 
  If the agents have $(\beta,  \gamma)$-non-finicky utilities,
  with $\gamma m \geq  \ell$, then 
  Algorithm~\ref{alg:greedy2} is a polynomial-time $\beta(1 - \exp(-\frac{\gamma
    K}{\ell}))$-approximation algorithm for $\alpha$-\textsc{OWA-Winner}.
\end{theorem}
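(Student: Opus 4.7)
The plan is to compare $\mathrm{ALG}$ (the value of the algorithm's output $W$) to $\OPT$ through the intermediate quantity ``total filled slots'' $\sum_i k_i$, where $k_i := \min(|W \cap T_i|, \ell)$ and $T_i$ is agent $i$'s top-$x$ set with $x = \gamma m$. The starting observation is that, since utilities are $(\beta,\gamma)$-non-finicky and $|T_i| = \gamma m$, every item of $T_i$ has utility at least $\beta u_{\max}$ for agent $i$. Writing $A_{\mathrm{tot}} := \sum_{j=1}^{\ell}\alpha_j$, I get two bracketing inequalities. For the upper bound, each agent contributes at most $u_{\max} A_{\mathrm{tot}}$ to any OWA value, so $\OPT \leq n A_{\mathrm{tot}} u_{\max}$. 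For the lower bound, items outside $T_i$ have utility no larger than items inside $T_i$, so in agent $i$'s sorted view of $W$ the top $k_i$ utilities are all $\geq \beta u_{\max}$; since $\alpha$ is nonincreasing with zero tail past position $\ell$ and $k_i \leq \ell$, agent $i$'s contribution is at least $\beta u_{\max}\sum_{j=1}^{k_i}\alpha_j \geq \beta u_{\max}(k_i/\ell)A_{\mathrm{tot}}$ (the second inequality because the average of the largest $k_i$ entries of a nonincreasing sequence dominates its overall average). Summing, $\mathrm{ALG} \geq (\beta u_{\max} A_{\mathrm{tot}}/\ell)\sum_i k_i$.

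The main step is then to show $\sum_i k_i \geq n\ell(1 - e^{-\gamma K/\ell})$. Let $\phi_i^{(t)}$ denote agent $i$'s free-slot counter after iteration $t$, and put $S_t := \sum_i \phi_i^{(t)}$, so $S_0 = n\ell$ and $\sum_i k_i = n\ell - S_K$. Because $\phi_i$ is monotonically non-increasing in $t$, if $\phi_i^{(t)} > 0$ then every previously chosen item in $T_i$ consumed exactly one slot, so $|T_i \cap W_t| = \ell - \phi_i^{(t)}$ and hence $|T_i \setminus W_t| = x - \ell + \phi_i^{(t)}$. Summing over the $A_t$ agents still having free slots, and then using $S_t \leq A_t \ell$ to eliminate $A_t$,
\[
\sum_{a \in A \setminus W_t}\bigl|\{i : \phi_i^{(t)} > 0,\ a \in T_i\}\bigr| = A_t(x - \ell) + S_t \geq \frac{S_t}{\ell}(x - \ell) + S_t = \frac{S_t\,\gamma m}{\ell}.
\]
The algorithm picks the $a$ maximizing this count, so by pigeonhole over the at most $m$ items in $A \setminus W_t$ the chosen $a$ covers at least $S_t \gamma/\ell$ agents with free slots; this gives $S_{t+1} \leq S_t(1 - \gamma/\ell)$. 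Iterating $K$ times and using $1 - z \leq e^{-z}$ yields $S_K \leq n\ell\,e^{-\gamma K/\ell}$, so $\sum_i k_i \geq n\ell(1 - e^{-\gamma K/\ell})$ as claimed.

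Chaining the three estimates, $\mathrm{ALG} \geq \beta(1 - e^{-\gamma K/\ell}) \cdot n A_{\mathrm{tot}} u_{\max} \geq \beta(1 - e^{-\gamma K/\ell}) \cdot \OPT$. The hardest point is the pigeonhole step: the plain submodular-greedy bound applied to $v(W) = \sum_i \min(|W \cap T_i|, \ell)$ would only yield the generic $1 - 1/e$ factor. It is the density of the top-$x$ sets (each of size $\gamma m$, producing the ``$x/\ell$''-factor boost once $A_t$ is eliminated via $S_t \leq A_t \ell$) that sharpens the rate to $1 - e^{-\gamma K/\ell}$; the non-finicky hypothesis then enters only as the cheap $\beta$-scaling of per-slot utility and, crucially, as the guarantee $\gamma m \geq \ell$ that makes $x - \ell \geq 0$ so that the averaging step goes through.
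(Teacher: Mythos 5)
Your proposal is correct and follows essentially the same route as the paper's proof: the same free-slot potential $S_t$, the same summation $A_t(x-\ell)+S_t \geq S_t x/\ell$ (which is exactly where the hypothesis $\gamma m \geq \ell$ enters in the paper as well), the same pigeonhole step yielding $S_{t+1} \leq S_t(1-\gamma/\ell)$, and the same upper bound $\OPT \leq n\,u_{\max}\sum_{j=1}^{\ell}\alpha_j$. Your per-agent conversion of filled slots into utility, via $\sum_{j=1}^{k_i}\alpha_j \geq (k_i/\ell)\sum_{j=1}^{\ell}\alpha_j$ for nonincreasing $\alpha$, is a slightly cleaner rendering of the paper's aggregate ``repeat the reasoning over coefficients $\alpha_1,\alpha_2,\ldots$'' step, but it rests on the identical averaging inequality, so the two arguments coincide in substance.
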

\begin{proof}
  Consider an instance $I$ of $\alpha$-\textsc{OWA-Winner}, with $n$
  agents, $m$ items, and where we seek a winner set of size $K$.  Let
  $x = \gamma m$.  We use an OWA where an agent's total utility from a
  winner set $W$ depends on this agent's utilities for his or her top
  $\ell$ items from $W$. We introduce the notion of each agent's free
  slots as follows. Initially, each agent has $\ell$ free slots.
  Whenever an agent $j$ has a free slot and the algorithm selects an
  item $a$ such that for agent $j$ item $a$ is among $x$ items with
  highest utilities, we say that $a$ starts occupying one free slot of
  $j$. After such an item is selected, $j$ has one free slot less.

  Let $n_i$ denote the total number of free slots of all the agents
  after the $i$-th iteration of the algorithm. Naturally, we have $n_0
  = \ell n$. We show by induction that $n_i \leq \ell n \left(1 -
    \frac{x}{\ell m}\right)^i$. Indeed, the inequality is true for
  $i=0$. Let us assume that it is true for some $i$: $n_i \leq \ell n
  \left(1 - \frac{x}{\ell m}\right)^i$. Let $F_i$ denote the set of
  agents that have free slots after iteration $i$. There are at least
  $\frac{n_i}{\ell}$ such agents.  For $j \in F_i$, let $S(j)$ be the
  number of $j$'s top-$x$ items that were not included in the solution
  yet. If $j \in F_i$ has $s$ free slots, then $S(j) =
  (x-\ell+s)$. Thus we have that $\sum_{j \in F_i}S(j) \geq n_i + (x -
  \ell)\frac{n_i}{\ell} = \frac{n_ix}{\ell}$. By the pigeonhole
  principle, there exists an item that is among top-$x$ items for at
  least $\frac{n_ix}{\ell m}$ agents from $F_i$. Thus, after the
  $(i+1)$-th iteration of the algorithm, the total number of free
  slots is at most:
  \begin{align*}
    \textstyle
    n_{i+1} &\leq n_i - \frac{n_ix}{\ell m} = n_i\left(1 - \frac{x}{\ell m} \right) 
    \leq \ell n \left(1 - \frac{x}{\ell m}\right)^{(i+1)}.
  \end{align*}
  The number of free slots after the last iteration is at most:
  \begin{align*}
   \textstyle
    n_{K} \leq  \ell n \left(1 - \frac{x}{\ell m}\right)^{K}\!\!\! = \ell n \left(1 - \frac{\gamma}{\ell}\right)^{K}  \!\!\! \leq \ell n \exp\left(-\frac{\gamma K}{\ell}\right) \textrm{.}
  \end{align*}
  Thus the number of occupied slots is at least $\ell n - \ell n
  \exp(-\frac{\gamma K}{\ell})$. Every item that occupies an agent's
  slot has utility for this agent at least $\beta u_{\max}$, where
  $u_{\max}$ is the maximal utility that any of the agents assigns to
  an item.

  It remains to assess the OWA coefficients for the utilities of the
  items in the solution. If for some agent $i$ the utility of an item
  $a$, $u_{i, a}$, is taken with coefficient $\alpha_p$ ($p > 1$),
  then in the solution there must be an item $a'$ such that $u_{i, a'}
  \geq u_{i, a}$ and $u_{i, a'}$ is taken with coefficient
  $\alpha_{p-1}$. So there must exist at least $\frac{1}{\ell}(\ell n
  - \ell n \exp(-\frac{\gamma K}{\ell}))$ occurrences of the items
  whose utilities are taken with coefficient $\alpha_1$. By
  repeating this reasoning for the remaining occurrences of the items
  from the solution, since $\alpha$ is
  nonincreasing, we get that the total utility of the agents is at
  least
    $\textstyle \beta u_{\max} (\ell n - \ell n \exp(-\frac{\gamma K}{\ell}))\frac{1}{\ell}\sum_{i = 1}^{\ell}\alpha_i =
    \textstyle \beta u_{\max} n (1 - \exp(-\frac{\gamma K}{\ell}))\sum_{i = 1}^{\ell}\alpha_i.$
   Since no solution has utility higher than $nu_{\max}\sum_{i
      = 1}^{\ell}\alpha_i$, we get our approximation ratio.
%
\end{proof}

As a consequence, we get very good approximation guarantees for the
case of Borda-based utilities. Recall that $\w(\cdot)$ is Lambert's
$\w$ function, that is, a function that for $x \in \realsplus$
satisfies the equation $x = \w(x)e^{\w(x)}$ (and, thus, $\w(x)$ is
$O(\log(x))$).

\begin{corollary}\label{cor:borda-ptas}
  Fix a positive integer $\ell$ and let $\owa$ be a nonincreasing OWA
  where at most first $\ell$ entries are nonzero. Assume that agents
  have Borda-based utilities.  With $x =
  m\w\left(\frac{K}{\ell}\right)\frac{\ell}{K}$,
  Algorithm~\ref{alg:greedy2} is a $\left(1 -
    \frac{2\w(K/\ell)}{K/\ell}\right)$-approximation algorithm for
  $\alpha$-\textsc{OWA-Winner}.
\end{corollary}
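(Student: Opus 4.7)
The plan is to derive Corollary~\ref{cor:borda-ptas} as a direct application of Theorem~\ref{thm:borda:nonincreasing}, choosing the non-finickiness parameters $(\beta,\gamma)$ to minimize the loss bound and then cleaning up the resulting expression using the defining identity of Lambert's $\w$ function.

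First I would invoke the observation that Borda-based utilities are $(x,1-x)$-non-finicky for every $x\in[0,1]$. This gives me the freedom to apply Theorem~\ref{thm:borda:nonincreasing} with the pair $(\beta,\gamma)=(1-\gamma,\gamma)$ for any $\gamma\in[0,1]$; once $\gamma$ is fixed, Algorithm~\ref{alg:greedy2} (which uses $x=\gamma m$) becomes the algorithm referenced in the corollary. The specified $x=m\w(K/\ell)\ell/K$ corresponds to picking $\gamma=\w(K/\ell)\,\ell/K$, so I would plug this into the theorem's bound $\beta(1-\exp(-\gamma K/\ell))$ and simplify.

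The simplification is where the key identity enters. Setting $z=K/\ell$, my choice is $\gamma=\w(z)/z$, hence $\gamma z=\w(z)$. The defining equation $\w(z)e^{\w(z)}=z$ rearranges to $e^{-\w(z)}=\w(z)/z=\gamma$, so
\begin{equation*}
\beta\bigl(1-\exp(-\gamma z)\bigr)=(1-\gamma)(1-e^{-\w(z)})=(1-\gamma)(1-\gamma)=(1-\gamma)^2.
\end{equation*}
Finally, using $(1-a)^2\ge 1-2a$ for $a\in[0,1]$, I get
\begin{equation*}
(1-\gamma)^2\ge 1-2\gamma=1-\frac{2\w(K/\ell)}{K/\ell},
\end{equation*}
which is exactly the claimed approximation ratio.

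The only non-routine obstacle is bookkeeping the side condition $\gamma m\ge \ell$ required by Theorem~\ref{thm:borda:nonincreasing}; equivalently, $m\w(K/\ell)\ge K$. I would dispatch this by noting that $K\le m$ together with $\w$ being increasing implies the inequality holds whenever $K/\ell$ is bounded away from zero, which is the regime in which the bound is meaningful anyway (for $K/\ell$ very small, $\w(K/\ell)/(K/\ell)$ is close to $1$ and the ratio $1-2\w(z)/z$ is vacuous, so the statement is trivial or handled by an explicit brute-force fallback). Beyond that check, the proof is a substitution and one elementary inequality, so no technical difficulty remains.
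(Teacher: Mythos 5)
Your proof is correct and essentially identical to the paper's: both instantiate the observation that Borda utilities are $(x,1-x)$-non-finicky with the choice $\gamma=\w(K/\ell)\,\ell/K$, apply Theorem~\ref{thm:borda:nonincreasing}, use the Lambert identity $e^{-\w(z)}=\w(z)/z$ to collapse the ratio to $(1-\gamma)^2$, and finish with $(1-a)^2\ge 1-2a$. Your extra bookkeeping of the side condition $\gamma m\ge\ell$ (which the paper silently skips) is welcome, though as stated it is slightly loose---what is really needed is $\w(K/\ell)\ge K/m$, which is guaranteed e.g.\ once $K/\ell\ge e$ (since then $\w(K/\ell)\ge 1\ge K/m$), not merely when $K/\ell$ is bounded away from zero.
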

\begin{proof}
  Let us note that the Borda utilities are
  $\left(1-\frac{\w(K/\ell)}{K/\ell},
    \frac{\w(K/\ell)}{K/\ell}\right)$--non-finicky. By applying
  Theorem~\ref{thm:borda:nonincreasing}, we get the following
  approximation ratio (the last equality follows by the definition of
  $\w(x)$):
\begin{align*}
\mathrm{approx.\ ratio} &= \left(1-\frac{\w(K/\ell)}{K/\ell}\right)\left(1 - \exp\left(-\frac{\left(\frac{\w(K/\ell)}{K/\ell}\right) K}{\ell}\right)\right) \\
                &= \left(1-\frac{\w(K/\ell)}{K/\ell}\right)\left(1 - \exp\left(-\w(K/\ell)\right)\right) \\
                &= \left(1-\frac{\w(K/\ell)}{K/\ell}\right)\left(1-\frac{\w(K/\ell)}{K/\ell}\right) \geq \left(1-\frac{2\w(K/\ell)}{K/\ell}\right) \textrm{.}
\end{align*}
This completes the proof.
\end{proof}

The next corollary follows directly from
Theorem~\ref{thm:borda:nonincreasing} by noting that in the case of
$m$ items and $k$-approval utilities (i.e., for the case where each
agent approves of exactly $k$ items) we have
$(1,\frac{k}{m})$-non-finicky utilities.

\begin{corollary}
  Fix a positive integer $\ell$ and let $\owa$ be a nonincreasing OWA
  where at most first $\ell$ entries are nonzero. Assume the $k$-approval utilities of the agents. 
  Algorithm~\ref{alg:greedy2} is a $\left(1 - \exp\left(-\frac{k K}{\ell m}\right)\right)$-approximation
  algorithm for $\alpha$-\textsc{OWA-Winner}.
\end{corollary}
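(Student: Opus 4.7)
The plan is to obtain this corollary as a direct specialization of Theorem~\ref{thm:borda:nonincreasing}, by recognizing $k$-approval utilities as a particular $(\beta,\gamma)$-non-finicky profile. First I would set up the relevant parameters: under $k$-approval utilities every agent assigns value $1$ to exactly $k$ of the $m$ items and value $0$ to the rest, so in particular $u_{\max}=1$ and every agent has utility at least $1\cdot u_{\max}$ on at least $k = \frac{k}{m}\cdot m$ items. This exactly matches the definition of $(\beta,\gamma)$-non-finicky utilities with $\beta=1$ and $\gamma=\tfrac{k}{m}$, which is the one observation on which everything else hinges.

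The next step is a plug-in. Applying Theorem~\ref{thm:borda:nonincreasing} to Algorithm~\ref{alg:greedy2} with $\beta=1$ and $\gamma=\tfrac{k}{m}$ yields an approximation ratio of
\[
\beta\left(1-\exp\!\left(-\tfrac{\gamma K}{\ell}\right)\right)
=1-\exp\!\left(-\tfrac{kK}{\ell m}\right),
\]
which is precisely the bound claimed. Since Algorithm~\ref{alg:greedy2} runs in polynomial time (it is just $K$ iterations of a max over items and an update of $\Phi$), the whole guarantee is inherited for free.

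The only mild obstacle is the side condition $\gamma m\geq \ell$ from Theorem~\ref{thm:borda:nonincreasing}, which here reads $k\geq \ell$. In the regime $k\geq\ell$ the argument is immediate as above. When $k<\ell$ the OWA coefficients $\alpha_{k+1},\dots,\alpha_\ell$ are irrelevant for any agent (a single agent can contribute at most $k$ nonzero utilities into the OWA aggregation), so one may truncate the OWA to its first $\min(k,\ell)$ nonzero entries and apply Theorem~\ref{thm:borda:nonincreasing} with $\ell'=\min(k,\ell)$; the resulting bound $1-\exp(-kK/(\ell' m))$ is then at least as strong as the claimed $1-\exp(-kK/(\ell m))$. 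Thus no genuine calculation beyond the substitution is needed, and the corollary follows.
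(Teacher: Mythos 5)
Your proof is correct and takes essentially the same route as the paper: the paper's entire justification is the single observation that $k$-approval utilities are $(1,\frac{k}{m})$-non-finicky, followed by exactly your substitution $\beta=1$, $\gamma=\frac{k}{m}$ into Theorem~\ref{thm:borda:nonincreasing}. You are in fact more careful than the paper, which silently ignores the side condition $\gamma m \geq \ell$ (here, $k \geq \ell$); your truncation argument for the case $k<\ell$ is sound (with approval utilities every agent's OWA value is unchanged by zeroing out coefficients beyond position $k$, and the resulting bound with $\ell'=\min(k,\ell)$ only improves) and closes a gap the paper leaves unaddressed.
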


\begin{figure*}[th!]
\begin{center}
\begin{minipage}[h]{0.48\linewidth}
  \centering
  \includegraphics[width=\textwidth]{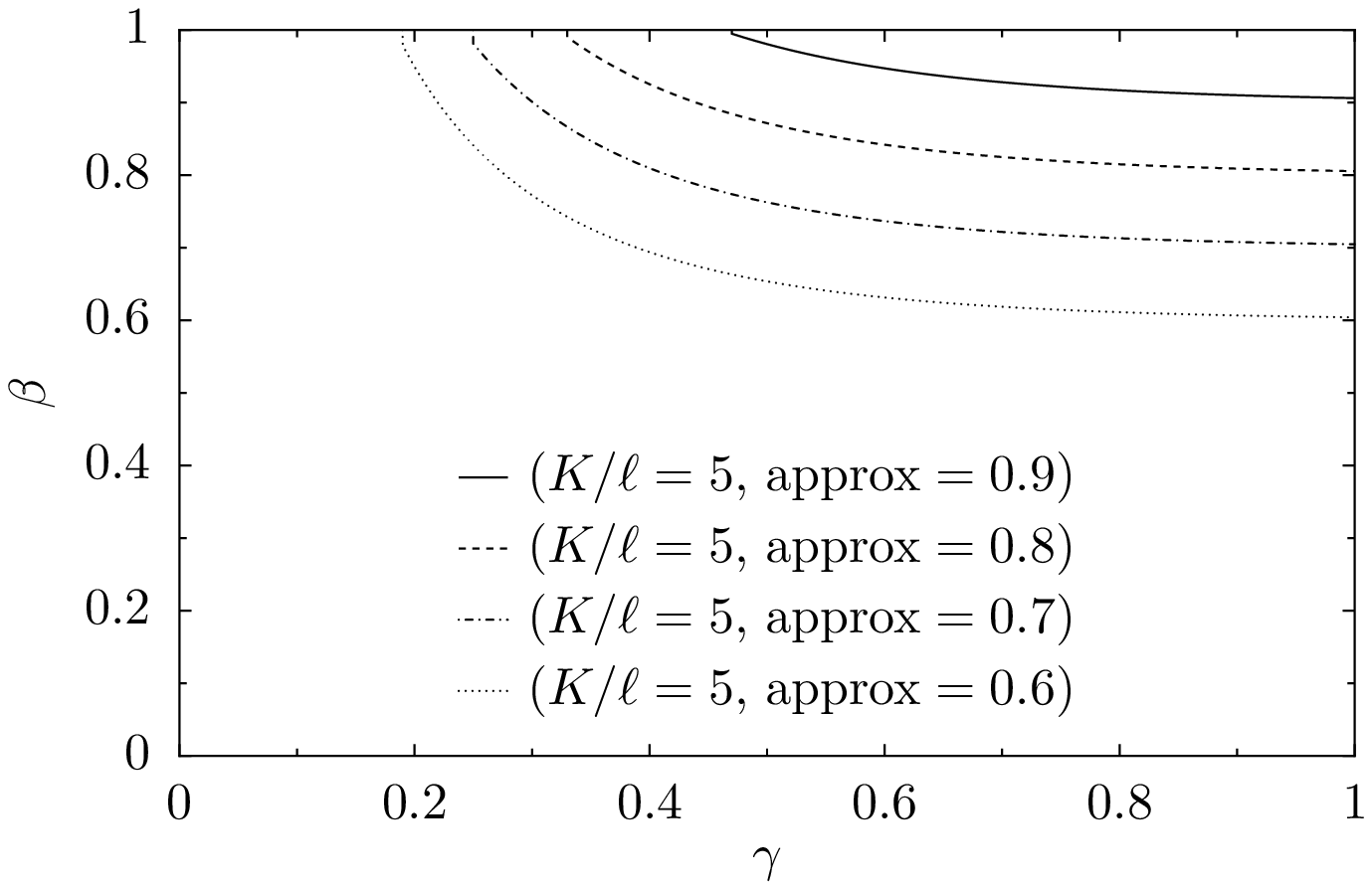}
  (a)
\end{minipage}
\begin{minipage}[h]{0.48\linewidth}
  \centering
  \includegraphics[width=\textwidth]{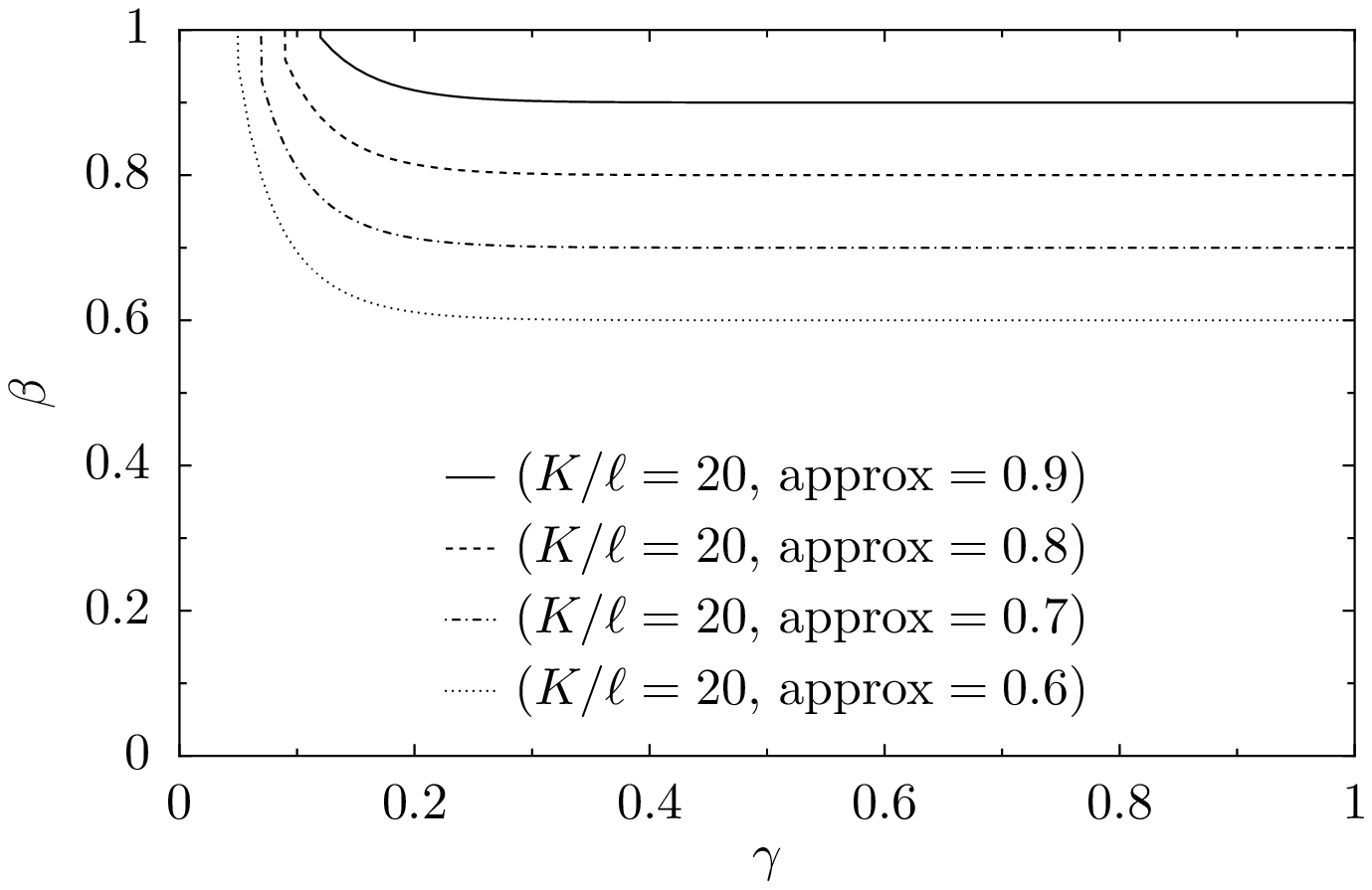}
  (b)
\end{minipage}
\begin{minipage}[h]{0.48\linewidth}
  \centering
  \includegraphics[width=\textwidth]{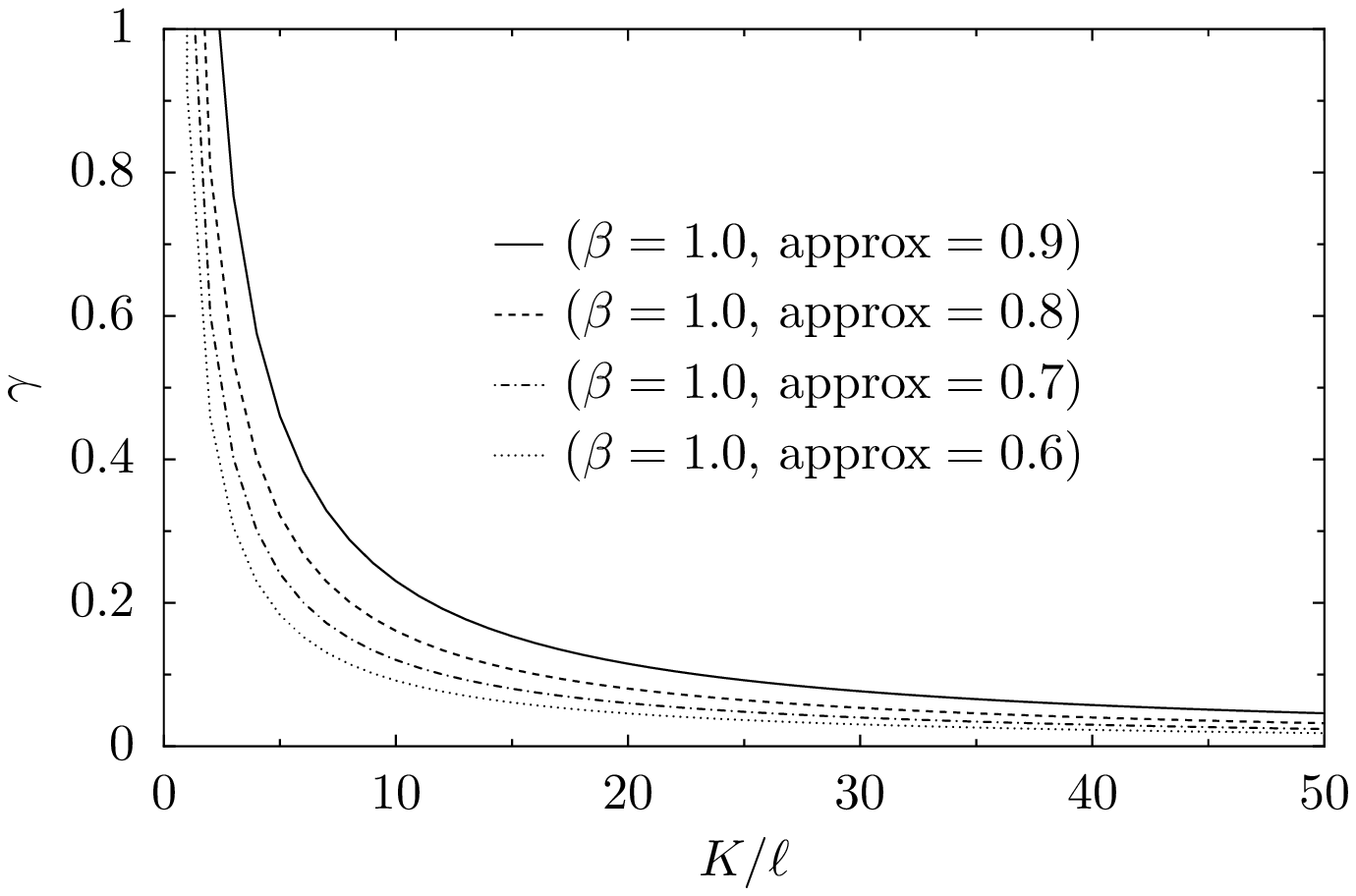}
  (c)
\end{minipage}
\begin{minipage}[h]{0.48\linewidth}
  \centering
  \includegraphics[width=\textwidth]{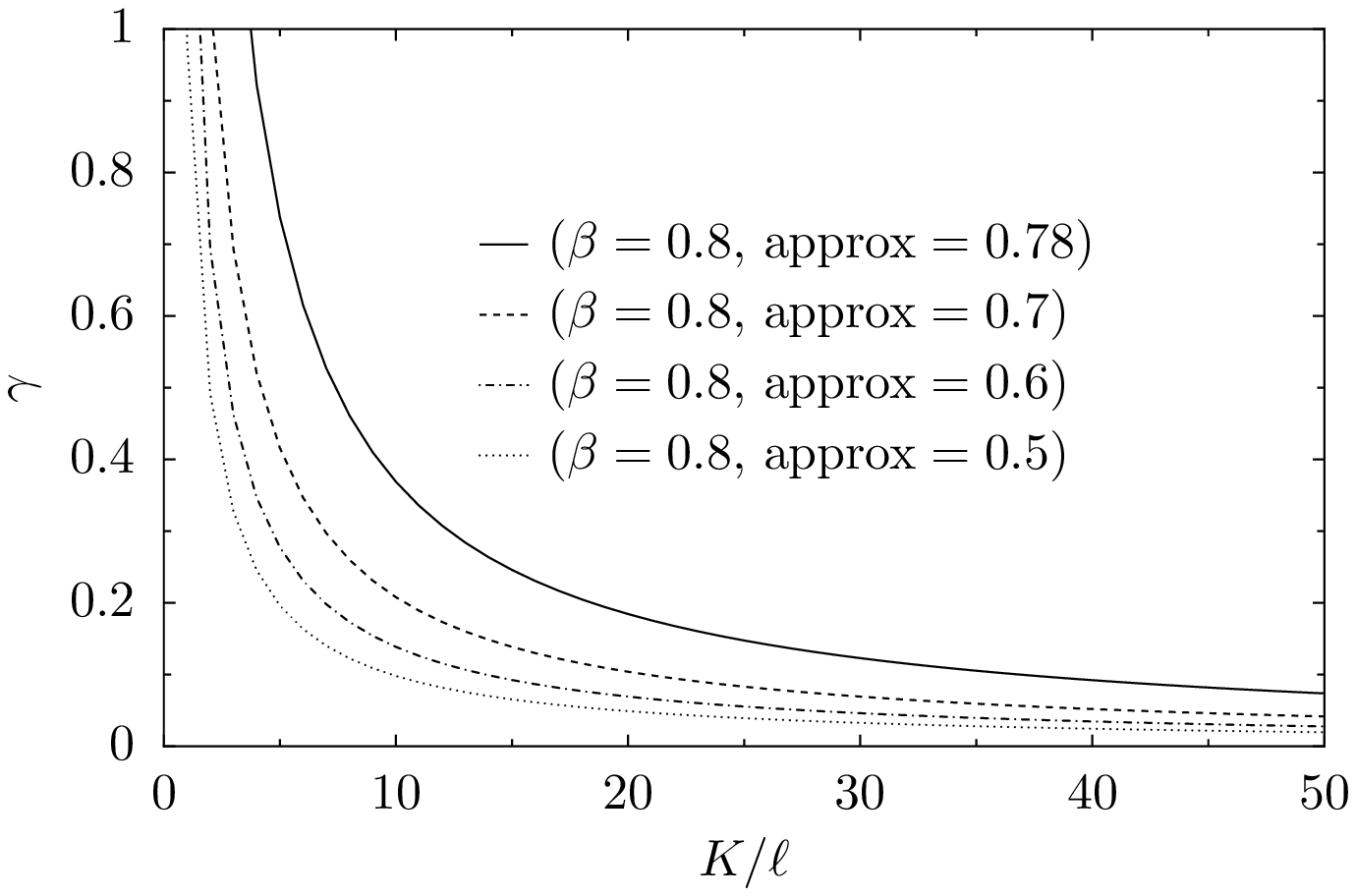}
  (d)
\end{minipage}
\end{center}
\vspace{-0.3cm}
\caption{The approximation ratios of Algorithm~\ref{alg:greedy2} for a
  nonincreasing OWA with at most $\ell$ top coefficients greater than
  zero, for $(\beta,\gamma)$-non-finicky utilities. The lines in
  Figures (a) and (b) depict the relations between the parameters
  $\beta$ and $\gamma$ that, for a given fixed ratio $\frac{K}{\ell}$,
  lead to the same approximation bound. The lines in Figures (c) and (d) depict
  the relations between the parameter $\gamma$ and the ratio $K/\ell$
  that, for a given fixed value of the parameter $\beta$ lead to the
  same approximation bound.}
\label{fig:finicky_approx}

\end{figure*}

Approximation ratio of Algorithm~\ref{alg:greedy2} is particularly
good when $K$ is large compared to $\ell$. This, indeed, is the most
interesting case because for small $K$ we can find optimal solutions
by brute-force search (combining these two approaches leads to a PTAS;
see Theorem~\ref{thm:borda:any-first-ell-ptas} below).  Nevertheless,
Algorithm~\ref{alg:greedy2} often gives a satisfactory approximation
guarantees by itself.  Figure~\ref{fig:finicky_approx} depicts the
classes of non-finicky utilities for which, for a fixed ratio
$K/\ell$, Algorithm~\ref{alg:greedy2} guarantees appropriate
approximation ratios: Parts (a) and (b) of the figure show the
relation that $\beta$ and $\gamma$ have to satisfy to obtain a
particular approximation ratio, for a given value
$\frac{K}{\ell}$. Part (c) shows the relation between the value of
$\gamma$ and the ratio $\frac{K}{\ell}$ that has to be satisfied for
Algorithm~\ref{alg:greedy2} to achieve a particular approximation
ratio under $(1,\gamma)$-non-finicky utilities, and part (d) shows the
same relation for $(0.8,\gamma)$-non-finicky utilities.

Theorem~\ref{thm:borda:nonincreasing} can be generalized to the case
of OWAs that are not nonincreasing (achieving a slightly weaker
approximation ratio).

\begin{lemma}\label{lem:borda:pghl}
  Consider a set $N$ of $n$ agents and a set $A$ of $m$ items,
  where the agents rank the items from the most preferred ones
  to the least preferred ones. Let $K$, $p$, and $t$ be some
  positive integers such that $K \leq m$, $p \leq K$, and $t \leq
  p$.  Let $x = \frac{\gamma}{p} m$. There is a
  polynomial-time algorithm that finds a collection $C$ of up to
  $K/p$ items such that there are at least
  $n\left(1-\exp\left(-\frac{\gamma K}{p^2}\right)\right)$ agents that each rank at least one
  member of $C$ between positions $(t-1)x+1$ and $tx$.
\end{lemma}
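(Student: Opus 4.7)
The plan is to run a natural greedy covering algorithm and analyze it by a pigeonhole-plus-geometric-decay argument in the spirit of the proof of Theorem~\ref{thm:borda:nonincreasing}. For each agent $j \in N$ let $T_j$ denote the set of items that $j$ ranks in positions $(t-1)x+1, (t-1)x+2, \ldots, tx$; by construction $|T_j| = x = \frac{\gamma}{p} m$. Call an agent $j$ \emph{covered} by a current set $C$ if $C \cap T_j \neq \emptyset$. The algorithm initializes $C \leftarrow \emptyset$ and, in each of at most $\lfloor K/p \rfloor$ iterations, adds to $C$ the item $a \in A \setminus C$ that is contained in $T_j$ for the largest number of currently uncovered agents $j$.

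For the analysis, let $n_i$ denote the number of uncovered agents after the $i$-th iteration, so $n_0 = n$. I would then make the following key observation: if agent $j$ is uncovered after iteration $i$, then by definition every item of $T_j$ lies in $A \setminus C$ (otherwise $j$ would be covered). Summing $|T_j| = x$ over the $n_i$ uncovered agents gives a total incidence count of $n_i x$ concentrated on at most $m$ items, so by the pigeonhole principle there is an item $a^{\star} \in A \setminus C$ that lies in $T_j$ for at least $n_i x / m = n_i \gamma / p$ uncovered agents. Selecting this $a^{\star}$ therefore yields
\begin{equation*}
n_{i+1} \;\leq\; n_i - \frac{n_i \gamma}{p} \;=\; n_i\!\left(1 - \frac{\gamma}{p}\right).
\end{equation*}

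Iterating this bound $\lfloor K/p \rfloor$ times and using the standard inequality $1 - y \leq e^{-y}$ gives
\begin{equation*}
n_{\lfloor K/p \rfloor} \;\leq\; n\!\left(1 - \frac{\gamma}{p}\right)^{K/p} \;\leq\; n\exp\!\left(-\frac{\gamma K}{p^{2}}\right),
\end{equation*}
so the number of covered agents is at least $n\bigl(1 - \exp(-\gamma K / p^{2})\bigr)$, which is exactly the claimed bound. The algorithm is clearly polynomial-time, since each of the $O(K/p)$ iterations only scans the $m$ items and the $n$ agents to recompute the ``best'' item.

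The only mild obstacle is handling the implicit integrality: $x = \gamma m / p$ and $K/p$ need not be integers. The standard fix is to replace $x$ by $\lfloor x \rfloor$ (or $\lceil x \rceil$, depending on which side is convenient) and $K/p$ by $\lfloor K/p \rfloor$, and to absorb the rounding into the asymptotic bound; since the lemma states ``up to $K/p$'' items and the exponent $-\gamma K / p^2$ is unchanged, no further adjustment is required. All other steps are routine, so the pigeonhole/geometric-decay estimate above is the heart of the proof.
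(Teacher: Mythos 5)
Your proposal is correct and is essentially the paper's own argument: the paper proves this lemma by invoking the proof of Theorem~\ref{thm:borda:nonincreasing} specialized to $1$-best-OWA with $(1,\frac{\gamma}{p})$-non-finicky utilities and winner-set size $K/p$, noting that the pigeonhole/geometric-decay analysis works verbatim for any segment of $x$ positions, which is exactly the argument you have written out explicitly (including the same recurrence $n_{i+1} \leq n_i(1-\frac{\gamma}{p})$ and the same $1-y \leq e^{-y}$ step). The integrality of $x$ and $K/p$ is glossed over in the paper as well, so your rounding remark is, if anything, more careful than the original.
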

\begin{proof}
  To see that this lemma holds, it suffices to analyze the proof of
  Theorem~\ref{thm:borda:nonincreasing} for $1$-best-OWA, with
  $(1,\frac{\gamma}{p})$-non-finicky utilities, seeking winner set of
  size $\frac{K}{p}$. We note that the proof works equally well
  irrespectively of whether we consider the positions $1$ through $x$,
  or $x+1$ through $2x$, or any other segment of $x$ positions in the
  agents' preference orders.
\end{proof}

\begin{theorem}\label{thm:borda:any-first-ell}
  Fix a positive integer $\ell$ and let $\alpha$ be a family of OWAs
  that have nonzero entries on top $\ell$ positions only.  There is a
  polynomial-time $\beta\left(1 - \ell \exp\left(-\frac{\gamma
        K}{\ell^2}\right)\right)$--approximation algorithm for
  $\alpha$-\textsc{OWA-Winner} for the case of $(\beta,
  \gamma)$--non-finicky utilities.
\end{theorem}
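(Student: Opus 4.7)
The plan is to apply Lemma~\ref{lem:borda:pghl} once for each of the $\ell$ position bands into which we split the top $\gamma m$ ranks of every agent, and then take the union of the resulting item sets as our winner set.

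Concretely, I would set $p=\ell$ and $x=(\gamma/\ell)m$. For each $t\in\{1,\ldots,\ell\}$ I invoke Lemma~\ref{lem:borda:pghl} to obtain a set $C_t$ of at most $K/\ell$ items such that at least $n\bigl(1-\exp(-\gamma K/\ell^2)\bigr)$ agents rank some member of $C_t$ between positions $(t-1)x+1$ and $tx$. Let $W=C_1\cup\cdots\cup C_\ell$, padded arbitrarily to exactly $K$ items if $|W|<K$. Since each call to the lemma runs in polynomial time, so does the whole construction.

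By the union bound, at most $\ell n\exp(-\gamma K/\ell^2)$ agents fail to be covered in at least one band, leaving at least $n\bigl(1-\ell\exp(-\gamma K/\ell^2)\bigr)$ agents that I will call \emph{good}: for every $t$, some item of $W$ falls within their band $t$. Because the bands are pairwise disjoint ranges of preference positions, a good agent sees $\ell$ \emph{distinct} items of $W$ sitting within the top $\ell x=\gamma m$ ranks; by the $(\beta,\gamma)$-non-finickiness hypothesis each such item contributes utility at least $\beta u_{\max}$. Consequently the top $\ell$ utilities the agent derives from $W$ are each at least $\beta u_{\max}$, and since every nonzero entry of $\alpha$ lies in the first $\ell$ slots, the agent's OWA-aggregated contribution is at least $\beta u_{\max}\sum_{i=1}^{\ell}\alpha_i$, \emph{regardless} of how the nonzero entries are arranged among those $\ell$ slots.

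Summing over good agents and comparing with the trivial per-agent upper bound $u_{\max}\sum_{i=1}^{\ell}\alpha_i$ on the optimum yields the claimed ratio $\beta\bigl(1-\ell\exp(-\gamma K/\ell^2)\bigr)$. The main conceptual step, and the reason this bound is weaker than Theorem~\ref{thm:borda:nonincreasing}, is that without monotonicity of $\alpha$ one cannot rely on a single greedy pass to place $\ell$ high-utility items per agent; instead I must explicitly force one item per slot via $\ell$ separate invocations of Lemma~\ref{lem:borda:pghl}, each given only a $1/\ell$ fraction of the budget. This produces both the $K/\ell^2$ in the exponent (from the reduced per-band budget) and the leading $\ell$ factor (from the union bound over bands). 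The minor technicalities---non-integrality of $K/\ell$ and padding $W$ up to exactly $K$---are benign, since padding only adds items and can never lower the top-$\ell$ utility profile a good agent sees.
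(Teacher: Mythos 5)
Your proof is correct and follows essentially the same route as the paper's: split the top $\gamma m$ positions into $\ell$ bands of width $\gamma m/\ell$, invoke Lemma~\ref{lem:borda:pghl} with $p=\ell$ once per band with budget $K/\ell$, and argue that every agent covered in all bands derives utility at least $\beta u_{\max}\sum_{i=1}^{\ell}\alpha_i$ from the union, for any placement of the nonzero OWA entries within the top $\ell$ slots. The only (immaterial) difference is bookkeeping: the paper applies the lemma sequentially to the nested sets of agents surviving the previous iterations and bounds $\left(1-\exp\left(-\frac{\gamma K}{\ell^2}\right)\right)^{\ell}\geq 1-\ell\exp\left(-\frac{\gamma K}{\ell^2}\right)$, whereas you apply it independently to all $n$ agents in each band and use a union bound---both yield exactly the claimed ratio.
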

\begin{proof}
  Consider an input instance $I$ of $\alpha$-\textsc{OWA-Winner} with
  the set $N = [n]$ of agents, with the set $A$ of $m$ items, and
  where we seek winner set of size $K$. Let $\alpha = \langle\alpha_1,
  \ldots, \alpha_\ell,0,\ldots,0\rangle$ be the OWA used in this
  instance. We set $x = \frac{\gamma}{\ell} m$.

  Our algorithm proceeds in $\ell$ iterations. We set $N^{(0)} = N$
  and $n^{(0)} = n$. In the $i$-th iteration, $1 \leq i \leq \ell$,
  the algorithm operates as follows: Using the algorithm from
  Lemma~\ref{lem:borda:pghl}, for $p = \ell$, we find a set $A^{(i)}$
  of up to $K/\ell$ items such that at least
  $n^{(i-1)}\left(1-\exp\left(-\frac{\gamma K}{\ell^2}\right)\right)$
  of the agents from the set $N^{(i-1)}$ each rank at least one of
  these items among positions $(i-1)x+1, \ldots, ix$ of their
  preference orders.  (Strictly speaking, in this setting agents do
  not have preference orders but utility values. For each agent, we
  form his or her preference order by sorting the items in the
  decreasing order of the utlities, breaking the ties arbitrarily.)
  We let $N^{(i)}$ be the set of these agents and we set $n(i) =
  \|N^{(i)}\|$. Finally, we set $W = \bigcup_{i=1}^{\ell}A^{(i)}$ and
  return $W$ as the set of winners (it is easy to see that $W$
  contains at most $K$ items; if $K$ contains fewer than $K$ items
  then we supplement it with $K-\|W\|$ arbitrarily chosen ones).

  By the construction of our algorithm, each of the agents from the
  set $N^{(\ell)}$ ranks at least $\ell$ items from the set $W$ on
  positions no worse than $\ell x = \gamma m$. Thus each such an agent
  assigns to each such an item utility at least equal to $\beta
  u_{\max}$. Consequently, the total utility that the agents from the
  set $N$ derive from the solution $W$ is at least:
  \[
     n^{(\ell)}\left(\sum_{i=1}^{\ell}\alpha_i\right)\beta u_{\max}.
  \]
  This is so, because for each $i$, $1 \leq i \leq \ell$, each of the
  agents in the set $N^{(\ell)}$ derives utility $\alpha_i \beta u_{\max}$
  from the item that she ranks as $i$'th best among the items
  from $W$.

  By construction of our algorithm, we have:
  \[n^{(\ell)} \geq n\left(1-\exp\left(-\frac{\gamma K}{\ell^2}\right)\right)^\ell \geq n\left(1- \ell \exp\left(-\frac{\gamma K}{\ell^2}\right)\right).\]
  Thus, the total utility obtained by the agents is at least:
  \begin{align*}
    u_{ut}^{\owa}(W) \geq n\left(1- \ell \exp\left(-\frac{\gamma
          K}{\ell^2}\right)\right)\left(\sum_{i=1}^{\ell}\alpha_i\right)\beta
    u_{\max} \textrm{.}
  \end{align*}
  Now, since the maximum possible total utility of all the agents is
  upper-bounded by $n(\sum_{i=1}^{\ell}\alpha_i) u_{\max}$, we have
  that our algorithm has approximation ratio $\beta\left(1- \ell \exp\left(-\frac{\gamma K}{\ell^2}\right)\right)$. It is clear that it runs in polynomial time, and so the
  proof is complete.
\end{proof}

Based on this result, we can obtain a PTAS for the analogous setting.

\begin{theorem}\label{thm:borda:any-first-ell-ptas}
  Fix a value $\ell$ and let $\alpha$ be a family of OWAs that have
  nonzero values on top $\ell$ positions only. There is a PTAS for
  $\alpha$-\textsc{OWA-Winner} for the case of (i) Borda-based
  utilities, and (ii) $(1, \gamma)$--non-finicky utilities (assuming
  $\gamma$ is a constant).
\end{theorem}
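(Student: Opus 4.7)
The plan is to bootstrap Theorem~\ref{thm:borda:any-first-ell} into a PTAS by combining its approximation guarantee (which becomes strong for large~$K$) with a brute-force algorithm that handles the small-$K$ regime. Fix an arbitrary $\epsilon>0$, and write $n$ for the number of agents, $m$ for the number of items, and $K$ for the desired size of the winner set.

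For case (ii), where utilities are $(1,\gamma)$-non-finicky with $\gamma$ constant, I would plug $\beta=1$ into Theorem~\ref{thm:borda:any-first-ell} to obtain the ratio $1-\ell\exp(-\gamma K/\ell^2)$. Solving $\ell\exp(-\gamma K/\ell^2)\le\epsilon$ for $K$ yields a constant threshold $K_\epsilon=\lceil(\ell^2/\gamma)\ln(\ell/\epsilon)\rceil$; whenever $K\ge K_\epsilon$ the algorithm of Theorem~\ref{thm:borda:any-first-ell} already achieves ratio at least $1-\epsilon$. For the remaining case $K<K_\epsilon$, since $K_\epsilon$ is independent of the instance size, I would simply enumerate all $\binom{m}{K}$ candidate winner sets and evaluate each in time polynomial in $n$ and $m$, giving overall running time $O(nm\cdot m^{K_\epsilon})$.

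For case (i), where utilities are Borda-based, I would exploit the extra flexibility from the observation that Borda utilities are $(x,1-x)$-non-finicky for every $x\in[0,1]$. Choosing $x=1-\epsilon/2$ sets $\beta=1-\epsilon/2$ and $\gamma=\epsilon/2$, so Theorem~\ref{thm:borda:any-first-ell} guarantees ratio $(1-\epsilon/2)\bigl(1-\ell\exp(-(\epsilon/2)K/\ell^2)\bigr)$. Picking $K_\epsilon$ large enough that $\ell\exp(-(\epsilon/2)K_\epsilon/\ell^2)\le\epsilon/2$ forces the product above to be at least $(1-\epsilon/2)^2\ge 1-\epsilon$, and as before $K_\epsilon$ depends only on $\ell$ and $\epsilon$, so the regime $K<K_\epsilon$ is resolved by brute force in polynomial time.

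The only real obstacle is cosmetic: verifying that the two regimes (large $K$ handled by Theorem~\ref{thm:borda:any-first-ell}, small $K$ handled by exhaustive enumeration) meet at a threshold $K_\epsilon$ which is indeed a constant once $\ell$, $\gamma$, and $\epsilon$ are fixed. No new algorithmic ideas beyond those already developed in Theorem~\ref{thm:borda:any-first-ell} and the standard brute-force trick of Proposition~\ref{pro:fixed-K} are needed, and in the Borda case the additional ingredient is simply tuning the $(\beta,\gamma)$ pair supplied by the non-finickiness observation against the chosen~$\epsilon$.
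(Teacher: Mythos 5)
Your proposal is correct and follows essentially the same route as the paper's proof: run the algorithm of Theorem~\ref{thm:borda:any-first-ell} whenever $K$ exceeds a constant threshold $K_\epsilon$ (depending only on $\ell$, $\gamma$, and $\epsilon$), and fall back to brute-force enumeration of all size-$K$ sets below that threshold. You are in fact slightly more explicit than the paper, which leaves implicit the tuning $(\beta,\gamma)=(1-\epsilon/2,\epsilon/2)$ via the $(x,1-x)$-non-finickiness of Borda utilities that is needed to make case (i) go through.
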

\begin{proof}
  For every $\epsilon$ we show a polynomial-time algorithm with
  approximation ratio $(1-\epsilon)$. Consider some $\epsilon$, $0
  \leq \epsilon \leq 1$. There exists a value $K_\epsilon$ such that
  for each $K > K_\epsilon$ it holds that $\ell
  \exp\left(-\frac{\gamma K}{\ell^2}\right) < \epsilon$. For each
  instance $I$ of $\alpha$-\textsc{OWA-Winner} where we seek winner
  set of size at least $K_\epsilon$, we run the algorithm from
  Theorem~\ref{thm:borda:any-first-ell}. For the remaining cases,
  where the winner-set size is bounded by a constant, we use a
  brute-force algorithm.
\end{proof}

We can also obtain a PTAS for \textsc{OWA-Winner} for geometric
progression OWAs for these classes of utilities. In essence, for
geometric progression it suffices to focus on a small number of top
entries in the OWA vector.  This is quite a useful result: Some of our
scenarios from Section~\ref{sec:scenarios} yield OWAs of this form.

\begin{corollary}\label{cor:borda-geometric-ptas}
  Fix a value $p > 1$. There is a PTAS for
  $\mathrm{gprog}[p]$-\textsc{OWA-Winner} for the case of (i) Borda-based
  utilities, and (ii) $(1, \gamma)$--non-finicky utilities (assuming $\gamma$ is a constant).
\end{corollary}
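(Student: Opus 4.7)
The plan is to derive the PTAS by truncating $\mathrm{gprog}[p]$ to its top $\ell$ entries and invoking Theorem~\ref{thm:borda:any-first-ell-ptas} on the truncation. Fix $\epsilon>0$ (and recall $p>1$ is fixed). I will choose a constant $\ell=\ell(\epsilon,p,\gamma)$ below and define the truncated family $\alpha'$ by
\[
\alpha'^{(K)}_i = \begin{cases} p^{K-i} & \text{if } i \le \ell,\\ 0 & \text{otherwise.}\end{cases}
\]
Because $\ell$ is a constant and $\alpha'$ has nonzero entries only on the top $\ell$ positions, Theorem~\ref{thm:borda:any-first-ell-ptas} gives a polynomial-time $(1-\epsilon/2)$-approximation algorithm for $\alpha'$-\textsc{OWA-Winner} in both settings (i) and (ii). Apply it to the input to obtain a winner set $W$.

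Next I would compare $u_{ut}^{\alpha}$ and $u_{ut}^{\alpha'}$. Component-wise $\alpha^{(K)}\ge \alpha'^{(K)}$, so $u_{ut}^{\alpha}(W)\ge u_{ut}^{\alpha'}(W)$ for every $W$. Conversely, the contribution of positions $\ell+1,\dots,K$ to the full OWA, for any agent, is bounded by $u_{\max}\sum_{i=\ell+1}^{K}p^{K-i}\le u_{\max}\,p^{K-\ell}/(p-1)$, so writing $W^*$ for an optimal solution of $\alpha$-\textsc{OWA-Winner},
\[
u_{ut}^{\alpha}(W^*) - u_{ut}^{\alpha'}(W^*) \;\le\; \Delta \;:=\; \frac{n\,u_{\max}\,p^{K-\ell}}{p-1}.
\]
Combining the two bounds with the algorithm's guarantee yields
\[
u_{ut}^{\alpha}(W) \;\ge\; u_{ut}^{\alpha'}(W) \;\ge\; (1-\epsilon/2)\bigl(u_{ut}^{\alpha}(W^*)-\Delta\bigr).
\]

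The main step is a lower bound $u_{ut}^{\alpha}(W^*)\ge c\,p^{K-1}\,n\,u_{\max}$ for a constant $c>0$, which allows $\Delta/u_{ut}^{\alpha}(W^*)$ to be driven below $\epsilon/2$ by taking $\ell$ to be a (sufficiently large) constant. Since $\alpha_1=p^{K-1}$, for any $W$ we have $u_{ut}^{\alpha}(W)\ge p^{K-1}\,u_{ut}^{1\text{-best}}(W)$, so it suffices to lower bound the $1$-best optimum. For Borda utilities, $\sum_{a\in A}\sum_j u_{j,a}=n\,m(m-1)/2$, so some item $a^*$ satisfies $\sum_j u_{j,a^*}\ge n(m-1)/2=n\,u_{\max}/2$; hence $\mathrm{OPT}_{1\text{-best}}\ge n\,u_{\max}/2$. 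In case (ii), since each agent assigns utility $u_{\max}$ to at least $\gamma m$ items, averaging yields an item that gives $u_{\max}$ to at least $\gamma n$ agents, so $\mathrm{OPT}_{1\text{-best}}\ge \gamma\,n\,u_{\max}$. Writing $c$ for the resulting constant ($c=1/2$ or $c=\gamma$),
\[
\frac{\Delta}{u_{ut}^{\alpha}(W^*)} \;\le\; \frac{n\,u_{\max}\,p^{K-\ell}/(p-1)}{c\,p^{K-1}\,n\,u_{\max}} \;=\; \frac{1}{c\,(p-1)\,p^{\ell-1}}.
\]
Taking $\ell$ to be the smallest integer with $p^{\ell-1}\ge 2/(c(p-1)\epsilon)$ forces this to be at most $\epsilon/2$, and then $u_{ut}^{\alpha}(W)\ge (1-\epsilon/2)^2\,u_{ut}^{\alpha}(W^*)\ge (1-\epsilon)\,u_{ut}^{\alpha}(W^*)$, as required.

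The only non-bookkeeping step is the lower bound on $u_{ut}^{\alpha}(W^*)$, and even there the argument is a one-line pigeonhole in each case. The crucial structural features that make the reduction work are (a) $p>1$, so the geometric tail is a vanishing fraction of the top entry, and (b) either Borda-based or $(1,\gamma)$-non-finicky utilities, which guarantee that the $1$-best optimum is $\Omega(n\,u_{\max})$ rather than arbitrarily tiny compared to $u_{\max}$; without such a guarantee the truncation error $\Delta$ could dominate the optimum.
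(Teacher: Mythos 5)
Your proof is correct, and it shares the paper's high-level strategy---truncate $\mathrm{gprog}[p]$ to its top $\ell$ entries and invoke Theorem~\ref{thm:borda:any-first-ell-ptas}---but the way you control the truncation error is genuinely different. The paper proves a \emph{per-set, multiplicative} bound that holds for \emph{arbitrary} utilities: for every $K$-element set $W$,
\[
u_{ut}^{\mathrm{gprog}[p]}(W) \;\le\; \Bigl(1+\tfrac{\epsilon}{2}\Bigr)\, u_{ut}^{\mathrm{gprog}[p]_{|\ell}}(W),
\]
obtained by observing that each tail utility $u^{\downarrow}_{i,w_h}$ (for $h>\ell$) is at most the $\mathrm{gprog}[p]$-weighted average of that agent's top-$\ell$ utilities, so the tail contributes at most a $\bigl(\sum_{h>\ell}\mathrm{gprog}[p]_h\bigr)/\bigl(\sum_{g\le\ell}\mathrm{gprog}[p]_g\bigr)$ fraction of the head. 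Consequently, in the paper $\ell$ depends only on $p$ and $\epsilon$, and the Borda/non-finicky hypothesis is used solely so that Theorem~\ref{thm:borda:any-first-ell-ptas} applies. You instead bound the truncation error \emph{additively} by $\Delta = n\,u_{\max}\,p^{K-\ell}/(p-1)$ and convert to a multiplicative guarantee via a pigeonhole lower bound $u_{ut}^{\alpha}(W^*)\ge c\,p^{K-1} n\, u_{\max}$ (with $c=1/2$ for Borda and $c=\gamma$ for $(1,\gamma)$-non-finicky utilities), so your argument consumes the utility restriction twice and your $\ell$ additionally depends on $\gamma$. Both routes are sound (I checked your pigeonhole bounds, the geometric tail estimate, and the final chaining of inequalities, and they all hold). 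What the paper's version buys is modularity and generality: its truncation lemma is utility-agnostic, so it could be reused in settings with unrestricted utilities, e.g., on top of the $(1-1/e)$-approximation of Theorem~\ref{thm:greedyAprox}. Your version is more elementary, and your closing remark is accurate \emph{for your own argument}---an additive tail bound of order $n\,u_{\max}\,p^{K-\ell}$ is only useful when the optimum is $\Omega(p^{K-1} n\, u_{\max})$---but note that the paper's inequality shows the truncation step itself never needs such a guarantee; only the invoked PTAS does.
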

\begin{proof}
  Our goal is to show an algorithm that for a given value $\epsilon$,
  $\epsilon > 0$, in polynomial time outputs a
  $(1-\epsilon)$-approximate solution for
  $\mathrm{gprog}[p]$-\textsc{OWA-Winner}. Let us fix the value of
  such $\epsilon$. The idea of our proof is to truncate the vector
  describing $\mathrm{gprog}[p]$ OWA to consider only some $\ell$
  nonzero items on the top, where $\ell$ depends on $\epsilon$ only,
  and to run the algorithm from
  Theorem~\ref{thm:borda:any-first-ell-ptas}.

  For a given number $t$, let $S_{t}$ be the sum of the first $t$ coefficients
  of $\mathrm{gprog}[p]$. We have:
  \begin{align*}
    S_{t} &= \mathrm{gprog}[p]_t + \mathrm{gprog}[p]_{t-1} + \dots + \mathrm{gprog}[p]_1 \\
          &= p^{K-t} + p^{K-(t-1)} + \dots + p^{K-1} = p^{K-t}\frac{p^{t}-1}{p-1}.
  \end{align*}
  We fix $\ell = \lceil \log_{p}(\frac{2}{\epsilon}) \rceil$. Now,
  consider the ratio $r = S_{\ell}/S_K$:
  \begin{align*}
    r = \frac{S_{\ell}}{S_K} = p^{K-\ell}\frac{p^{\ell}-1}{p^K-1}
       > p^{K-\ell}\frac{p^{\ell}-1}{p^K}
      = 1 - \frac{1}{p^{\ell}} \geq 1 -  \frac{1}{p^{\log_{p}(\frac{2}{\epsilon})}} = 1 - \frac{\epsilon}{2} \text{.}
  \end{align*}
  Intuitively, the above inequality says that $1 - \frac{\epsilon}{2}$
  fraction of the total weight of $\mathrm{gprog}[p]$ OWA is
  concentrated in its first $\ell$ coefficients.

  Let $\mathrm{gprog}[p]_{|\ell}$ denote the OWA obtained from
  $\mathrm{gprog}[p]$ by replacing all coefficients with indices
  greater than $\ell$ with 0. Let $\mathcal{A}$ be a $(1 -
  \frac{\epsilon}{2})$-approximation algorithm for
  $\mathrm{gprog}[p]_{|\ell}$-\textsc{OWA-Winner}. From
  Theorem~\ref{thm:borda:any-first-ell-ptas} we know that such an
  algorithm exists. It is easy to see that $\mathcal{A}$ is a $(1 -
  \epsilon)$-approximation algorithm for
  $\mathrm{gprog}[p]$-\textsc{OWA-Winner}. Indeed, the utility under
  $\mathrm{gprog}[p]_{|\ell}$ for every $K$-element set $W$ is close
  to the utility of $W$ under $\mathrm{gprog}[p]$ (recall the $\vec
  x^{\downarrow}$ notation for sorted sequences; the inequality in the
  second line holds because for each $i$ we have $\sum_{g=1}^\ell
  \mathrm{gprog}[p]_gu^{\downarrow}_{i,w_h} \leq \sum_{j=1}^\ell
  \mathrm{gprog}[p]_ju^{\downarrow}_{i,w_j}$):
{\footnotesize
  \begin{align*}
    u_{ut}^{\mathrm{gprog}[p]}(W) &= \sum_{i=1}^n\sum_{j=1}^K \mathrm{gprog}[p]_j u^{\downarrow}_{i, w_j}
     \leq \sum_{i=1}^n\Big(\sum_{j=1}^\ell \mathrm{gprog}[p]_j u^{\downarrow}_{i, w_j} +  \sum_{h=\ell + 1}^K \mathrm{gprog}[p]_h \frac{\sum_{j=1}^\ell \mathrm{gprog}[p]_j u^{\downarrow}_{i, w_j}}{\sum_{g=1}^\ell \mathrm{gprog}[p]_g} \Big)\\
    & = \sum_{i=1}^n\sum_{j=1}^\ell \mathrm{gprog}[p]_j u^{\downarrow}_{i, w_j}\left(1 + \frac{\sum_{h=\ell + 1}^K \mathrm{gprog}[p]_h}{\sum_{g=1}^\ell \mathrm{gprog}[p]_g} \right)
    \leq \sum_{i=1}^n\sum_{j=1}^\ell \mathrm{gprog}[p]_j u^{\downarrow}_{i, w_j}\left(1 + \frac{\epsilon}{2}\right) \\
    & = \left(1 + \frac{\epsilon}{2}\right) u_{ut}^{\mathrm{gprog}[p]_{|\ell}}(W)
    \text{.}
  \end{align*}
}
  From which we get that for every $W$:
  \begin{align*}
    u_{ut}^{\mathrm{gprog}[p]_{|\ell}}(W) \geq (1 -
    \frac{\epsilon}{2})u_{ut}^{\mathrm{gprog}[p]}(W) \text{.}
  \end{align*}
  This completes the proof because algorithm $\calA$ returns a
  $(1-\frac{\epsilon}{2})$-approximate solution for
  $\mathrm{gprog}[p]_{|\ell}$-\textsc{OWA-Winner} and
  $(1-\frac{\epsilon}{2})(1-\frac{\epsilon}{2}) \geq 1-\epsilon$.
\end{proof}

To summarize, in this section we have shown that in spite of the
intrinsic hardness of the \textsc{OWA-Winner} problem, there are very
natural classes of utilities and OWA vectors for which the problem can
be solved quite accurately and very efficiently.

\section{Related Work}\label{sec:related}

In this section we give a more detailed overview of various research
lines that are related to our work.

Weighing intrinsic values by coefficients that are a function of their
rank in a list is of course not new. Ordered Weighted Average
operators have been used extensively in multicriteria decision making
and, to a lesser extent, in social
choice~\cite{DBLP:series/sfsc/KacprzykNZ11}; the vector of values then
corresponds to criteria (in MCDM) or to agents (in social
choice). Also, {\em rank-dependent expected utility} (RDEU)
\cite{Quiggin93} is a well-known research stream in decision theory,
whose starting point is the construction of models that explain
Allais' paradox: given a set of possible consequences of an act, the
contribution of a possible consequence on the agent's RDEU is a
function of its probability and of its rank in the list of
consequences ordered by decreasing probability. While these three
research streams use ranks to modify the contribution of a {\em
  criterion}, an {\em agent}, or a {\em possible consequence}, in our
setting they modify the contribution of {\em items}, our final aim
being to select an optimal set of items. Since we do not select
criteria, agents or possible consequences, it is not obvious how our
results can apply to these three aforementioned research
fields.

There are three recent pieces of research that use OWA operators in
the context of voting and that call for detailed discussion. We
describe them in the chronological order.

Goldsmith et al.~\cite{GLMP14} define {\em rank-dependent} scoring
rules.  Under standard positional scoring rules, the score of a
candidate is the sum of the scores it obtains from all the voters,
where the score that a candidate obtains from a given voter depends
only on his or her rank in this voter's preference
order. Rank-dependent scoring rules generalize this idea as
follows. Instead of simply summing up the scores of a given candidate,
they apply an OWA operator to the list of the scores that he or she
got from the voters.
Thus a rank-dependent scoring rule is
defined by a scoring vector (a function mapping ranks to scores) and
an OWA operator. Here, OWAs replace the sum operator for aggregating
the scores coming from {\em different agents}, while in our setting
they aggregate the scores of {\em different object} for a fixed agent.

Amanatidis et al.~\cite{ABLMR15} define a family of committee election
rules (which can also be used for multiple referenda) based on the
following principle. Each voter specifies his or her preferred
committee and each voter's disutility for a committee is given by the
Hamming distance between the committee and the voter's preferred
one. Then the disutilities of the voters are aggregated using an OWA
operator. The committee with the lowest aggregated disutility wins.
(In the particular case of the sum operator, the obtained rule is the
Bloc committee election rule, while in the case of the minimum, the
obtained rule is the {\em Minimax Approval Voting} rule; see the work
of Brams et al.~\cite{mavVoting} for the definition and other works
for computational
discussions~\cite{leg-mar-meh:c:approval-minimax,byr-sor:c:minimax-approval-ptas,mis-nab-sin:c:minimax-approva}.)
They obtain a number of hardness and approximability results, which
cannot be compared to ours because in their work, again, OWAs are used
for aggregating scores coming from {\em different agents}.

Finally, the work of Elkind and Isma\"ili~\cite{elk-ism:owa-cc} is
probably the closest one to ours. They study multiwinner elections and
they use OWAs to define generalizations of the Chamberlin--Courant
rule but, once again, they use OWAs to aggregate the utilities for a
committee coming from different agents. The standard utilitarian
Chamberlin--Courant rule sums up the scores that a committee gets from
different voters, whereas the egalitarian variant considers the
minimum score a committee receives. They generalize this idea by using
an OWA operator, in effect obtaining a spectrum of rules between the
utilitarian and the egalitarian variants. They obtain a number of
complexity results, both in the general case and in specific cases
corresponding to domain restrictions. For the same reason as in the
preceding paragraphs, their results are incomparable to ours.

In the three pieces of research discussed above, OWA operators
aggregate scores or utilities given to candidates or committees by
different agents, which is very different from our use of
OWAs. Nonetheless, there exists a high-level common point between the
four approaches.  In all cases the rules corresponding to the sum of
scores, and to either the minimum or the maximum of scores, were
already known and seen as interesting, but somewhat extreme. In all
cases, OWAs give rise to an interpolation between these extremities,
leading to rules and approaches that are likely to be interesting in
practice.


Let us now move on to other related works and other related streams of
research.  Several known settings are recovered as particular cases of
our general model. In particular, this applies to the case of the
Chamberlin--Courant proportional representation
rule~\cite{ccElection}, to the case of Proportional Approval
Voting~\shortcite{pavVoting}, and to (variants of) the budgeted social
choice
model~\cite{budgetSocialChoice,ore-luc:c:cc-online,bou-lu:c:value-directed-cc}.
Computational complexity of the Chamberlin--Courant rule was first
studied by Procaccia et al.~\shortcite{complexityProportionalRepr},
its parameterized complexity was analyzed by
Betzler~et~al.~\shortcite{fullyProportionalRepr}, and the complexity
under restricted domains was studied by
Betzler~et~al.~\shortcite{fullyProportionalRepr}, Skowron et
al.~\cite{sko-yu-fal-elk:j:mwsc}, Yu et
al.~\cite{yu-cha-elk:c:sp-tree}, and Clearwater et
al.~\cite{cle-pup-sli:c:median-graph-sc-cc}. The first approximation
algorithm was proposed by
Lu~and~Boutilier~\shortcite{budgetSocialChoice}. The results on
approximability were then extended in several directions by
Skowron~et~al.~\shortcite{sko-fal-sli:c:multiwinner,sko-fal:t:maxcover}. 
Proportional Approval Voting was studied computationally and
axiomatically by Aziz et
al.~\cite{azi-gas-gud-mac-mat-wal:c:approval-multiwinner,azi-bri-con-elk-fre-wal:c:justified-representation}
and by Elkind and Lackner~\cite{elk-lac:c:dichotomous-prefs}.

Group recommender systems (see, e.g., the work of O'Connor et
al.~\shortcite{OConnorCKR01} for one of the first approaches, and the
surveys of Jameson and Smyth~\shortcite{JamesonS07} and of
Masthoff~\shortcite{Masthoff10}) aim at recommending sets or sequences
of items (such as a set of television programs or a sequence of songs)
to a group of users, based on preferences of all group members. Two
mainstream approaches have been developed (see the survey of Jameson
and Smyth~\cite{JamesonS07}): those based on the construction of an
`average user' whose preferences are built by aggregating the
preferences of the individuals in the group, and those based on
producing individual recommendations and aggregating them. Unlike
these, our approach (which recommends sets, but not yet sequences)
proceeds in a single step, and enables a fine-tuning of the
contribution of an item to each user's utility depending on the number
of better items (for that user) in the list. 

The facility location problem (\textsc{fl}) is closely related to
$1\text{-}\mathrm{best}$-\textsc{OWA-Winner}. In \textsc{fl}, however,
the goal is to minimize the dissatisfaction of the agents instead of
maximizing their utility (satisfaction). Although, as far as exact
solutions are concerned both formulations are equivalent, there is a
significant difference in the quality of approximation (the difference
between approximation guarantees for the maximization and minimization
formulations of $1\text{-}\mathrm{best}$-\textsc{OWA-Winner} for Borda
utilities is described by
Skowron~et~al.~\shortcite{sko-fal-sli:c:multiwinner}). Some works
focus on general dissatisfaction
functions~\cite{Fellows:2011:FLP:1982694.1982895}, but most of the
results were established for dissatisfactions corresponding to the
distances, and thus satisfying the triangle
inequality~\cite{Jain:2001:AAM:375827.375845, ShmoysTA97}. Also, in
\textsc{fl} the goal is to minimize the dissatisfaction of the
worst-off agent (the egalitarian view). The utilitarian version of the
problem is called
\textsc{k-median}~\cite{Jain:2001:AAM:375827.375845}. The
parameterized complexity of the problem was analyzed by Fellows and
Fernau~\shortcite{Fellows:2011:FLP:1982694.1982895}. The approximation
algorithms include those of Chukad and
Williamson~\cite{Chudak:2005:IAA:1047770.1047776}, those of Jain and
Vazirani~\cite{Jain:2001:AAM:375827.375845}, and those of Shmoys et
al.~\cite{ShmoysTA97}. Interestingly, a local-search algorithm (which,
to the best of our knowledge, is the best known approximation
algorithm for the capacitated version of \textsc{fl}~\cite{
  Chudak:2005:IAA:1047770.1047776}) is also a
$\frac{1}{2}$-approximation algorithm for maximizing nondecreasing
submodular functions~\cite{submodular}, and thus for
\textsc{OWA-Winner} with non-decreasing utility functions. We conclude
that it would be interesting to compare the algorithms for \textsc{fl}
and \textsc{k-median} with different algorithms for
\textsc{OWA-Winner} on real preference data (e.g., on the data from
PrefLib, collected by Mattei and
Walsh~\cite{conf/aldt/MatteiW13}).

\section{Summary}\label{sec:summary}

Our contribution is threefold. First, we have proposed a new model for
the selection of a collective sets of items. This model appears to be
very general, encompasses several known frameworks, and can be applied
to various domains such as committee elections, group recommendation,
and beyond. Second, we have investigated the computational feasibility
of the model, depending on the various assumptions that we can make
about the agents' utilities and the choice of the OWA vector. Table
\ref{tab:summary} in Section~\ref{sec:overview} gives a summary of our
results. We note that many of these results directly related to the
OWA families that appear in the settings from
Section~\ref{sec:scenarios} that were our motivating force.  Third, we
have defined non-finicky utilities that model settings where agents
are relatively ``easy to please.'' We believe that non-finicky
utilities may find applications far beyond our framework.

Some of our results look negative, while some others (especially in
the case of non-finicky utilities) are on the positive side. However,
the way the results should be interpreted depends on the application
domain. In political elections and other high-stake domains, it is
appealing to view an approximation algorithm as a new, full-fledged
voting rule, which may enjoy many desirable properties (on this point
see the works of Caragiannis et
al.~\cite{car-cov-fel-hom-kak-kar-pro-ros:j:dodgson,car-kak-kar-pro:j:dodgson-acceptable},
Skowron et al.~\cite{sko-fal-sli:c:multiwinner}, and Elkind et
al.~\cite{elk-fal-sko-sli:c:multiwinner-rules}). In particular, we
have shown that the election system Sequential Proportional Approval
Voting, SPAV, (which has been known long before the computational
complexity theory was developed) is actually a greedy approximation
algorithm for the Proportional Approval Voting (PAV) election rule,
which is an interesting result {\em per se}.  (The reader may also
wish to consult the paper of Aziz et
al.~\cite{azi-gas-gud-mac-mat-wal:c:approval-multiwinner} regarding
the complexity of approval-based multiwinner rules.)  Yet, it is
arguably not reasonable to use an approximation algorithm (even with a
good performance guarantee) if it is viewed as nothing more than an
approximation algorithm of another rule, and it is even less
reasonable to use a heuristic search algorithm (when there is no good
approximation algorithm); this implies that using this model for
political elections is feasible when the number of candidates is small
enough, but can become problematic beyond that (unless we define the
approximation algorithm to be the new voting rule, as said above).  On
the other hand, in low-stake application domains (which can include
some committee elections, and of course group recommender systems), it
may become perfectly reasonable, and in that case even NP-hardness and
inapproximability results should not discourage us from using the
model. For these domains, our negative results only tell us that we
may have to resort to heuristic search algorithms. Developing such
algorithm is one of the interesting directions for further research.

Our work leads to many other open problems. In particular, one might
want to strengthen our approximation algorithms, provide algorithms
for more general cases, provide more inapproximability results. Among
these problems, a particularly interesting one regards the
approximability of \textsc{OWA-Winner} for the arithmetic progression
family of OWAs. For this case, our set of results is very limited.  In
particular, can one provide a PTAS for arithmetic-progression OWAs
under non-finicky (in particular Borda-based) utilities? Can one do so
for $\frac{K}{2}$-best OWAs/$K$-median OWAs?  Can one do so for the
harmonic OWA, used in Proportional Approval Voting?

\bibliographystyle{abbrv}
\bibliography{main}

\appendix

\section{Proofs of Theorem~\ref{thm:borda:k-1-best}}

\newtheorem*{tthmbordakminusonebest}{Theorem~\ref{thm:borda:k-1-best}}

\begin{tthmbordakminusonebest}
  \thmbordakminusonebest
\end{tthmbordakminusonebest}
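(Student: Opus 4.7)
The plan is to adapt the reduction used in the proof of Theorem~\ref{thm:k-1-best} to Borda-based utilities, by reducing from \textsc{VertexCover} (or, for extra regularity, \textsc{CubicVertexCover}). Given a graph $G=(V,E)$ with $|V|=m$, $|E|=n$ and an integer $K$, I would again create, for each edge $e_i=\{v_{i,1},v_{i,2}\}$, a pair of agents $e_i^1,e_i^2$ whose preferences mirror the complementary $\{0,1\}$-utility pair used in the approval proof: agent $e_i^1$ ranks $v_{i,1}$ and $v_{i,2}$ as high as possible among the non-dummy items, while $e_i^2$ ranks them as low as possible. The target is to show that the combined $(K'-1)$-best Borda utility of the two agents is strictly larger when at least one endpoint of $e_i$ is in the committee than when neither is, just as in the approval case.

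The principal conceptual obstacle compared to approval is that Borda spreads utility across all positions, so the item ``dropped'' by the $(K'-1)$-best operator is not automatically a zero and its identity depends on the committee chosen. To stabilise the construction, I would introduce a polynomially large set $D$ of dummy items that every agent ranks at the very top of their preference, together with enough auxiliary ``dummy agents'' (caring only about $D$) to force any optimal committee of size $K'=|D|+K$ to contain all of $D$. This confines the non-trivial part of the decision to picking a $K$-subset $W\subseteq V$ and, because items in $D$ dominate every vertex in Borda score for every real agent, it guarantees that the minimum $\min_{w\in W'}u_i(w)$ appearing in the identity $(K'-1)\text{-best}(W')=\sum_{w\in W'}u_i(w)-\min_{w\in W'}u_i(w)$ is always attained on a vertex from $W$.

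The heart of the argument is then the careful design of the non-dummy rankings. I would arrange the preferences of $e_i^1$ and $e_i^2$ over $V\setminus\{v_{i,1},v_{i,2}\}$ in mutually reversed order, so that $b_{e_i^1}(v)+b_{e_i^2}(v)$ is constant for every non-endpoint vertex $v$, and then replicate each pair of agents with carefully shifted permutations of the non-endpoints so that the aggregate of per-agent minima $\sum_i\min_{w\in W}u_i(w)$ depends, up to a $W$-independent additive constant, only on the cardinalities $|W\cap\{v_{i,1},v_{i,2}\}|$. Once this cancellation is in place, the combined contribution of each pair $(e_i^1,e_i^2)$ takes one of two values, a larger one when $e_i$ is covered by $W$ and a strictly smaller one when it is not, so attaining a suitably chosen threshold $\tau$ in the Borda \textsc{OWA-Winner} instance is equivalent to $G$ admitting a vertex cover of size $K$.

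The main obstacle, and the reason the proof is relegated to the appendix, is precisely this bookkeeping: one must verify that across every possible $W$ the varying identity of the dropped item still leaves a robust, uniformly positive ``covered versus uncovered'' gap. Choosing $|D|$, the number of copies of each agent pair, and the precise permutations of the non-endpoint vertices so that the cancellations are exact, while keeping the whole construction polynomial in $m$ and $n$, is where most of the technical effort lies.
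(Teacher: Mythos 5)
There is a genuine gap, and it sits exactly at the step you yourself flag as ``where most of the technical effort lies.'' Your plan stands or falls on being able to replicate each edge-pair with polynomially many ``carefully shifted'' permutations of the non-endpoint vertices so that the aggregate of per-agent minima $\sum_i \min_{w\in W} u_i(w)$ becomes, up to a $W$-independent constant, a function of $|W\cap\{v_{i,1},v_{i,2}\}|$ alone. This exact cancellation is very unlikely to be achievable. Take cyclic shifts of a base order of the $M=m-2$ non-endpoints: for a committee whose positions in the base order have circular gaps $g_1,\ldots,g_K$ (with $\sum_j g_j = M$), the sum over all $M$ shifts of the maximum occupied position equals $M^2 - \tfrac{M}{2} - \tfrac{1}{2}\sum_j g_j^2$, so the averaged minimum utility depends on $\sum_j g_j^2$, i.e., on the geometric arrangement of $W$ inside the base order and not just on its intersection with the endpoints. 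This arrangement-dependent term varies by $\Theta(m)$ per copy (spread-out $W$ versus clustered $W$), while the covered-versus-uncovered signal your gadget produces is only of order $m/K$ per copy; both scale linearly in the number of copies, so replication amplifies the noise as fast as the signal. Exact cancellation for all $K$-subsets would essentially require a polynomial-size $K$-wise independent family of permutations, which is not available once $K$ grows. So the mechanism you propose does not close the gap it is meant to close.

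The paper resolves the same obstacle by a different and more robust idea: \emph{magnitude separation} rather than cancellation. Besides two top dummies $d_1,d_2$ (your set $D$ plays this role), it inserts a huge block $H$ of $x = 4n(m+2)(K+4)$ filler items into the \emph{middle} of each edge-agent's order: agent $e_i^1$ ranks $d_1 \succ d_2 \succ V\setminus\{v_{i,1},v_{i,2}\} \succ H \succ \{v_{i,1},v_{i,2}\}$ and $e_i^2$ ranks $d_1 \succ d_2 \succ \{v_{i,1},v_{i,2}\} \succ H \succ V\setminus\{v_{i,1},v_{i,2}\}$. Every position above $H$ is then worth at least $x$ and every position below it less than $m+2$, so Borda utilities behave like approval utilities with value roughly $x$, and the counting argument of Theorem~\ref{thm:k-1-best} survives: a covered pair retains $K+4$ above-$H$ slots under the $(K+1)$-best OWA, an uncovered pair at most $K+3$, and the gap of order $x$ swamps every arrangement-dependent lower-order term -- no cancellation needed. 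For the agents that force $d_1,d_2$ into the committee, the paper also pins the dropped item outright: $f_i^1$ ranks $d_2$ below all of $V$ (and $f_i^2$ ranks $d_1$ below all of $V$), so the minimum is always attained at a fixed dummy item, and mutually reversed orders over $V$ make the rest of their contribution exactly constant. Your sketch is missing both devices; note also that your auxiliary agents ``caring only about $D$'' still carry full Borda rankings over $V$ and would bias the vertex selection unless neutralized in this way. Without the filler block (or a genuinely new argument replacing it), the reduction as you describe it does not go through.
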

\begin{proof}
  As before, it is clear that the problem is in an $\np$ and we only
  show $\np$-hardness. We give a reduction from \textsc{VertexCover}
  (see the previous proof for an exact definition).  Let $I$ be an
  instance of the \textsc{VertexCover} problem that consists of
  undirected graph $G = (V,E)$, where $V = \{v_1, \dots, v_m\}$ and $E
  = \{e_1, \dots, e_n\}$, and positive integer $K$ (without loss of
  generality, we assume that $K \geq 3$).

  From $I$, we construct an instance $I'$ of
  $(K-1)$-best-\textsc{OWA-Winner} with Borda-based utilities as
  follows.  We set \[x = 4n(m + 2)(K+4)\] and we let the set of
  items be $A = V \cup \{d_1, d_2\} \cup H$, where $H = \{h_1,
  \ldots, h_x\}$ and $\{d_1, d_2\}$ are sets of dummy items that we need to build
  appropriate structure of the utility profile.
  To build the set of agents $N$, we set 
  \[y = (n(x+m+2)^2 + 1)\] and we set $N = N_E \cup N_1 \cup \ldots
  \cup N_y$, where $N_E = \{e_1^1, e_1^2, \dots, e_n^1, e_n^2\}$
  contains pairs of agents that correspond to the edges of $G$, and $N_1,
  \ldots, N_y$ contain pairs of agents needed for the
  construction. Specifically, every set $N_i$, $1 \leq i \leq y$,
  consists of two agents, $f_{i}^{1}$ and $f_{i}^{2}$. We refer to the
  agents in the set $N_1 \cup \cdots \cup N_y$ as the ``dummy
  agents.''

  We describe agents' utilities through their preference orders.
  The agents in the set $N_E$ have the following preference 
  orders. Let $e_i \in E$ be an edge of the graph that connects
  vertices $v_{i, 1}$ and $v_{i, 2}$. Agents $e_i^1$ and $e_i^2$ have
  preference orders:
  \begin{align*}
    e_i^1&:  d_1 \succ d_2 \succ V - \{v_{i,1},v_{i,2}\} \succ H \succ \{v_{i,1},v_{i,2}\},\\
    e_i^2&:  d_1 \succ d_2 \succ \{v_{i,1},v_{i,2}\} \succ H \succ V - \{v_{i,1},v_{i,2}\}.
  \end{align*}
  (When we put a set of items in a preference order, this means
  that this set can be replaced by these items in an arbitrary,
  easily computable, way.)  Each agent $f_{i}^1$, $1 \leq i \leq y$,
  has the same, fixed, preference order:
  \begin{align*}
    f_{i}^{1}: d_1 \succ v_1 \succ v_2 \dots \succ v_m \succ d_2 \succ
    h_1 \dots \succ h_x \textrm{.}
  \end{align*}
  Similarly, each agent $f_{i}^2$, $1 \leq i \leq y$, has preference
  order:
  \begin{align*}
    f_{i}^{2}: d_2 \succ v_m \succ v_{m-1} \dots \succ v_1 \succ d_1
    \succ h_1 \dots \succ h_x \textrm{.}
  \end{align*}

  Finally, in the instance $I'$ we seek a set of winners of size
  $K+2$. This means that we use $(K+1)$-best-OWA to compute the
  aggregated utility than an agent derives from a set of winners.

  This concludes the description of the reduction and it is clear that
  it is polynomial-time computable. Before we prove that it is
  correct, let us make several observations.
  Let $W$ be some optimal solution for $I'$. We claim that $W$ does
  not contain any of the items from $H$.  For the sake of
  contradiction, assume that some $h \in H$ belongs to $W$. Since
  $d_1$ and $d_2$ are ranked ahead of $h$ in every preference order
  (and in some preference orders $d_1$ is first and $d_2$ is second,
  so their utility cannot be ignored by the $(K+1)$-best-OWA), we
  infer that $d_1$ and $d_2$ must belong to $W$ as well (otherwise we
  would obtain higher utility by replacing $h$ with one of $d_1$ and
  $d_2$ in $W$). Let $v$ be some item from $V$ that does not
  belong to $W$. If we replace $h$ with $v$ in $W$ then the total
  utility of the dummy agents increases by at least $2y$.  Why is this
  so? Consider some pair $N_i$, $1 \leq i \leq y$ of dummy
  agents. item $h$ is either the lowest ranked member of $W$
  for both $f_i^1$ and $f_i^2$ or for neither. We consider these cases:
  \begin{itemize}
  \item
  \textbf{$\boldsymbol{h}$ is the lowest-ranked winner for both the
    agents in $\boldsymbol{N_i}$.} Replacing $h$ with $v$ means that
    either some other member $h'$ of $H \cap W$ becomes the lowest
    ranked winner for both $f_i^1$ and $f_i^2$, or $d_2$ becomes the
    lowest ranked winner for $f_i^1$ and $d_1$ becomes the lowest
    ranked winner for $f_i^2$. In either case, both $f_i^1$ and
    $f_i^2$ obtain utility higher by at least one from $v$ than from
    the item that became the new lowest-ranked winner. Thus,
    the total utility yielded by these two agents increases by at
    least two.
  \item
  \textbf{$\boldsymbol{h}$ is not the lowest-ranked winner for either
    agent in $\boldsymbol{N_i}$.} In this case, since both agents rank
    $v$ higher thank $h$ and replacing $h$ with $v$ does not change
    the lowest-ranked winner for either of the agents, their total
    utility also increases at least by two.
  \end{itemize}
  Since there are $y$ pairs of agents, the total utility increases by
  at least $2y$. Since the total utility of the agents from $N_{E}$ is
  lower than $2n(x+m+2)^2 < 2y$, we see that after the change the
  total utility of all the agents increases. Thus, we get a
  contradiction and we conclude that $W$ does not contain any of the
  agents from $H$.

  Next, we claim that both $d_1$ and $d_2$ belong to $W$. We give a
  detailed argument for $d_1$ only; the case of $d_2$ is analogous.
  For the sake of contradiction, assume that $d_1$ does not belong to
  $W$. Let $v_k$ be an item from $W$ such for each $v_j$, $j <
  k$, $v_j$ does not belong to $W$.  By our assumptions, for each
  agent $f_i^2$, $1 \leq i \leq y$, $v_k$ is the lowest-ranked winner
  from $W$. Thus, if we replace $v_k$ with $d_1$ in $W$, then the
  utility of each agent $f_i^2$ will not change, whereas the utility
  of each agent $f_i^1$ will increase.  Further, the utility of each
  agent from $N_E$ will increase. Thus, by replacing $v_k$ with $d_1$,
  we can increase the total utility of the agents. We reach a
  contradiction and we conclude that $d_1$ must have been a member of
  $W$. An analogous argument shows that $d_2$ belongs to $W$ as well.

  As the result of the above reasoning, we infer that each set of
  winners consists of $d_1$, $d_2$, and $K$ items from $V$.
  Whenever both $d_1$ and $d_2$ are included in the set of winners and
  neither item from $H$ is, the total utility of the dummy
  agents is the same, irrespective which items from $V$ are
  selected. With these observations, we now show that the answer for
  the input \textsc{VertexCover} instance is ``yes'' if and only if
  there is a size-$(K+2)$ winner set for $I'$ that for agents in the
  set $N_E$ yields total utility at least $nx(K+4)$.

  $(\Rightarrow)$ Let us assume that there exists a cover $C$ for $I$,
  that is, a set $C$ of $K$ vertices such that each edge is incident
  to at least one vertex from $C$. We show that winner set $W = C \cup
  \{d_1,d_2\}$ gives total utility of every two agents $e_i^1$ and
  $e_i^2$, $1 \leq i \leq n$, equal to at least $x(K+4)$.  Pick some
  arbitrary $i$, $1 \leq i \leq n$, and let $v_{i,1}$ and $v_{i,2}$ be
  the two vertices connected by edge $e_i$.  If both $v_{i, 1}$ and
  $v_{i, 2}$ belong to $C$, then $e_i^2$ obtains utility at least $x$
  for each item in $\{v_{i,1},v_{i,2},d_1,d_2\}$ (at least
  utility $4x$ in total). On the other hand, $e_i^1$ obtains utility
  at least $x$ for each item in $W-\{v_{i,1},v_{i,2}\}$.  This
  gives utility at least $Kx$. Altogether, both agents get utility at
  least $x(K+4)$.  If only one of the items $v_{i, 1}$ and
  $v_{i, 2}$, say $v_{i, 1}$, belongs to $C$, then $e_i^2$ obtains utility at least $3x$
  (at least $x$ for every item from $\{v_{i, 1}, d_1, d_2\}$),
  and $e_i^1$ obtains utility at least $(K+1)x$ (at
  least $2x$ from items $d_1$ and $d_2$, and at least $(K-1)x$
  from the $K-1$ members of $C$ that $e_i^1$ ranks on the top
  positions). Again, both agents get utility at least $x(K+4)$.  Thus
  the total utility of the agents in $N_E$ in the optimal solution
  must be at least $nx(K+4)$.

  $(\Leftarrow)$ Assume that $W$ is some optimal solution for $I'$ and
  that for the agents in $N_E$ it yields utility at least
  $nx(K+4)$. By previous discussion, we know that $W$ contains $d_1$,
  $d_2$, and $K$ members of $V$. We set $C = W \setminus \{d_1,d_2\}$.
  Let us fix some arbitrary $i$, $1 \leq i \leq n$. Let $v_{i,1}$ and
  $v_{i,2}$ be the two vertices connected by edge $e_i$.  We observe
  that under $W$, the total utility of agents $e_i^1$ and $e_i^2$ is
  at most $(x + m + 2)(K+4) + mK$. To see this, let $z$ be the number
  of items from $\{v_{i,1},v_{i,2}\}$ that are included in $C$
  and note that (1) for the upper bound we can disregard the OWA that
  we use, (2) there are $x+m+2$ items and so we can upper-bound
  the utility derived from each item by $x+m+2$, (3)
  altogether, the items from $W$ are ranked on at most $K+2-z$
  top-$(m+2)$ positions by $e_i^1$ (we upper-bound their total utility
  by $(K+2-z)(x+m+2)$) and at most $2+z$ top-$(m+2)$ positions by
  $e_i^2$ (we upper-bound their total utility by $(2+z)(x+m+2)$), and
  (4) the items from $W$ are ranked on at most $z$ bottom-$m$
  positions by $e_i^1$ (we upper-bound their total utility by $zm$)
  and on $K-z$ bottom-$m$ positions by $e_i^2$ (we upper-bound their
  total utility by $(K-z)m$).  When we sum up these upper bounds, we
  get $(x + m + 2)(K+4) + mK$.  However, for our argument we also need
  an upper bound on the total utility of $e_i^1$ and $e_i^2$ under the
  assumption that neither $v_{i,1}$ nor $v_{i,2}$ belongs to $C$. In
  this case, the upper bound is $(x + m + 2)(K+3) + mK$. We obtain it
  in the same way as the previous bound, except that we note that due
  to our $(K+1)$-best-OWA, the utility derived by $e_i^1$ can take
  into account at most $K+1$ agents from the top-$(m+2)$ positions of
  the preference order of $e_i^1$.

  


  Based on these upper bounds, we will now show that if the total
  utility derived from $W$ by the agents in $N_E$ is $nx(K+4)$, then
  $C$ must correspond to a cover of all the edges of $G$. To this end,
  consider a situation where there is at least one edge $e_i$ such
  that neither of the vertices that it connects belongs to $C$. By
  using our upper bounds, in this case the total utility of the agents
  from $N_E$ can be at most:
  \begin{align*}
    &(K+3)(x+m+2) + (n-1)(K+4)(x+m+2) +nmK \\
    &= (x+m+2)( K+3 + (n-1)(K+4)) + nmK\\
    &= (x+m+2)( n(K+4) - 1) +nmK\\
    &= xn(K+4) + n(m+2)(K+4) -(x+m+2) +nmK\\
    &= xn(K+4) + 0.25x - (x+m+2) +nmK\\
    &< xn(K+4)
  \end{align*}
  (The last two lines follow directly by the definition of $x$.) So,
  from the assumption that $C$ is not a solution for $I$, we obtain
  that the total utility of the agents in $N_E$ must be lower than
  $nx(K+4)$, which contradicts our assumption. Thus $C$ is a correct
  solution for $I$ and, so, $I$ is a yes-instance of
  \textsc{VertexCover}. This completes the proof.
%
\end{proof}

\end{document}